\newcommand{\MC}[1]{\mathcal{#1}} 
\newcommand{\TT}[1]{\texttt{#1}}
\newcommand{\comment}[1]{}
\newcounter{prblm_count}
\newcounter{thm_count}
\newcounter{claim_count}
\newcounter{assump_count}
\newcounter{defn_count}
\newcounter{rmrk_count}
\newcounter{lem_count}
\newtheorem{problem}[prblm_count]{\bf Problem}
\newtheorem{theorem}[thm_count]{\bf Theorem} 
\newtheorem{lemma}[lem_count]{\bf Lemma} 
\newtheorem{corollary}[thm_count]{\bf Corollary} 
\newtheorem{claim}[claim_count]{\bf Claim} 
\newtheorem{assumption}[assump_count]{\bf Assumption}
\newtheorem{definition}[defn_count]{\bf Definition}
\newtheorem{remark}[rmrk_count]{\bf Remark}
\title{\LARGE \bf
Private Learning on Networks: Part II \footnote{\\\textbf{Current Version}: 5 November 2017. \textbf{First Version}: 27 March 2017. \\ \textbf{Comments}: Privacy-Convergence Trade-off added. New simulation results added. \\ \textbf{Acknowledgment}: This research is supported in part by National Science Foundation awards 1421918 and 1610543, and Toyota InfoTechnology Center. Any opinions, findings, and conclusions or recommendations expressed here are those of the authors and do not necessarily reflect the views of the funding agencies or the U.S. government.}
}
\author{Shripad Gade$^{1}$ and Nitin H. Vaidya$^{2}$
\thanks{$^{1}$ PhD candidate in Electrical and Computer Engineering, University of Illinois Urbana-Champaign.
        {\tt\small gade3@illinois.edu}}%
\thanks{$^{2}$ Professor in Electrical and Computer Engineering and Coordinated Science Laboratory,
		University of Illinois Urbana-Champaign, 1308 W Main St., Urbana, 61801.
        {\tt\small nhv@illinois.edu}}%
}
\begin{document} 
\maketitle

\begin{abstract} 
This paper considers a distributed multi-agent optimization problem, with the global objective consisting of the sum of local objective functions of the agents. The
agents solve the optimization problem using local computation and communication between adjacent agents in the network. We present two randomized iterative algorithms for distributed optimization. To improve privacy, our algorithms add ``structured'' randomization to the information exchanged between the agents. We prove deterministic correctness (in every execution) of the proposed algorithms despite the information being perturbed by noise with non-zero mean. We prove that a special case of a proposed algorithm (called function sharing) preserves privacy of individual polynomial objective functions under a suitable connectivity condition on the network topology.
\end{abstract} 

\section{Introduction}
Distributed optimization has received a lot of attention in the past couple of decades. It involves a system of networked agents that optimize a global objective function $f(x) \triangleq \sum_i f_i(x)$, where $f_i(x)$ is the local objective function of agent $i$. Each agent is initially only aware of its own local objective function. The agents solve the global optimization problem in an iterative manner. Each agent maintains ``a state estimate'', which it shares with its neighbors
in each iteration, and then updates its state estimate using the information received from the neighbors. A distributed optimization algorithm must ensure 
that the state estimates maintained by the agents converge to an optimum of the global cost function.

Emergence of networked systems has led to the application of distributed optimization framework in several interesting contexts, such as machine learning, resource allocation and scheduling, and robotics \cite{xiao2006optimal,rabbat2004distributed}.  
In a distributed machine learning scenario, partitions of the dataset are stored among several different agents (such as servers or mobile devices \cite{konevcny2015federated}), and these agents solve a distributed optimization problem in order to collaboratively learn the most appropriate ``model parameters''.
In this case, $f_i(x)$ at agent $i$ may be a {\em loss function} computed over
the dataset stored at agent $i$, for a given choice $x$ of the {\em model parameters} (i.e., here $x$ denotes a vector of model parameters).

Distributed optimization can reduce communication requirements of learning, since the agents communicate information that is often much smaller in size than each agent's local dataset that characterizes 
its local objective function. 
The scalability of distributed optimization algorithms, and their applicability
for geo-distributed datasets, have made them a desirable choice for distributed
learning \cite{kraska2013mlbase,NIPS2014_5597,cano2016towards}. 

Distributed optimization algorithms rely on exchange of information between agents, making them vulnerable to privacy violations. In particular, in case of distributed learning, the local objective function
of each agent is derived using a local dataset known only to that agent. Through the information
exchanged between agents, information about an agent's local dataset may become known to other agents.
Therefore, privacy concerns have emerged as a critical challenge in distributed optimization \cite{shokri2015privacy,pasqualetti2012cyber}.  

In this paper we present two algorithms that use ``structured randomization'' of state
estimates shared between agents. In particular, our structured randomization approach 
obfuscates the state estimates by adding {\em correlated random noise}. Introduction of random noise into the state estimates allows the agents to improve privacy. {\em Correlation} (as elaborated later) helps
to ensure that our algorithms asymptotically converge to a true optimum, despite perturbation of state
estimates with non-zero mean noise. We also prove strong privacy guarantees for a special case of our algorithm for a distributed polynomial optimization problem. Contributions of this paper are as follows:

\begin{itemize}
\item We present {\em Randomized State Sharing} (\TT{RSS}) algorithms for distributed optimization that use structured randomization. Our first algorithm, named \TT{RSS-NB}, introduces noise
that is {\em Network Balanced} (NB), as elaborated later, and the second algorithm, \TT{RSS-LB},
introduces {\em Locally Balanced} (LB) noise. We prove {\em deterministic} convergence (in every execution) to an optimum, despite the use of randomization.

\item We consider a special case of \TT{RSS-NB} (called ``Function Sharing'' or \TT{FS}), where the random perturbations added to local iterates are state-dependent. State-dependent random perturbations simulate the obfuscation of objective function using a {\em noise function}. We argue that the \TT{FS} algorithm achieves a strong notion of privacy.

\item We use \TT{RSS-NB} and \TT{RSS-LB} algorithms to train a deep neural network for digit recognition using the MNIST dataset, and to train a logistic regression model for document classification of the Reuters dataset. The experiments validate our theoretical results and we show that we can obtain high accuracy models, despite introducing randomization to improve privacy.

\end{itemize}%

\noindent
\textbf{Related Work: }
Many {distributed optimization} algorithms have appeared in the literature over the past decade, including Sub-gradient Descent \cite{nedic2009distributed,ram2010distributed}, Dual Averaging \cite{duchi2012dual}, Incremental Algorithms \cite{rabbat2004distributed,  ram2009incremental}, Accelerated Gradient \cite{jakovetic2014fast}, ADMM \cite{shi2014linear} and EXTRA \cite{shi2015extra}. Solutions to distributed optimization of convex functions have been proposed for myriad scenarios involving directed graphs \cite{Nedi2015, xi2015linear}, communication link failures and losses \cite{hadjicostis2016robust}, asynchronous communication models \cite{nedic2011asynchronous,liu2015asynchronous,wei20131}, and stochastic objective functions \cite{agarwal2011distributed,ram2010distributed}. 

Privacy-preserving methods for optimization and learning can be broadly classified into cryptographic approaches and non-cryptographic approaches \cite{weeraddana2013per}. Cryptography-based privacy preserving optimization algorithms \cite{pinkas2002cryptographic,hong2016privacy} tend to be computationally expensive. Non-cryptographic approaches have gained popularity in recent years. \emph{$\epsilon$-differential privacy} is a probabilistic technique that involves use of randomized perturbations \cite{huang2015differentially,nozari2015differentially,abadi2016deep,Han2016,hamm2016learning} to minimize the probability of uncovering of specific records from databases. Differential privacy methods, however, suffer from a fundamental trade-off between the accuracy of the solution and the privacy margin (parameter $\epsilon$) \cite{Han2016}. \emph{Transformation} is another non-cryptographic technique that involves converting a given optimization problem into a new problem via algebraic transformations such that the solution of the new problem is the same as the solution of the old problem \cite{mangasarian2012privacy,wang2011secure}. This enables agents to conceal private data effectively while the quality of solution is preserved. Transformation approaches in literature, however, cater only to a relatively small set of problem classes. 

\section{Notation and Problem Formulation} \label{Sec:Notation}

We consider a {\em synchronous} system consisting of $n$ agents connected using a network of undirected (i.e., bidirectional) communication links. The communication links are always reliable. The set of agents is denoted by $\mathcal{V}$; thus, $|\mathcal{V}| = n$.

Although all the links are undirected, for convenience, we represent each undirected link using 
a pair of directed edges. Define $\mathcal{E}$ as a set of directed edges corresponding to the communication links in the network:\\ $\mathcal{E} = \{(u,v) : u,v \in \mathcal{V} \text{ and } u \text{ communicates with } v \}$.\\ Thus, the communication network is represented using a graph $\mathcal{G} = (\mathcal{V}, \mathcal{E})$. The neighbor set of agent $v$ is defined as the set of agents that are connected to agent $v$. By convention, $\MC{N}_v$ includes $v$ itself, i.e. $\MC{N}_v = \{u~|(u,v) \in \MC{E}\} \cup \{v\}$. 

We assume that the communication graph $\MC{G}$ is strongly connected. We impose additional connectivity constraint later when analyzing privacy in Section~\ref{Sec:Results}.

The focus of this paper is on iterative algorithms for distributed optimization.
Each agent maintains a {\em state estimate}, which is updated in each iteration of the algorithm. The state estimate at agent $i$ at the start of iteration $k$ is denoted by $x^i_{k}$.
We assume that argument $x$ of $f_i(x)$ is constrained to be in a feasible set $\mathcal{X}\subset \mathbb{R}^D$.
The state estimate of each agent is initialized to an arbitrary vector in $\mathcal{X}$.
For $z\in \mathbb{R}^D$, we define projection operator $\MC{P}_\MC{X}$ as, $$\MC{P_X}(z) ~=~ \arg\min_{y\in\MC{X}}~\|z - y\|.$$
Problem~\ref{Prob:DistLearn} below formally defines the goal of distributed optimization.

\begin{problem} \label{Prob:DistLearn}
Given local objective function $f_i(x)$ at each agent $i\in\mathcal{V}$,
and feasible set $\mathcal{X} \subset \mathbb{R}^D$ (i.e., the set of feasible $x$),
design a distributed algorithm such that, for some
\[ x^* \in {\arg\min}_{x \in \mathcal{X}} \, \sum_{i=1}^n f_i(x),\]
we have
\[ \lim_{k\rightarrow\infty}~ x^i_k = x^*, ~~~~\forall i\in \mathcal{V}.\]
\end{problem}

\noindent
Let $f^*$ denote the optimal value of $f(x)$, i.e. $$f^* = \inf_{x \in \MC{X}} f(x).$$ Let $\MC{X}^*$ denote the set of all optima of $f(x)$, i.e., $$\MC{X}^* = \{x~|~x \in \MC{X}, f(x) = f^*\}.$$ Let $\|.\|$ denote the Euclidean norm. For any matrix $A$, $\|A\|_2 = \sqrt{\lambda_{\max}(A^\dagger A)}$, where $A^\dagger$ denotes conjugate transpose of matrix $A$, and $\lambda_{\max}$ is the maximum eigenvalue.
We make the following assumptions.
\begin{assumption}[Objective Function and Feasible Set] \label{Asmp:FunSet} \leavevmode
\begin{enumerate} 
\item The feasible set, $\MC{X}$, is a non-empty, convex, and compact subset of $\mathbb{R}^D$. \label{Asmp:Set}
\item  The objective function $f_i : \MC{X} \rightarrow \mathbb{R},$ $\forall i\in\mathcal{V}$, is a convex function. Thus, $f(x) := \sum_{i=1}^n f_i(x)$ is also a convex function. \label{Asmp:Function} 
\item The set of optima, $\MC{X}^*$, is non-empty and bounded.
\end{enumerate}
\end{assumption}

\begin{assumption} [Gradient Bound and Lipschitzness] \label{Asmp:GradientCond} \leavevmode
\begin{enumerate} 
\item The gradients are norm-bounded, i.e., $\exists$ $L > 0$ such that, $\| \nabla f_i(x) \| \leq L,$ $\forall \ x \in \MC{X}$ and $\forall \ i\in\mathcal{V}$.\label{Asmp:SubBound}
\item The gradients are Lipschitz continuous, i.e., $\exists$ $N > 0$ such that, $\|\nabla f_i(x) - \nabla f_i(y) \| \leq N \| x -y\|$, $\forall \ x,y \in \MC{X}$ and $\forall \ i\in\mathcal{V}$. \label{Asmp:GradLip}
\end{enumerate}%
\end{assumption}%

\section{Distributed Algorithms} \label{Sec:Algo}
This section first presents an iterative Distributed Gradient Descent algorithm (\TT{DGD}) from prior literature \cite{nedic2009distributed}. Later we modify \TT{DGD} to improve privacy. In particular, we present two algorithms based on {\em Randomized State Sharing} (\TT{RSS}).

\subsection{\TT{DGD} Algorithm \cite{nedic2009distributed}}

Iterative distributed algorithms such as Distributed Gradient Descent (\TT{DGD}) use a combination of consensus dynamics and local gradient descent to distributedly find a minimizer of $f(x)$. More precisely, in each iteration, each agent receives state estimates from its neighbors and performs a consensus step followed by descent along the direction of the gradient of its
local objective function. 

The pseudo-code for the DGD algorithm is presented below as Algorithm~\ref{Algo:DSG}.
The algorithm presents the steps performed by any agent $j\in\MC{V}$.
The different agents perform their steps in parallel. Lines 4-5 are intentionally left blank in Algorithm~\ref{Algo:DSG}, to facilitate comparison with other algorithms presented later in the paper.

As shown on Lines 6 and 7 of Algorithm~\ref{Algo:DSG}, in the $k$-th iteration, each agent $j$ first sends its current
estimate $x_k^j$ to the neighbors, and then receives the estimates from all its
neighbors.  
Using these estimates, as shown on line 8, each agent performs a 
\textit{consensus step} (also called \textit{information fusion}), which
involves computing a convex combination of the state estimates. The resulting convex
combination is named $v_k^j$. Matrix $B_k$ used in this step is a doubly stochastic matrix \cite{nedic2009distributed}, which
can be constructed by the agents using previously proposed techniques,\footnote{$B_k$ has the property that
entries $B_k[i,j]$ and $B_k[j,i]$ are non-zero if and only if $i\in \MC{N}_j$. Recall that the
underlying network is assumed to consist of bidirectional links. Therefore,
$i\in\MC{N}_j$ implies $j\in\MC{N}_i$.} such as Metropolis weights \cite{xiao2006distributed}. The Metropolis weights are:
\begin{align*}
B_k[i,j] = \begin{cases}
1/(1+\max(|\MC{N}_i|,|\MC{N}_j|)) & \text{if } j \in \MC{N}_i\\
1 - \sum_{l \neq i} B_k[i,l] & \text{if } i = j \\
0 & \text{otherwise}
\end{cases}
\end{align*}


Agent $j$ performs \textit{projected gradient descent} step (Line 9, Algorithm~\ref{Algo:DSG}) involving descent from $v_k^j$ along the local objective function's gradient $\nabla f_j(v^j_k)$, followed by projection onto the feasible set $\MC{X}$. This step yields the new state estimate at agent $j$, namely, $x_{k+1}^j$. $\alpha_k$ used on line 9 is called the {\em step size}. The sequence $\alpha_k$, $k\geq 1$, is a non-increasing
sequence such that $\sum_{k=1}^\infty\alpha_k=\infty$ and $\sum_{k=1}^\infty \alpha_k^2<\infty$.


Prior work \cite{nedic2009distributed} has shown that \TT{DGD} Algorithm \ref{Algo:DSG} 
solves Problem 1, that is, the agents' state estimates asymptotically reach consensus on an optimum
in $\MC{X}^*$.

\TT{DGD} is not designed to be privacy-preserving and an adversary may learn
information about an agent's local objective function by observing information exchange
between the agents. We now introduce algorithms that {\em perturb} the state estimates
before the estimates are shared between the agents. The perturbations are intended to hide the true state estimate values and improve privacy. 

\setcounter{algorithm}{-1}
	\begin{algorithm}[t]
	\caption{\TT{DGD} Algorithm \cite{nedic2009distributed}}
	\begin{algorithmic}[1]
	\STATE Input: $\alpha_k~(k\geq 1)$, MAX\_ITER.\\
		Initialization: $x_1^j\in\MC{X},~\forall j\in\MC{V}$. 
	\STATE The steps performed by each agent $j\in\MC{V}$:
	\FOR {$k$ = 1 to MAX\_ITER} 
		\STATE
		\STATE
		\STATE Send estimate $x^j_{k}$ to each agent $i\in \MC{N}_j$
		\STATE Receive estimate $x^i_{k}$ from each agent $i\in\mathcal{N}_j$ 
        \STATE Information Fusion: \\ $ \qquad v^j_{k} = \sum_{i \in \mathcal{N}_j} B_k[j,i] x^i_{k}$  
       	\STATE Projected Gradient Descent: \\ $\qquad x^j_{k+1} = \mathcal{P}_\mathcal{X} \left[ v^j_{k} - \alpha_k \nabla {f}_j(v^j_{k})\right]$
	\ENDFOR
\end{algorithmic}
\label{Algo:DSG}
\end{algorithm}

\subsection{\TT{RSS-NB} Algorithm}
The first proposed algorithm, named {\em Randomized State Sharing$-$Network Balanced} (\TT{RSS-NB})
is a modified version of Algorithm~\ref{Algo:DSG}.
The pseudo-code for algorithm \TT{RSS-NB} is presented as Algorithm~\ref{Algo:PPDOP2} below.

Random variables $s_k^{i,j}$ are used to compute the perturbations.
We will discuss the procedure for computing the perturbation after describing the rest
of the algorithm.
As we will discuss in more detail later, on Line 4
of Algorithm~\ref{Algo:PPDOP2}, agent $j$ computes perturbation
$d_k^j$ to be used in iteration $k$.
On Line 5, the perturbation is weighted by step size $\alpha_k$ and added to state estimate
$x_k^j$ to obtain the perturbed state estimate $w_k^j$ of agent $j$. That is,
\begin{align}
w_k^j &= x_k^j + \alpha_k d^{j}_k.\label{Eq:Perturb}
\end{align}
$\alpha_k$ here is the step size, which is also used in the information fusion step in Line 8.
Properties satisfied by $\alpha_k$ are identical to those in the DGD Algorithm \ref{Algo:DSG}.

Having computed the perturbed estimate $w_k^j$, 
each agent then sends the perturbed estimate $w_k^j$ to its neighbors (Line 6) and receives the perturbed
estimates of the neighbors (Line 7, Algorithm~\ref{Algo:PPDOP2}). Similar to Algorithm \ref{Algo:DSG},
Steps 8 and 9 of the \TT{RSS-NB} algorithm also perform information fusion using
a doubly stochastic matrix $B_k$, followed
by projected gradient descent. 

Now we describe how the perturbation $d_k^j\in \mathbb{R}^D$ is computed on Line 4
of Algorithm~\ref{Algo:PPDOP2}.
The strategy for computing the perturbation is
motivated by a secure distributed averaging algorithm in \cite{abbe2012privacy}.
In iteration $k$, the computation of the perturbation $d_k^j$ at agent $j$
uses variables $s_k^{j,i}$ and $s_k^{i,j}$, $i\in\MC{N}_j$, which take
values in $\mathbb{R}^D$. As shown on Line 4, the perturbation $d_k^j$ is computed
as follows.
\begin{align}
    d^j_k = \sum_{i \in \MC{N}_j} s^{i,j}_k - \sum_{i \in \MC{N}_j} s^{j,i}_k. \label{Eq:SMCPerturbationGen}
\end{align}
Initially, as shown on Line 1, $s_1^{i,j}=s_1^{j,i}$ is the 0 vector (i.e., all elements 0)
for all $i\in\MC{N}_j$. Thus, the perturbation $d_1^j$ computed in iteration 1 is also the 0 vector.
As shown on Line 6 of Algorithm~\ref{Algo:PPDOP2},
in iteration $k\geq 1$, agent $j$ sends to each neighbor $i$ a random vector $s_{k+1}^{j,i}$
and then (on Line 7) it receives random vector $s_{k+1}^{i,j}$ from each neighbor $i$.
These random vectors are then used to compute the random
perturbations in Line 4 of the next iteration.
Due to the manner in which $d_k^j$ is computed, we obtain the following invariant for all iterations $k\geq 1$.
\begin{align}
\sum_{j\in\MC{V}}\,d_k^j~=~0. \label{Eq:RSSNB-NoiseChar}
\end{align}
The distribution from which the random vectors $s_k^{j,i}$ are drawn affects the privacy achieved with this algorithm.
In our analysis, we will assume that
$\|s^{j,i}_k\| \leq \Delta/(2n)$, for all $i,j,k$, where constant $\Delta$ is a parameter of the algorithm, and $n=|\MC{V}|$ is the number of agents.
Procedure for the Computation of perturbation $d_k^j$, as shown in (\ref{Eq:SMCPerturbationGen}), then
implies that $\|d_k^j\|\leq \Delta$.
As elaborated later, there is a trade-off between privacy and convergence rate of the algorithm, with larger
$\Delta$ resulting in slower convergence rate.

%

\begin{figure*}[h]
\begin{minipage}[t]{3.2in}
  \vspace{0pt}      
    \begin{algorithm}[H]
    \caption{\TT{RSS-NB} Algorithm}
    \begin{algorithmic}[1]
    
    \STATE Input: $\alpha_k~(k\geq 1)$, MAX\_ITER.\\
	 Initialization: $x^j_1\in\MC{X},~\forall j\in\MC{V}$ and\\
	\hspace*{0.3in} $s_1^{i,j} =s_1^{j,i}= 0,~\forall j\in\MC{V}, i\in\MC{N}_j$.
    \STATE The steps performed by each agent $j\in\MC{V}$:
    \FOR{$k$ = 1 to MAX\_ITER} 
        \STATE Compute perturbation $d^{j}_k$: \\ $\quad$ $d^j_k = \sum_{i \in \MC{N}_j} s^{i,j}_k - \sum_{i \in \MC{N}_j} s^{j,i}_k$
        \STATE Compute perturbed state $w_k^j$: \\  $\quad w_k^j = x_k^j + \alpha_k d^{j}_k $
        \STATE Send $w_k^j$ and a random vector $s^{j,i}_{k+1}$ to $i\in \MC{N}_j$.
        \STATE Receive $w_k^i$ and $s^{i,j}_{k+1}$ from $i\in \MC{N}_j$.
        \STATE Information Fusion:\\ $\qquad v^j_{k} = \sum_{i \in \mathcal{N}_j} B_k[j,i] w^i_{k}$  
        \STATE Projected Gradient Descent: \\  $\qquad x^j_{k+1} = \mathcal{P}_\mathcal{X} \left[ v^j_{k} - \alpha_k \nabla f_j(v^j_{k})\right]$
    \ENDFOR 
    \end{algorithmic}
    \label{Algo:PPDOP2}
    \end{algorithm}
\end{minipage}%
\hfill
\begin{minipage}[t]{3.2in}
  \vspace{0pt}
  \begin{algorithm}[H]
    \caption{\TT{RSS-LB} Algorithm}
    \begin{algorithmic}[1]
    \STATE Input: $\alpha_k~(k\geq 1)$, MAX\_ITER.\\
		Initialization: $x_0^j\in\MC{X},~\forall j\in\MC{V}$.\\
		$\qquad$ 
    \STATE The steps performed by each agent $j\in\MC{V}$:
    \FOR{$k$ = 1 to MAX\_ITER} 
        \STATE Choose random vector $d^{j,i}_k$, $i\in\MC{N}_j$, such that, \\ $\qquad$ $\sum_{i\in\MC{N}_j}B_k[i,j] d^{j,i}_k = 0. $
        \STATE Compute perturbed state $w_k^{j,i}$: \\  $\qquad w_k^{j,i} = x_k^j + \alpha_k d^{j,i}_k $
        \STATE Send $w_k^{j,i}$ to each $i\in \MC{N}_j$.
        \STATE Receive $w_k^{i,j}$ from each $i\in \MC{N}_j$.
        \STATE Information Fusion: $\qquad \qquad \qquad \qquad $ \\ $\qquad v^j_{k} = \sum_{i \in \mathcal{N}_j} B_k[j,i] w^{i,j}_{k}$  
        \STATE Projected Gradient Descent: \\  $\qquad x^j_{k+1} = \mathcal{P}_\mathcal{X} \left[ v^j_{k} - \alpha_k \nabla f_j(v^j_{k})\right]$
    \ENDFOR 
    \end{algorithmic}
    \label{Algo:PPDOP3}
    \end{algorithm}
\end{minipage}
\end{figure*}

\subsection{\TT{RSS-LB} Algorithm}

Our second algorithm is called {\em Randomized State Sharing$-$Locally Balanced} algorithm (\TT{RSS-LB}). Recall that in \TT{RSS-NB} Algorithm \ref{Algo:PPDOP2}, each agent shares an identical perturbed estimate
with its neighbors. Instead, in \TT{RSS-LB}, each agent shares potentially distinct perturbed state
estimates with different neighbors.
The pseudo-code for \TT{RSS-NB} is presented as Algorithm~\ref{Algo:PPDOP3}.

On Line 4 of Algorithm~\ref{Algo:PPDOP3}, in iteration $k$, agent $j$ chooses a noise vector
$d_k^{j,i}\in\mathbb{R}^D$ for each $i\in\MC{N}_j$ such that
$d^{j,j}_k = 0$, $\|d^{j,i}_k\|\leq \Delta$, where constant $\Delta$ is a parameter of the algorithm,
and
\begin{eqnarray}
\sum_{i \in \MC{N}_j}B_k[i,j]d^{j,i}_k = 0. \label{eq:lb}
\end{eqnarray}
For convenience, for $i\not\in \MC{N}_j$, define $d^{j,i}_k = 0$, that is, the perturbations for non-neighbors are zero. Here, matrix $B_k$ is identical to that used in the information fusion step in Line 8.
Observe that each agent $j$ uses $B_k[j,i]$, $i\in\MC{N}_j$, in the information fusion step,
and $B_k[i,j]$, $i\in\MC{N}_j$, in the computation of above noise vectors. In both cases, the matrix elements
used by agent $j$ correspond only to its neighbors in the network.
Since the random vectors generated by each agent $j$ are {\em locally balanced}, as per (\ref{eq:lb}) above,
the agents do not need to cooperate in generating the perturbations (unlike the \TT{RSS-NB} algorithm).

Using $d_k^{j,i}$ as the perturbation for neighbor $i$,
in Line 5 of Algorithm~\ref{Algo:PPDOP3},
agent $j$ computes the perturbed state estimate $w^{j,i}_k$
to be sent to neighbor $i$, as follows.
\begin{align}
w^{j,i}_k = x^j_k + \alpha_k d^{j,i}_k. \label{Eq:Perturb-RSSLB}
\end{align}
$\alpha_k$ here is the step size, which is also used in the information fusion step in Line 8.
Properties satisfied by $\alpha_k$ are identical to thos in the DGD Algorithm \ref{Algo:DSG}.

Next, in Lines 6 and 7 of Algorithm \ref{Algo:PPDOP3}, agent $j$ sends $w^{j,i}_k$ to each neighbor $i$ and receives perturbed estimate $w^{i,j}_k$ from each neighbor $i$. Agent $j$ performs the information fusion step in Line 8 followed by projected gradient descent in Line 9, similar to the previous algorithms.

\comment{+++++++++++++++++++++++++++++++++++++++++++++++++++++++++++++++++++++
++++++++++++ redraw separate figures for the two algorithms, or just show figure for one algorithm ++++++++++
Figure~\ref{Fig:Schematic} shows perturbed state exchange for \TT{RSS-LB}. Per definition, $w^{j,j}_k = x^j_k$. This implies that perturbed estimates are shared with neighbors and never used by the agent himself (unlike \TT{RSS-NB} algorithm). 
++++++++++++++++ rewrite this paragraph ++++++++++

\begin{figure}[!t]
\centering
\includegraphics[width=0.7\columnwidth]{NIPS_Schematic_New-snip}
\caption{\TT{RSS-NB} and \TT{RSS-LB} perturbations.}
\label{Fig:Schematic}
\end{figure}
+++++++++++++++++++++++++++++++++++++}

\subsection{\TT{FS} Algorithm} \label{Sec:FSAlgo}

The {\em function sharing} algorithm \TT{FS} presented in this section can be viewed as a special case
of the \TT{RSS-NB} algorithm. In this special case of \TT{RSS-NB}, the random vector $s_k^{j,i}$ computed by agent $j$
 is a function of its state estimate $x_k^j$, where the function is independent of $k$.
Thus, the function sharing algorithm uses {\em state-dependent} random vectors.
The pseudo-code for function sharing is presented in Algorithm \ref{Algo:PPDOP1} below
using random functions, instead of state-dependent random vectors. However, the behavior
of Algorithm \ref{Algo:PPDOP1} is equivalent to using state-dependent noise in \TT{RSS-NB}.

In Line 1 of Algorithm \ref{Algo:PPDOP1}, each agent $j$ selects a function $s^{j,i}(x)$ to be sent
to neighbor $i$ in Line 2. These functions are exchanged by the agents. Agent $j$ then uses
them in Line 3 to compute the noise function, which is, in turn, used to compute an obfuscated
local objective function $\widehat{f}_j(x)$.   
Finally, the agents perform DGD Algorithm \ref{Algo:DSG} with each agent $j$
using $\widehat{f}_j(x)$ as its objective function.
We assume that $s^{j,i}(x)$ have bounded and Lipschitz gradients. This implies the obfuscated functions $\widehat{f}_j(x)$ satisfy assumption~A2.
The obfuscated objective function $\widehat{f}_j(x)$ is not necessarily convex. Despite this,
the correctness of this algorithm can be proved using the following observations:
\begin{eqnarray}
\sum_{j\in\MC{V}}~ p_j(x) = 0, \mbox{~~~~~~and}\\
\sum_{j\in\MC{V}} \widehat{f}_j(x) = \sum_{j\in\MC{V}} f_j(x) 
~=~ f(x)
\end{eqnarray}

\begin{algorithm}[t!]
\caption{Function Sharing (\TT{FS}) Algorithm}
\begin{algorithmic}[1]
\STATE The steps performed by each agent $j \in \MC{V}$
\STATE Select a function $s^{j,i}(x)$, $\forall i\in\MC{N}_j$.
\STATE Agent $j$ sends function $s^{j,i}(x)$ to each $i\in\MC{N}_j$.
\STATE Agent $j$ computes a noise function $p_j(x)$ and then the obfuscated local objective
function $\widehat{f}_j(x)$ as follows:
\begin{align}
    p_j(x) &= \sum_{i \in \MC{N}_j} s^{i,j}(x) - \sum_{i \in \MC{N}_j} s^{j,i}(x). \label{Eq:SMCFunctionGen} \\
	\widehat{f}_j(x) &\triangleq f_j(x) + p_j(x) \label{Eq:F_New}
\end{align}%
\vspace{-0.1in}
\STATE Perform \TT{DGD} (Algorithm \ref{Algo:DSG}) wherein agent $j$ uses $\widehat{f}_j(x)$ as its local objective function instead of $f_j(x)$.
\end{algorithmic}
\label{Algo:PPDOP1}
\end{algorithm}%

Effectively, Algorithm~\ref{Algo:PPDOP1} minimizes a {\em convex sum of non-convex functions}. Distributed optimization of a convex sum
of non-convex functions, albeit with an additional assumption of {\em strong convexity} of $f(x)$, was also addressed
in \cite{kvaternik2011lyapunov}, wherein the correctness is shown using Lyapunov stability arguments.
However, \cite{kvaternik2011lyapunov} also does not address how privacy may be achieved.
Additionally, our approach for improving privacy is more general than function sharing, as exemplified by algorithms \TT{RSS-NB} and \TT{RSS-LB}.

\section{Main Results} \label{Sec:Results}

The specification of Problem \ref{Prob:DistLearn} in Section \ref{Sec:Notation} identifies the requirement
for correctness of the proposed algorithms. 
The proof of Theorem \ref{Th:ConvPPDOP2} below is outlined in Section \ref{Sec:Analysis} and presented in detail in Appendix~\ref{Sec:ProofTh1}.

%
\begin{theorem} \label{Th:ConvPPDOP2}
Under Assumptions~\ref{Asmp:FunSet} and \ref{Asmp:GradientCond},
\TT{RSS-NB} Algorithm~\ref{Algo:PPDOP2}, \TT{RSS-LB} Algorithm~\ref{Algo:PPDOP3} and \TT{FS} Algorithm~\ref{Algo:PPDOP1} solve distributed optimization Problem \ref{Prob:DistLearn}. 
\end{theorem}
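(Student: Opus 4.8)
The plan is to treat all three algorithms uniformly by observing that each perturbation scheme is \emph{balanced} in a way that leaves the network-wide average unchanged, thereby reducing the problem to the known convergence analysis of \TT{DGD} (Algorithm~\ref{Algo:DSG}) with vanishing additive errors. First I would introduce the running average $\bar{x}_k = \frac{1}{n}\sum_{j\in\MC{V}} x_k^j$ and compute how it evolves. For \TT{RSS-NB}, using double stochasticity of $B_k$ and the invariant (\ref{Eq:RSSNB-NoiseChar}), the perturbation contribution to $\frac{1}{n}\sum_j v_k^j$ is $\frac{\alpha_k}{n}\sum_j d_k^j = 0$; for \TT{RSS-LB}, the same cancellation follows from (\ref{eq:lb}) after swapping the summation order, since $\sum_j B_k[j,i]\,d_k^{i,j}$ is precisely agent $i$'s locally balanced sum. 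Thus in both cases $\frac{1}{n}\sum_j v_k^j = \bar{x}_k$, exactly as in unperturbed \TT{DGD}. For \TT{FS}, the identity $\sum_j \widehat{f}_j(x) = f(x)$ means that at a common point the aggregate descent direction $\frac{1}{n}\sum_j \nabla\widehat{f}_j$ equals $\frac{1}{n}\nabla f$, so the noise functions leave the average dynamics unaffected.

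Next I would establish a \emph{consensus} (disagreement) estimate. Writing the projected descent step as $x_{k+1}^j = v_k^j - \alpha_k \nabla f_j(v_k^j) + \alpha_k e_k^j$, where $e_k^j$ collects the projection residual, I would use the spectral gap of the doubly stochastic matrix $B_k$ (guaranteed by strong connectivity) to show that the disagreement $\|x_k^j - \bar{x}_k\|$ is driven only by the per-iteration perturbation $\alpha_k d_k^j$ (bounded by $\alpha_k\Delta$) and the $\alpha_k$-scaled gradient (bounded by $\alpha_k L$ via Assumption~\ref{Asmp:GradientCond}). Because these inputs are $O(\alpha_k)$ and the mixing contracts geometrically, a summation-by-parts argument gives $\|x_k^j - \bar{x}_k\| \to 0$ and, more usefully, $\sum_k \alpha_k \|x_k^j - \bar{x}_k\| < \infty$.

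Then I would analyze the average dynamics directly. Expanding $\|\bar{x}_{k+1} - x^*\|^2$ for $x^* \in \MC{X}^*$ and using convexity of $f$ (only the \emph{sum} need be convex, which covers \TT{FS} despite non-convex $\widehat{f}_j$), Lipschitz continuity of the gradients, and the consensus bound from the previous step, I would derive a recursion of the form $\|\bar{x}_{k+1} - x^*\|^2 \le \|\bar{x}_k - x^*\|^2 - 2\alpha_k\,(f(\bar{x}_k) - f^*) + \alpha_k^2 C + \alpha_k \beta_k$, where $\beta_k \to 0$ and $\sum_k \alpha_k\beta_k < \infty$. Invoking a deterministic convergence lemma for nonnegative sequences (exploiting $\sum_k \alpha_k = \infty$ and $\sum_k \alpha_k^2 < \infty$) then forces $\liminf_k f(\bar{x}_k) = f^*$ and, with boundedness of $\MC{X}^*$, convergence of $\bar{x}_k$ to a single optimum; combined with vanishing disagreement this yields $x_k^j \to x^*$ for every $j$.

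The step I expect to be the main obstacle is the consensus estimate, because the perturbations have \emph{non-zero individual mean} and cancel only in aggregate: I must show that this structured but biased noise does not accumulate in the disagreement despite being injected at every iteration. The delicate point is that the bias is exactly annihilated by double stochasticity in the average, yet still contributes $O(\alpha_k\Delta)$ to each agent's deviation; controlling the latter requires the geometric mixing to dominate the $O(\alpha_k)$ injection rate, which ultimately rests on square-summability of $\alpha_k$. For \TT{FS}, an additional wrinkle is verifying that the obfuscated $\widehat{f}_j$ inherit the gradient bound and Lipschitz property of Assumption~\ref{Asmp:GradientCond} from the assumption that the $s^{j,i}(x)$ have bounded, Lipschitz gradients, so that the same machinery applies verbatim.
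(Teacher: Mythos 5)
Your overall architecture matches the paper's: balanced noise preserves the network average (your step 1 is exactly the paper's identities $\sum_j e^j_k = 0$ and $\bar{v}_k = \bar{x}_k$), geometric mixing of the doubly stochastic products gives the disagreement bound with $\sum_k \alpha_k \max_j \|\delta^j_k\| < \infty$ (the paper's Lemma 1 plus Lemma 3.1 of Ram et al.), and a deterministic Robbins--Siegmund lemma closes the argument, using only convexity of the \emph{sum} so that \TT{FS} is covered. However, there is a genuine gap in your third step, and it is not the step you flagged as the main obstacle. You propose to expand $\|\bar{x}_{k+1} - x^*\|^2$, after writing $x^j_{k+1} = v^j_k - \alpha_k \nabla f_j(v^j_k) + \alpha_k e^j_k$ with ``$e^j_k$ collecting the projection residual.'' In the constrained setting the projection does not commute with averaging: the residuals $r^j_k = \MC{P}_\MC{X}[u^j_k] - u^j_k$ (where $u^j_k = \widehat{v}^j_k - \alpha_k(\nabla f_j(v^j_k) - e^j_k)$) do \emph{not} cancel across the network, and pointwise one only has $\|r^j_k\| \leq \alpha_k(L+\Delta)$. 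Folding them into bounded noise therefore injects a cross term of size $2\|\bar{x}_k - x^*\|\cdot\frac{1}{n}\sum_j\|r^j_k\| = \mathcal{O}(\alpha_k(L+\Delta))$ into your recursion. This is not summable against $\alpha_k$ and, unlike your $\beta_k$, it does not vanish; since it is not dominated by the descent term $-2\alpha_k(f(\bar{x}_k) - f^*)$ near the optimum, the recursion as stated proves only convergence to an $\mathcal{O}(L+\Delta)$ neighborhood of $\MC{X}^*$, not exact convergence.

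The paper avoids this entirely by never writing a recursion for the average iterate: it tracks $\eta_k^2 = \sum_{j=1}^n \|x^j_k - y\|^2$ and applies non-expansiveness \emph{per agent}, using $\MC{P}_\MC{X}[y] = y$ for $y \in \MC{X}$, so the projection disappears before any averaging occurs (Lemma 2, Eq.~\ref{Eq:ItLemma1}); the contraction $\sum_j \|\widehat{v}^j_k - y\|^2 \leq \sum_j \|x^j_k - y\|^2$ then substitutes for your average-preservation identity. Your route is repairable, but it needs an idea you did not name: the obtuse-angle property $\langle \MC{P}_\MC{X}[u] - u,\, \MC{P}_\MC{X}[u] - y\rangle \leq 0$ for $y \in \MC{X}$ gives $\langle r^j_k, x^j_{k+1} - y\rangle \leq 0$, whence $\langle r^j_k, \bar{x}_{k+1} - y\rangle \leq \|r^j_k\|\,\|\delta^j_{k+1}\| \leq \alpha_k(L+\Delta)\max_j\|\delta^j_{k+1}\|$, which \emph{is} summable by your consensus estimate. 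With that insertion your argument goes through; without it, the projection step is a hole. A final note: your self-diagnosis inverts the difficulty. The consensus estimate under biased-but-balanced noise is routine, since each injection is $\mathcal{O}(\alpha_k\Delta)$ regardless of its mean and the mixing argument never uses zero-mean structure; the place where the aggregate-zero property $\sum_j e^j_k = 0$ does real work is in the descent recursion (the paper's $T_1$ term), where it exactly annihilates the first-order bias so that $\nabla f(\bar{v}_k)^T(y - \bar{v}_k)$ emerges cleanly.
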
%

Theorem \ref{Th:ConvPPDOP2} implies that the sequence of iterates $\{x^j_{k}\}$, generated by
each agent $j$ converges to an optimum in $\mathcal{X}^*$ asymptotically, despite the introduction of
perturbations.

Now we discuss privacy improvement achieved by our algorithms.
We consider an adversary that compromises a set of up to $f$ agents, denoted as $\MC{A}$ (thus, $|\MC{A}| \leq f$). The adversary can observe everything that each agent in $\MC{A}$ observes.
In particular, the adversary has the knowledge of the local objective functions of agents in $\MC{A}$,
their state, and their communication to and from all their neighbors. Furthermore, the adversary knows the
network topology. 

The goal here is to prevent the adversary from learning the local objective function of any agent $i\not\in\MC{A}$.
%
The introduction of perturbations in the state estimates helps improve privacy, by creating an ambiguity in the following sense.
To be able to exactly determine $f_i(x)$ for any $i\not\in\MC{A}$, the adversary's observations of the
communication to and from agents in $\MC{A}$ has to be compatible with the actual $f_i(x)$, but not with any other 
possible choice for the local objective function of agent $i$. The larger the set of feasible local objective functions of agent $i$
that are compatible with the adversary's observations, greater is the ambiguity.  The introduction of noise naturally increases this ambiguity,
with higher $\Delta$ (noise parameter) resulting in greater privacy. However, this improved privacy comes with a performance cost, as Theorem \ref{Th:FiniteTimeRes} will show. 
%
%
Before we discuss Theorem \ref{Th:FiniteTimeRes}, we first present more precise claims for privacy for the \TT{FS} algorithm. 

\comment{+++++++++++++++++++++++++++++++++++++++++++++++++++++++

%

The adversary compromises agents in $\MC{A}\subset\MC{V}$, where $|\MC{A}|\leq f$.
Since all the communication to and from the agents in $\MC{V}-\MC{A}$
is with the agents in $\MC{A}$, under the strong adversary model above, the adversary can learn (within a constant) the function $\sum_{i\in {\MC{V}-\MC{A}}}~ f_i(x)$. Then the 
ideal goal for a privacy-preserving algorithm is to ensure that the adversary {\bf cannot} learn 
any information about
the following sum for any $\MC{I}\subset\MC{V}-\MC{A}$ (i.e., $\MC{I}$ is a strict subset of $\MC{V}-\MC{A}$).
\[
f_{\MC{I}}(x) ~\triangleq~ \sum_{i\in\MC{I}}\, f_i(x).
\]



Intuitively, even after observing the execution of the optimization protocol, if an adversary finds
significant ambiguity in determining
$ f_{\MC{I}}(x) ~\triangleq~ \sum_{i\in\MC{I}}\, f_i(x)$ for any $\MC{I}\subset\MC{V}-\MC{A}$, the privacy
is achieved.
+++++++++++++++++++++++++++++++++++++++++++++}

~

\noindent{\bf Privacy Claims:}
Let $\MC{F}$ denote the set of all feasible instances of Problem \ref{Prob:DistLearn}, characterized by sets of local objective functions. Thus, each element of $\MC{F}$, say $\{g_1(x),g_2(x),\cdots,g_n(x)\}$ corresponds to an instance of
Problem~\ref{Prob:DistLearn}, where the $g_i(x)$ become the local objective functions for each agent $i$. When each agent's local objective function is restricted to be any polynomial
of a bounded degree, the set of feasible functions forms an additive group. Theorem~\ref{Th:TPriv-2} makes a claim regarding the
privacy achieved using {\em function sharing} in this case.

\begin{definition}\label{Def:Compatible}
Recall that $\MC{F}$ is the set of all possible instance of Problem \ref{Prob:DistLearn}.
The adversary's observations are said to be compatible with problem instance $\{g_1(x),g_2(x),\cdots,g_n(x)\}\in\MC{F}$ if the information available to the adversary may be produced when agent $i$'s local objective function is $g_i(x)$
for each $i\in\MC{V}$.
\end{definition}

\begin{theorem}
\label{Th:TPriv-2}
Let the local objective function of each agent be restricted to be a
polynomial of a bounded degree. Consider an execution of the \TT{FS} algorithm in which the local objective function
of each agent $i$ is $f_i(x)$. Then \TT{FS} algorithm provides the following privacy guarantees:
\begin{itemize} 
\item [(P1)] Let the network graph $\MC{G}$ have a minimum degree $\geq f+1$. For any agent $i\not\in\MC{A}$, choose any feasible local objective function $g_i(x) \neq f_i(x)$. The adversary's observations in the above execution are compatible with at least one feasible problem in $\MC{F}$ in which agent $i$'s local objective function equals $g_i(x)$. In other words, the adversary cannot learn function $f_i(x)$ for $i\not\in\MC{A}$.
\item [(P2)] Let the network graph $\MC{G}$ have vertex connectivity $\geq f+1$. For each $\MC{I}\subset\MC{V}-\MC{A}$, choose a feasible local objective function $g_i(x)\neq f_i(x)$ for each $i\in\MC{I}$. The adversary's observations in the above execution are compatible at least one feasible problem in $\MC{F}$ wherein, for $i\in \MC{I}$, agent $i$'s local objective function is $g_i(x)$. In other words, the adversary cannot learn $\sum_{i \in \MC{I}}f_i(x)$.
\end{itemize}

\end{theorem}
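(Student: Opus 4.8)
The plan is to prove both claims by an \emph{indistinguishability} construction: starting from the given execution, with local objectives $\{f_i\}$ and noise-sharing functions $\{s^{j,i}\}$, I would exhibit an alternative instance $\{g_i\}\in\MC{F}$ together with alternative functions $\{\tilde s^{j,i}\}$ that (i) leave $f_i$ unchanged for every $i\in\MC{A}$, (ii) leave $s^{j,i}$ unchanged on every edge incident to $\MC{A}$, and (iii) induce the \emph{same} obfuscated objectives $\widehat g_j=\widehat f_j$ for all $j\in\MC{V}$. Requirement (iii), together with identical initialization, forces the \TT{DGD} iterations inside Algorithm~\ref{Algo:PPDOP1} to trace an identical trajectory $\{x^j_k\}$ (the update depends only on $\widehat f_j$, the topology-determined $B_k$, and $\alpha_k$), so every exchanged state is identical; combined with (i)--(ii), the adversary's entire view is literally unchanged and compatibility in the sense of Definition~\ref{Def:Compatible} holds.

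First I would reduce requirement (iii) to a flow condition on $\MC{G}$. Writing $\delta_j\triangleq g_j-f_j$ and $\sigma^{j,i}\triangleq\tilde s^{j,i}-s^{j,i}$, the identity $\widehat g_j=\widehat f_j$ with (\ref{Eq:SMCFunctionGen})--(\ref{Eq:F_New}) is equivalent to $\delta_j=\sum_{i\in\MC{N}_j}(\sigma^{j,i}-\sigma^{i,j})$. Setting $t^{j,i}\triangleq\sigma^{j,i}-\sigma^{i,j}=-t^{i,j}$, this reads $\delta_j=\sum_{i\in\MC{N}_j}t^{j,i}$, i.e.\ $\{\delta_j\}$ is exactly the divergence of an antisymmetric flow $t$. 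Requirement (ii) forces $\sigma^{j,i}=0$, hence $t^{j,i}=0$, on every edge incident to $\MC{A}$, so the flow must live inside the induced subgraph $\MC{G}[\MC{V}\setminus\MC{A}]$; requirement (i) forces $\delta_j=0$ for $j\in\MC{A}$, which is then automatic. Summing the divergence equation shows any realizable demand obeys $\sum_j\delta_j=0$, consistent with $\sum_j g_j=\sum_j f_j=f$, so the global objective and its observable optimum are preserved.

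For (P1) I would invoke the minimum-degree hypothesis: agent $i\not\in\MC{A}$ has at least $f+1$ neighbors, at most $f$ in $\MC{A}$, hence a neighbor $i'\in\MC{V}\setminus\MC{A}$. Routing the demand $\delta=g_i-f_i$ along the single edge $\{i,i'\}$ (take $t^{i,i'}=\delta$ via $\sigma^{i,i'}=\delta$, all else zero) gives $\delta_i=\delta$, $\delta_{i'}=-\delta$, so $g_{i'}=f_{i'}-\delta$ and $g_j=f_j$ otherwise. For (P2) I would use vertex connectivity $\geq f+1$: deleting the at most $f$ vertices of $\MC{A}$ leaves $\MC{G}[\MC{V}\setminus\MC{A}]$ connected. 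Since $\MC{I}$ is a strict subset of $\MC{V}\setminus\MC{A}$ there exists $r\in(\MC{V}\setminus\MC{A})\setminus\MC{I}$; assigning $\delta_r=-\sum_{i\in\MC{I}}(g_i-f_i)$, $\delta_i=g_i-f_i$ for $i\in\MC{I}$, and $\delta_j=0$ elsewhere yields a zero-sum demand supported on a connected graph, realizable by routing along any spanning tree of $\MC{G}[\MC{V}\setminus\MC{A}]$. In both cases $g_j=f_j+\delta_j$ is feasible because the bounded-degree polynomials form an additive group, and the $\sigma^{j,i}$ may be taken to be such polynomials, so each $\tilde s^{j,i}$ retains bounded, Lipschitz gradients on the compact set $\MC{X}$ and Assumption~\ref{Asmp:GradientCond} is met.

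Finally I would convert compatibility into the stated impossibility: for (P1), since $g_i\neq f_i$ was arbitrary, the view rules out no candidate objective for agent $i$, so $f_i$ cannot be learned; for (P2), the construction can be chosen with $\sum_{i\in\MC{I}}g_i\neq\sum_{i\in\MC{I}}f_i$ (the excess absorbed at $r$), so the sum over $\MC{I}$ is undetermined. I expect the reduction step to be the main obstacle: recognizing that preserving \emph{every} obfuscated function is equivalent to realizing a prescribed divergence by an antisymmetric flow confined to $\MC{G}[\MC{V}\setminus\MC{A}]$, and that the minimum-degree and vertex-connectivity hypotheses are precisely what guarantee such a flow (a single safe neighbor for P1, a connected residual graph with a spare sink for P2). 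The remaining feasibility and gradient-bound verifications are routine given the group structure and compactness of $\MC{X}$.
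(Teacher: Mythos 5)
Your proposal is correct and takes essentially the same route as the paper's proof: you freeze the noise functions on every edge incident to $\MC{A}$, demand that every obfuscated function $\widehat{f}_j$ (and hence the entire \TT{DGD} trajectory) be preserved, and use connectivity of the residual graph $\MC{G}[\MC{V}\setminus\MC{A}]$ to realize the demands $g_j-f_j$ along a spanning tree --- your antisymmetric-flow/divergence reformulation is precisely the paper's incidence-matrix equation $\mathbf{\hat{f}}=\mathbf{f}+\mathbf{B}\mathbf{S}$, with explicit tree routing playing the role of the left pseudoinverse of the spanning-tree incidence matrix. Your single-edge routing for (P1) is a clean instantiation of what the paper leaves as ``can be proved similarly,'' and your feasibility verification (the additive-group structure of bounded-degree polynomials) matches the paper's own level of rigor on that point.
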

\noindent The proof for property (P2) in Theorem~\ref{Th:TPriv-2} is sketched in Section~\ref{Sec:Privacy:Proofs} and detailed in Appendix~\ref{Sec:Appendix-T2Proof}. Property (P1) can be proved similarly. 

~

\noindent \textbf{Convergence-Privacy Trade-off: }
Addition of perturbations to the state estimates can 
improve privacy, however,
it also degrades the convergence rate.
Analogous to the finite-time analysis presented in \cite{duchi2012dual}, the theorem below
assumes $\alpha_k=1/\sqrt{k}$, and provides a convergence result for a weighted time-average
of the state estimates $\widehat{x}^j_T$ defined below.

\begin{theorem}  \label{Th:FiniteTimeRes}
Let estimates $\{x^j_k\}$ be generated by \TT{RSS-NB} or \TT{RSS-LB} with $\alpha_k = 1/\sqrt{k}$.
For each $j\in\MC{V}$, let $$\widehat{x}^j_T = \frac{\sum_{k=1}^T \alpha_k x^j_k}{\sum_{k=1}^T \alpha_k}.$$
Then,
\begin{align*}
    f(\widehat{x}^j_T) - f(x^*) = \mathcal{O}\left((1 + \Delta^2) \frac{\log(T)}{\sqrt{T}}\right).
\end{align*}

\end{theorem}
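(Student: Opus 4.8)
The plan is to reduce both \TT{RSS-NB} and \TT{RSS-LB} to a single perturbed-\TT{DGD} recursion, show that the structured (balanced) noise leaves the network average unbiased, and then run the standard finite-time distributed subgradient argument while carefully tracking where the parameter $\Delta$ enters. First I would establish the central invariance: the network average $\bar{x}_k := \frac{1}{n}\sum_{j\in\MC{V}} x_k^j$ is not shifted by the perturbations. Writing the fusion step as $v_k^j = \sum_{i\in\MC{N}_j} B_k[j,i](x_k^i + \alpha_k d_k^{i})$ for \TT{RSS-NB} and $v_k^j = \sum_i B_k[j,i](x_k^i + \alpha_k d_k^{i,j})$ for \TT{RSS-LB}, double stochasticity of $B_k$ pushes the perturbation contribution to $\bar{v}_k$ onto $\frac{\alpha_k}{n}\sum_i d_k^i$ in the NB case and onto $\frac{\alpha_k}{n}\sum_i\sum_j B_k[j,i] d_k^{i,j}$ in the LB case. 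The global balance (\ref{Eq:RSSNB-NoiseChar}) annihilates the former and the local balance (\ref{eq:lb}) annihilates the latter, so $\bar{v}_k=\bar{x}_k$ in both algorithms. This single identity lets me treat the two cases uniformly.

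Second, I would bound the disagreement $\|x_k^j - \bar{x}_k\|$. Using nonexpansiveness of $\MC{P}_\MC{X}$, the gradient bound $L$ (Assumption~\ref{Asmp:GradientCond}), and the geometric mixing of products of the doubly stochastic matrices $B_k$ (second-largest singular value $\beta<1$), the disagreement obeys a contraction-plus-forcing recursion whose forcing terms are $\alpha_k L$ (from the gradient step) and $\alpha_k \Delta$ (from the perturbation, since $\|d_k^{j}\|\le\Delta$). This yields $\|x_k^j-\bar{x}_k\| = \mathcal{O}\big((L+\Delta)\sum_{t\le k}\beta^{\,k-t}\alpha_t\big)$, which for the slowly varying $\alpha_k=1/\sqrt{k}$ is $\mathcal{O}\big((L+\Delta)\alpha_k\big)$.

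Third, I would write the descent inequality and telescope. From $x_{k+1}^j=\MC{P}_\MC{X}[v_k^j-\alpha_k\nabla f_j(v_k^j)]$, nonexpansiveness, and convexity of $\|\cdot\|^2$ combined with double stochasticity,
\[
\sum_j\|x_{k+1}^j-x^*\|^2 \le \sum_j\|w_k^j-x^*\|^2 - 2\alpha_k\sum_j \nabla f_j(v_k^j)^\top(v_k^j-x^*) + \alpha_k^2 n L^2 .
\]
Expanding $\|w_k^j-x^*\|^2=\|x_k^j-x^*\|^2+2\alpha_k(x_k^j-x^*)^\top d_k^j+\alpha_k^2\|d_k^j\|^2$, the balance condition kills the $x^*$-part of the cross term, leaving a residual $2\alpha_k\sum_j(x_k^j-\bar{x}_k)^\top d_k^j=\mathcal{O}\big(\alpha_k\Delta\sum_j\|x_k^j-\bar{x}_k\|\big)=\mathcal{O}\big(\alpha_k^2(L+\Delta)\Delta\big)$ by Step~2, plus $\alpha_k^2\|d_k^j\|^2\le\alpha_k^2\Delta^2$. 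Converting $\nabla f_j(v_k^j)^\top(v_k^j-x^*)$ into $f(\bar{x}_k)-f^*$ via convexity and the bounds of Assumption~\ref{Asmp:GradientCond} introduces further $\mathcal{O}\big(\alpha_k^2 L(L+\Delta)\big)$ errors, so all $\Delta$-dependence collapses to $\mathcal{O}\big(\alpha_k^2(1+\Delta^2)\big)$. Summing from $k=1$ to $T$, the distance terms telescope to a bounded constant (compactness of $\MC{X}$, Assumption~\ref{Asmp:FunSet}), while $\sum_k\alpha_k^2=\Theta(\log T)$ and $\sum_k\alpha_k=\Theta(\sqrt{T})$. Dividing by $\sum_k\alpha_k$ and applying Jensen's inequality to $\widehat{x}^j_T$ bounds the average's gap by $\mathcal{O}\big((1+\Delta^2)\log T/\sqrt{T}\big)$; transferring back to agent $j$'s iterate costs an extra $nL\|\widehat{x}^j_T-\bar{\widehat{x}}_T\|$, which by Step~2 is of the same order, giving the claim.

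The main obstacle I anticipate is Step~3: ensuring the first-order-in-$\Delta$ contributions genuinely cancel, so that only $\Delta^2$ survives and matches the stated rate, rather than leaving an $\mathcal{O}(\Delta)$ term that would dominate. This hinges on invoking the balance identities at exactly the right places — once to remove the cross term against $x^*$, and once (through the disagreement bound) to control the residual against $\bar{x}_k$ — and on verifying that the LB local-balance condition produces the same cancellation as the NB global-balance condition, despite the LB noise being direction-dependent across neighbors.
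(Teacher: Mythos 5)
Your proposal is correct, and its scaffolding coincides step for step with the paper's proof: average preservation $\bar v_k=\bar x_k$ via the balance conditions, a geometric-mixing disagreement bound of order $(L+\Delta)\alpha_k$ (the paper's Lemma~1), a telescoped descent inequality combined with $\sum_{k\le T}\alpha_k^2=\Theta(\log T)$ and $\sum_{k\le T}\alpha_k\ge\sqrt{T}$, Jensen applied to $\bar{\widehat{x}}_T$, and a final Lipschitz transfer costing $\mathcal{O}\left((L+\Delta)\log T/\sqrt{T}\right)$ (the paper's Lemma~2 and Appendix proof of Theorem~3). Where you genuinely differ is in how the first-order noise term is killed inside the descent inequality. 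The paper fuses the perturbations first, defining $e^j_k=\sum_i B_k[j,i]d^i_k$ (resp.\ $\sum_i B_k[j,i]d^{i,j}_k$), treats $-e^j_k$ as gradient noise in $x^j_{k+1}=\MC{P}_\MC{X}[\hat v^j_k-\alpha_k(\nabla f_j(v^j_k)-e^j_k)]$, cancels the first-order term through $\sum_j e^j_k=0$ (its term $T_1$), and pays for evaluating gradients at $v^j_k$ instead of $\bar v_k$ via gradient Lipschitzness $N$ and the trick $2\|a\|\le 1+\|a\|^2$, which is what creates the multiplicative $(1+F_k)$ supermartingale structure of its Lemma~2. You instead expand at the pre-fusion level $\|w^j_k-x^*\|^2$, use $\sum_j d^j_k=0$ to strip the $x^*$-component of the cross term and control the remainder $\sum_j(\delta^j_k)^T d^j_k$ with your Step-2 bound (NB case); and — a point your own closing caveat undersells — for \TT{RSS-LB} the cross term $\sum_j B_k[j,i](x^i_k-x^*)^T d^{i,j}_k$ vanishes \emph{exactly} per sender $i$, since the local balance condition (\ref{eq:lb}) carries precisely the weights $B_k[j,i]$, so no disagreement bound is needed there at all. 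Likewise, converting $\nabla f_j(v^j_k)^T(v^j_k-x^*)$ into $f(\bar x_k)-f^*$ through function-value Lipschitzness (constant $L$) lets you bypass $N$ and the $(1+F_k)$ factor entirely, at the mild price of invoking compactness of $\MC{X}$ for the telescoped distances — which the paper's finite-time proof uses anyway (its constant $D_0$). Both routes leave only $\mathcal{O}\left(\alpha_k^2(1+\Delta^2)\right)$ residuals, so your bookkeeping and the final rate match the paper's. One technicality you share with the paper: $v^j_k=\hat v^j_k+\alpha_k e^j_k$ may fall slightly outside $\MC{X}$, so the gradient bound of Assumption~2 must implicitly hold on a neighborhood of $\MC{X}$; this affects neither argument materially.
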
%

Section \ref{Sec:Analysis} presents the proof.
The above theorem shows that the gap between the optimal function value and
function value at the time-average of state estimates ($\widehat{x}^j_k$) has
a gap that is quadratic in noise bound $\Delta$. The dependence on
time $T$ in the convergence result above is similar to that for \TT{DGD} in \cite{duchi2012dual,Nedi2015}, and is a consequence of the consensus-based local gradient method used here. The quadratic dependence on $\Delta$ is a consequence of structured randomization. Larger $\Delta$ results in slower convergence, however, would result in larger randomness in the iterates, improving privacy.

Random perturbations used in algorithms \TT{RSS-NB} and \TT{RSS-LB} cause a slowdown in convergence, however, do not introduce an error
in the outcome. This is different from $\tilde{\epsilon}$-Differential Privacy where perturbations result in slowdown in addition to an error of the order of $\mathcal{O}(1/\tilde{\epsilon}^2)$ \cite{Han2016}.



\section{Performance Analysis} \label{Sec:Analysis}

We sketch the analysis of \TT{RSS-NB} here. Analysis of \TT{RSS-LB} and \TT{FS} follows similar structure.
For brevity, only key results are presented here. Detailed proofs are available in Appendix (also \cite{gade16convsum,gade2016private}).
We often refer to the {\em state estimate} of an agent as its {\em iterate}.
Define iterate average ($\bar{x}_{k}$, at iteration $k$) and the disagreement of iterate $x^j_{k}$ agent $j$ with $\bar{x}_{k}$ as,
\begin{align}
\bar{x}_{k} &= \frac{1}{n}\sum_{j=1}^n x^j_{k}, \text{ and }  \label{Eq:AvgIterateDisagree}\\
\delta^j_{k} &= x^j_{k} - \bar{x}_{k}. \label{Eq:AvgIterateDisagree2}
\end{align}

The computation on Line 8 of \TT{RSS-NB} Algorithm~\ref{Algo:PPDOP2} can be represented using ``true-state'', denoted as $\widehat{v}^j_k$, and a perturbation, $e^j_k$,
as follows. 
\begin{eqnarray}
\widehat{v}^j_k &=& \sum_{i =1}^n B_k[j,i] x^i_{k},  \label{Eq:InfFusALG2:v}\\
e^j_k &=& \sum_{i =1}^n B_k[j,i] d^i_k, \label{Eq:InfFusALG2:v2}\\
v^j_{k}&=& \sum_{i =1}^n B_k[j,i] w^i_{k} ~=~ \widehat{v}^j_k + \alpha_k e^j_k \label{Eq:InfoFusionN:v}
\end{eqnarray}%
Since $\sum_j d^j_k = 0$ and $B_k$ is doubly stochastic, we get, $\sum_j e^j_k = 0$ (see Appendix for details).   

Similarly for \TT{RSS-LB}, computation on Line 8 of \TT{RSS-LB} Algorithm~\ref{Algo:PPDOP3} can be represented using ``true state'', denoted as $\widehat{v}^j_k$, and a perturbation, $e^j_k$,
as follows. 
\begin{eqnarray}
\widehat{v}^j_k &=& \sum_{i =1}^n B_k[j,i] x^i_{k},  \label{Eq:InfFusALG2:vLB}\\
e^j_k &=& \sum_{i =1}^n B_k[j,i] d^{i,j}_k, \label{Eq:InfFusALG2:v2LB}\\
v^j_{k}&=& \sum_{i =1}^n B_k[j,i] w^{i,j}_{k} ~=~ \widehat{v}^j_k + \alpha_k e^j_k \label{Eq:InfoFusionN:vLB}
\end{eqnarray}%
Following the construction of noise (Line 4 in Algorithm~\ref{Algo:PPDOP3}) we can show $\sum_j e^j_k = 0$ (see Appendix for details).

Now we can represent the projected gradient descent step (Line 9 of Algorithm~\ref{Algo:PPDOP2} or Line 9 of Algorithm~\ref{Algo:PPDOP3}) as,
{\begin{align}
x^j_{k+1}  = \MC{P}_{\MC{X}}\left[ \widehat{v}^j_k - \alpha_k \left( \nabla f_j(v^j_k) - e^j_k\right) \right]. \label{Eq:ProjGradPerspective}
\end{align}}%
Note that the same equation holds for both \TT{RSS-NB} and \TT{RSS-LB} where $e^j_k$ definition is defined by Eq.~\ref{Eq:InfFusALG2:v2} for \TT{RSS-NB} and Eq.~\ref{Eq:InfFusALG2:v2LB} for \TT{RSS-LB}. In the above expression, the perturbation can be viewed simply as noise in the gradient. This perspective 
is useful for the analysis.
%
Using a result from \cite{ram2010distributed} on linear convergence of product of doubly stochastic matrices, we obtain a bound
on disagreement $\|\delta^j_k\|$ in Lemma~\ref{Lem:AvgDisagreement1} below.
\begin{lemma} \label{Lem:AvgDisagreement1}
For constant $\beta<1$ and constant $\theta$ that both only depend on the network $\MC{G}$, doubly-stochastic matrices $B_k$, iterates $x^j_k$ generated by \TT{RSS-NB}, \TT{RSS-LB}, and for $k \geq 1$
\begin{align*}
\max_{j\in \MC{V}} \|\delta^j_{k+1}\| &\leq n \theta \beta^{k} \max_{i \in \MC{V}} \|x^i_1\| + 2 \alpha_k \left( L + \Delta\right) + n \theta (L+\Delta) \sum_{l=2}^k \beta^{k+1-l} \alpha_{l-1}
\end{align*}%
\end{lemma}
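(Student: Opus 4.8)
The plan is to rewrite the projected gradient step (\ref{Eq:ProjGradPerspective}) so that it reads as a pure consensus step plus a small, boundable perturbation, and then to propagate these perturbations through the products of the doubly stochastic matrices $B_k$, whose geometric contraction toward uniform averaging is exactly the result borrowed from \cite{ram2010distributed}.

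First I would isolate the perturbation. Define $\psi^j_k := x^j_{k+1} - \widehat{v}^j_k$, so that $x^j_{k+1} = \sum_{i=1}^n B_k[j,i]\,x^i_k + \psi^j_k$. Since each $\widehat{v}^j_k$ is a convex combination of the feasible iterates $x^i_k\in\MC{X}$, it lies in $\MC{X}$ and is therefore a fixed point of $\MC{P}_\MC{X}$. Nonexpansiveness of the projection then gives
\begin{align*}
\|\psi^j_k\| = \left\|\MC{P}_\MC{X}\!\left[\widehat{v}^j_k - \alpha_k(\nabla f_j(v^j_k) - e^j_k)\right] - \MC{P}_\MC{X}\!\left[\widehat{v}^j_k\right]\right\| \leq \alpha_k\|\nabla f_j(v^j_k) - e^j_k\| \leq \alpha_k(L+\Delta),
\end{align*}
using the gradient bound (Assumption~\ref{Asmp:GradientCond}) and $\|e^j_k\|\leq\Delta$, which follows from $\|d^i_k\|\leq\Delta$ together with double stochasticity of $B_k$.

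Next I would unroll the linear recursion. Writing $\Phi(k,s) = B_k B_{k-1}\cdots B_s$ for the transition matrix (with $\Phi(k,k+1)=I$), iterating $x^j_{k+1} = \sum_i B_k[j,i] x^i_k + \psi^j_k$ from iteration $1$ yields
\begin{align*}
x^j_{k+1} = \sum_{i=1}^n [\Phi(k,1)]_{j,i}\,x^i_1 + \sum_{l=1}^{k-1}\sum_{i=1}^n [\Phi(k,l+1)]_{j,i}\,\psi^i_l + \psi^j_k.
\end{align*}
Because each $B_k$ is doubly stochastic, so is every product $\Phi(k,s)$; hence uniform averaging is preserved and $\bar{x}_{k+1}$ equals the same expression with every $[\Phi(k,s)]_{j,i}$ replaced by $1/n$. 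Subtracting gives
\begin{align*}
\delta^j_{k+1} = \sum_{i=1}^n \Big([\Phi(k,1)]_{j,i}-\tfrac1n\Big) x^i_1 + \sum_{l=1}^{k-1}\sum_{i=1}^n \Big([\Phi(k,l+1)]_{j,i}-\tfrac1n\Big)\psi^i_l + \Big(\psi^j_k - \tfrac1n\textstyle\sum_i \psi^i_k\Big).
\end{align*}

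The crucial ingredient, and the step I expect to be the main obstacle, is the geometric convergence estimate from \cite{ram2010distributed}: there exist $\theta>0$ and $\beta\in(0,1)$, depending only on $\MC{G}$ and the weights, with $\big|[\Phi(k,s)]_{j,i}-1/n\big|\leq \theta\beta^{\,k-s+1}$. Applying it with $s=1$ to the first group (exponent $\beta^{k}$) and with $s=l+1$ to the second (exponent $\beta^{k-l}$), and bounding each $\|\psi^i_l\|\leq \alpha_l(L+\Delta)$, the three groups become $n\theta\beta^{k}\max_i\|x^i_1\|$, then $n\theta(L+\Delta)\sum_{l=1}^{k-1}\beta^{k-l}\alpha_l$, and finally $\|\psi^j_k\|+\frac1n\sum_i\|\psi^i_k\|\leq 2\alpha_k(L+\Delta)$. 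Reindexing the middle sum via $l\mapsto l-1$ turns it into $\sum_{l=2}^{k}\beta^{k+1-l}\alpha_{l-1}$, and taking the maximum over $j$ produces the claimed bound. The identical argument applies to \TT{RSS-LB} once $\|e^j_k\|\leq\Delta$ is verified from (\ref{Eq:InfFusALG2:v2LB}) and $\|d^{i,j}_k\|\leq\Delta$; everything else is bookkeeping, so the only genuine difficulty is the contraction estimate, which we import rather than reprove.
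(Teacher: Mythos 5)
Your proposal is correct and takes essentially the same route as the paper's own proof: your perturbation $\psi^j_k$ is exactly the paper's $z^j_{k+1}$, and the unrolling via transition matrices $\Phi(k,s)$, the geometric estimate $\left|\Phi(k,s)[j,i]-\tfrac{1}{n}\right|\leq\theta\beta^{k-s+1}$, the nonexpansive-projection bound $\|\psi^j_k\|\leq\alpha_k(L+\Delta)$, and the final assembly of the three terms (including the reindexing $l\mapsto l-1$) reproduce the paper's argument step for step. The only cosmetic difference is that you justify the projection step by explicitly noting $\widehat{v}^j_k\in\mathcal{X}$ as a convex combination of feasible iterates, which is in fact slightly more careful than the paper's wording at the corresponding step.
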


\noindent The proof of Lemma~\ref{Lem:AvgDisagreement1} is presented in Appendix~\ref{Sec:Appendix-L1Proof}.


Lemma \ref{Lem:AvgDisagreement1} can be used to show that the iterates maintained by the different agents
asymptotically reach consensus. 
Lemma~\ref{Lem:IterateConvRelation} below provides a bound on the distance between iterates and the optimum. 
\begin{lemma} \label{Lem:IterateConvRelation}
For iterates $x^j_k$ generated by \TT{RSS-NB}, \TT{RSS-LB}, $y \in \mathcal{X}$ and $k\geq 1$, the following holds,
\begin{align*}
&\eta_{k+1}^2 \leq \left(1 + F_k \right)\eta_k^2 - 2 \alpha_k \left(f(\bar{x}_k) - f(y) \right) + H_k, \; \; \\
& \mbox{where~} \eta_k^2 = \sum_{j=1}^n \|x^j_k - y\|^2, \quad F_k = \alpha_k N \left( \max_{j\in\MC{V}} \|\delta^j_k\| + \alpha_k \Delta\right), \text{ and}\\
&H_k = 2 \alpha_k n (L+\frac{N}{2}+\Delta) \max_{j \in \MC{V}} \|\delta^j_k\|  + \alpha_k^2 n \left(N\Delta +(L +\Delta)^2 \right).
\end{align*}%
\end{lemma}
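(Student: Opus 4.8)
The plan is to start from the unified projected-gradient representation (\ref{Eq:ProjGradPerspective}), which holds for both \TT{RSS-NB} and \TT{RSS-LB}, and to exploit non-expansiveness of the Euclidean projection onto the convex set $\MC{X}$. Since $y\in\MC{X}$, dropping the projection only increases the distance, so for each agent $j$,
\begin{align*}
\|x^j_{k+1}-y\|^2 \le \|\widehat{v}^j_k - y - \alpha_k(\nabla f_j(v^j_k)-e^j_k)\|^2.
\end{align*}
Expanding the square and summing over $j\in\MC{V}$ splits $\eta_{k+1}^2$ into three pieces: a consensus term $\sum_j\|\widehat{v}^j_k-y\|^2$, a linear cross term $-2\alpha_k\sum_j\langle \nabla f_j(v^j_k)-e^j_k,\widehat{v}^j_k-y\rangle$, and a quadratic remainder $\alpha_k^2\sum_j\|\nabla f_j(v^j_k)-e^j_k\|^2$.

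Two of these pieces are routine. For the consensus term I would use double stochasticity of $B_k$ together with convexity of $\|\cdot\|^2$: writing $\widehat{v}^j_k - y = \sum_i B_k[j,i](x^i_k-y)$ and using that the columns of $B_k$ sum to one gives $\sum_j\|\widehat{v}^j_k-y\|^2 \le \eta_k^2$, which supplies the leading $1$ in $(1+F_k)$. For the quadratic remainder, the gradient bound $\|\nabla f_j\|\le L$ and $\|e^j_k\|\le\Delta$ (which follows from $\|d^i_k\|\le\Delta$ and double stochasticity) yield $\alpha_k^2\sum_j\|\nabla f_j(v^j_k)-e^j_k\|^2\le \alpha_k^2 n(L+\Delta)^2$, the last term of $H_k$. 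In the cross term I would isolate the noise contribution $+2\alpha_k\sum_j\langle e^j_k,\widehat{v}^j_k-y\rangle$; invoking $\sum_j e^j_k=0$ (established just before the lemma) and that the $\widehat{v}^j_k$ average to $\bar{x}_k$, the vector $y$ can be replaced by $\bar{x}_k$, and then $\|\widehat{v}^j_k-\bar{x}_k\|=\|\sum_i B_k[j,i]\delta^i_k\|\le\max_i\|\delta^i_k\|$ bounds this by $2\alpha_k n\Delta\max_j\|\delta^j_k\|$, one of the $H_k$ terms.

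The gradient part $-2\alpha_k\sum_j\langle \nabla f_j(v^j_k),\widehat{v}^j_k-y\rangle$ is where the structure emerges. I would add and subtract $\nabla f_j(\bar{x}_k)$. Replacing $\nabla f_j(v^j_k)$ by $\nabla f_j(\bar{x}_k)$ and decomposing $\widehat{v}^j_k-y=(\widehat{v}^j_k-\bar{x}_k)+(\bar{x}_k-y)$, the $(\bar{x}_k-y)$ piece yields, by convexity of each $f_j$, the descent term $-2\alpha_k\sum_j(f_j(\bar{x}_k)-f_j(y)) = -2\alpha_k(f(\bar{x}_k)-f(y))$, while the $(\widehat{v}^j_k-\bar{x}_k)$ piece is controlled by $L$ and disagreement, contributing $2\alpha_k n L\max_j\|\delta^j_k\|$ to $H_k$.

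The crux is the leftover Lipschitz term $-2\alpha_k\sum_j\langle \nabla f_j(v^j_k)-\nabla f_j(\bar{x}_k),\widehat{v}^j_k-y\rangle$. Using Lipschitzness and $\|v^j_k-\bar{x}_k\|\le\max_i\|\delta^i_k\|+\alpha_k\Delta$ produces the factor $N(\max_i\|\delta^i_k\|+\alpha_k\Delta)$, i.e. exactly $F_k/\alpha_k$, multiplying $2\alpha_k\sum_j\|\widehat{v}^j_k-y\|$. The key manoeuvre is the elementary inequality $2b\le b^2+1$ applied with $b=\|\widehat{v}^j_k-y\|$, converting $\sum_j\|\widehat{v}^j_k-y\|$ into $\tfrac12\sum_j\|\widehat{v}^j_k-y\|^2+\tfrac n2 \le \tfrac12\eta_k^2+\tfrac n2$; this yields the multiplicative term $F_k\eta_k^2$ (completing the factor $1+F_k$) together with the residual $\alpha_k nN\max_j\|\delta^j_k\|+\alpha_k^2 nN\Delta$, which are precisely the remaining $\tfrac N2$ and $N\Delta$ contributions to $H_k$. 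Collecting all pieces gives the claimed recursion. I expect this last conversion to be the main obstacle: the gradient error must appear as a \emph{multiplicative} $F_k\eta_k^2$ (whose $\sum_k F_k$ will later be summable) rather than a non-summable additive term, and this requires pairing the Lipschitz factor with $\|\widehat{v}^j_k-y\|$ via $2b\le b^2+1$ so that the disagreement and noise bounds land exactly in $F_k$ and $H_k$.
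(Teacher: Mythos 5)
Your proposal is correct and follows essentially the same route as the paper's proof: non-expansive projection, expansion of the square, $\sum_j\|\widehat{v}^j_k-y\|^2\le\eta_k^2$ via double stochasticity, noise cancellation through $\sum_j e^j_k=0$, convexity at the average, the Lipschitz bound $\|v^j_k-\bar{x}_k\|\le\max_i\|\delta^i_k\|+\alpha_k\Delta$, and the same crucial use of $2b\le b^2+1$ to convert the Lipschitz leftover into the multiplicative term $F_k\eta_k^2$. The only (cosmetic) differences are that you separate the noise cross term outright and replace $y$ by $\bar{x}_k$ there, where the paper keeps $e^j_k$ grouped with $\nabla f_j(\bar{v}_k)$ in its terms $T_1,T_2$, and that you apply $2b\le b^2+1$ directly to $\|\widehat{v}^j_k-y\|$ reusing $\xi_k^2\le\eta_k^2$, where the paper first passes to $\sum_j\|x^j_k-y\|$ via a doubly-stochastic averaging step; all contributions land in $F_k$ and $H_k$ exactly as in the paper.
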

\noindent The proof of Lemma~\ref{Lem:IterateConvRelation} is presented in Appendix~\ref{Sec:Appendix-L2Proof}.
%
The expressions in Lemma~\ref{Lem:IterateConvRelation} has the same structure as supermartingale convergence result from \cite{robbins1985convergence}. We can show that $\sum_k H_k < \infty$ and $\sum_k F_k < \infty$. Then using the result from \cite{robbins1985convergence} asymptotic convergence of the iterate average $\bar{x}_k$ to an optimum $x^*\in\MC{X}^*$ can be proved, proving Theorem~\ref{Th:ConvPPDOP2}.

~

\noindent {\bf Proof of Thoerem \ref{Th:FiniteTimeRes}:}
Next, we sketch the proof of Theorem~\ref{Th:FiniteTimeRes}, which uses
Lemma~\ref{Lem:IterateConvRelation}. The detailed proof of Theorem~\ref{Th:FiniteTimeRes} is presented in Appendix~\ref{Sec:Appendix-FiniteTRes}. As discussed earlier, Theorem ~\ref{Th:FiniteTimeRes} assumes $\alpha_k=1/\sqrt{k}$.
Recall the definition of ${\widehat{x}}_T^j$ in Theorem \ref{Th:FiniteTimeRes}.
Let $\bar{\widehat{x}}_T = \frac{1}{n}\sum_{j=1}^n {\widehat{x}}_T^j$.  
Observing that $\bar{\widehat{x}}_T$ also equals $\sum_{k=1}^T\alpha_k\bar{x}_k$ and
using the fact that $f(x)$ is convex, we get,
{\small
\begin{align*}
f(\bar{\widehat{x}}_T) - f^* \leq \frac{\sum_{k=1}^T \alpha_k f(\bar{x}_k)}{\sum_{k=1}^T \alpha_k} - f^* = \frac{\sum_{k=1}^T \alpha_k \left(f(\bar{x}_k) - f^*\right)}{\sum_{k=1}^T \alpha_k}
\end{align*}}
\noindent Lemma~\ref{Lem:IterateConvRelation} and the observation ${\small \sum_{k=1}^T \alpha_k \geq \sqrt{T}}$ yields:
{\small
\begin{align*}
f(\bar{\widehat{x}}_T) - f^* &\leq \frac{\sum_{k=1}^T \left((1+F_k)\eta^2_k - \eta^2_{k+1} + H_k\right)}{2\sum_{k=1}^T \alpha_k} \\
&\leq \frac{\eta_1^2 + \sum_{k=1}^T \left(F_k \eta^2_k + H_k\right)}{2\sqrt{T}} 
\end{align*}}

\noindent Next we bound $\sum_{k=1}^T F_k$ and $\sum_{k=1}^T H_k$.
\begin{align*}
\sum_{k=1}^T F_k &= 2N \sum_{k=1}^T \alpha_k \max_j \|\delta^j_k\| + 2N\Delta \sum_{k=1}^T \alpha_k^2 \leq 2N \sum_{k=1}^T \alpha_k \max_j \|\delta^j_k\| + 2N \Delta (\log(T) + 1) \\
\sum_{k=1}^T H_k &\leq 2n(L+N/2+\Delta) \sum_{k=1}^T \alpha_k \max_j \|\delta^j_k\| + n[(L+\Delta)^2 + N \Delta] (\log(T)+1) 
\end{align*}

\noindent Use Lemma~\ref{Lem:AvgDisagreement1} to bound $\sum_{k=1}^T \alpha_k \max_j \|\delta^j_k\|$.
\begin{align*}
f(\bar{\widehat{x}}_T) - f^* \leq \frac{C_0 + C_1\log(T) + C_{2}\log(T-1)}{\sqrt{T}}
\end{align*}

\noindent We then use the Lipschitzness of $f(x)$ to arrive at,
\begin{align*}
f(\widehat{x}^j_T) - f^* &= f(\widehat{x}^j_T) - f(\bar{\widehat{x}}_T) + f(\bar{\widehat{x}}_T) - f^* \\
&\leq L \|\widehat{x}^j_T - \bar{\widehat{x}}_T\| + \frac{C_0 + (C_1+C_2)\log(T)}{2\sqrt{T}}\qquad \mbox{where~} C_1, C_2 = \MC{O}(\Delta^2) \\
&= \MC{O}\left((1+ \Delta^2 )\frac{\log(T)}{\sqrt{T}}\right) 
\end{align*} 
\hfill $\square$

\section{Privacy with Function Sharing}
\label{Sec:Privacy:Proofs}

In this section we consider a special case of Problem~\ref{Prob:DistLearn}. Assume that all objective functions $f_i(x)$ are polynomials with degree $\leq d$. Consequently, $f(x)$ is a polynomial with $\text{deg}(f(x)) \leq d$.
We now prove property (P2) in Theorem~\ref{Th:TPriv-2}; the proof of property (P1) can be obtained similarly. 
\comment{++++++++++++
\noindent (\textit{Necessity}) We prove this by contradiction. Let us assume that $\kappa(\MC{G}) = f$. Hence, if $f$ nodes are deleted (along with their edges) the graph becomes disconnected to form components $I_1$ and $I_2$, (see Figure~\ref{Fig:NecProof1}). 

Consider that $f$ nodes, that we just deleted, were compromised by an adversary. Due to the strong assumption on adversarial knowledge, the adversary can estimate the obfuscated function at each node $\widehat{f}_i(x)$.

Agents generate correlated noise function using Eq.~\ref{Eq:SMCFunctionGen}. All the perturbation functions $s^{j,i}(x)$ shared by nodes in $I_1$ (with agents outside $I_1$) are observed by the adversary. This follows from the fact that all the neighbors of $I_1$ (nodes outside $I_1$) are compromised. Hence, the true objective function for the the subset of nodes in $I_1$ can be easily estimated by using,
 \begin{small}
 \begin{align*}
 \sum_{l \in I_1} f_{l}(x) = \sum_{l \in I_1} \widehat{f}_{l} - \left(\sum_{j = 1, i\in I_1}^{j=f} s^{j,i}(x) - \sum_{j = 1, i\in I_1}^{j=f} s^{i,j}(x) \right).
 \end{align*}%
 \end{small}%
This gives us contradiction. If $\kappa(\MC{G}) = f$, then the privacy of $I_1$ can be broken. Hence, $\kappa(\MC{G})>f$ is necessary. 
\begin{figure}[!h]
 \begin{center}
 \centerline{\includegraphics[width=0.7\columnwidth]{NecessityProof-Pic2n}}
 \caption{(Necessity) An example of topology with $\kappa(\MC{G}) = f$.}
 \label{Fig:NecProof1}
 \end{center}
\end{figure}%

\noindent (\textit{Sufficiency})
+++++++++++++++++++++++}

~

\noindent \textbf{Proof of Theorem 2 (P2): }In particular, we use a constructive approach to show that given any execution, any feasible candidate for objective functions of nodes in any subset $\MC{I}\subset\MC{V}-\MC{A}$ is compatible with the adversary's observations.


Suppose that the actual local objective function of each agent $i$ in a given execution is $f_i(x)$.  Consider subset $\MC{I}\subset\MC{V}-\MC{A}$, and consider any feasible local objective function $g_i(x)$ for each $i\in\MC{I}$.
Now, for any $i\in\MC{A}$, let $g_i(x)=f_i(x)$ (adversary observes own objective functions). Also, since the functions
are polynomials of bounded degree, it should be easy to see that,
for each $i\in\MC{V}-\MC{A}-\MC{I}$, we can find local objective functions $g_i(x)$
such that $\sum_{i\in\MC{V}-\MC{A}}g_i(x)=\sum_{i\in\MC{V}-\MC{A}}f_i(x)$, for all $x\in\MC{X}$. Thus, for the given functions $g_i(x)$ for $i\in\MC{I}$, we have
found feasible local objective functions $g_i(x)$ for all agents such that -- i) the
local objective functions of compromised agents are identical to those in the actual
execution, and ii) the sum of objective functions of ``good'' nodes is preserved $\sum_{i\in\MC{V}-\MC{A}} g_i(x) = \sum_{i\in\MC{V}-\MC{A}}f_i(x)$. 

Recall that the {\em function sharing} algorithm adds noise functions
to obtain perturbed function $\widehat{f}_i(x)$ at each agent $i\in\MC{V}$.
In particular, agent $j$ sends to each neighboring
agent $i$ a noise function, say $s^{j,i}(x)$,
and subsequently computes $\widehat{f}_i(x)$ using the noise functions it sent
to neighbors and the noise functions received from the neighbors.

When the vertex connectivity of the graph is at least $f+1$ it is easy to show that,
for local objective functions 
$\{g_1(x),g_2(x),\cdots,g_n(x)\}$ defined above, each agent $j\in\MC{V}-\MC{A}$
can select noise functions, say $t^{j,i}(x)$ for each neighbor $i$, with the following properties:
\begin{itemize}
\item For each $j\in\MC{V}-\MC{A}$ and neighbor $i$ of $j$ such that $i\in\MC{A}$, $t^{j,i}(x)=s^{j,i}(x)$ for all $x\in\MC{X}$. That is,
the noise functions exchanged with agents in $\MC{A}$ are unchanged.
\item For each $j\in\MC{V}-\MC{A}$, 
 $$g_j(x) + \sum_{i \in \MC{N}_j} t^{i,j}(x) - \sum_{i \in \MC{N}_j} t^{j,i}(x) = \widehat{f}_j(x).$$
That is, the obfuscated function of each agent in $\MC{V}-\MC{A}$ remains
the same as that in the original execution.
\end{itemize}
Due to the above two properties, the observations of the adversary in the above
execution will be identical to those in the original execution. Thus,
the adversary cannot distinguish between the two executions. This, in turn, implies property (P2) in Theorem \ref{Th:TPriv-2}.
Property (P1) can be proved similarly. \hfill $\square$

\comment{++++++++++++++++++++++++++++++++
\begin{enumerate}
    \item Adversary knows objective functions of compromised nodes $f_i(x), \ \forall i \in \MC{A}$ (and correspondingly $h_i(x)$), and noise functions transmitted/received by compromised nodes $s^{i,j}(x) \ i \in \MC{A}$, $j \in \MC{A}$ (and correspondingly $t^{i,j}(x)$)
    \item Strong adversary can estimate $\widehat{f}_i(x)$, $\forall i$ by observing algorithm execution at every node
    \item Construct a spanning tree over $\MC{G}$
    \item Assign arbitrary polynomials (feasible, bounded degree) $s^{i,j}(x)$ (and correspondingly $t^{i,j}(x)$) to all edges $(i,j)$ that do not belong to the spanning tree 
    \item Compute noise functions for spanning tree edges (a unique solution exists since a spanning tree ensures that there are $n-1$ unknowns with $n-1$ equations) 
\end{enumerate}
This procedure provides us two separate problems $P_1$ and $P_2$ (and correspondingly noise functions $s^{i,j}(x)$ and $t^{i,j}(x)$) that give the same execution. Since the choice of noise polynomial for non spanning tree functions are infinite, for each such selection we get a new problem that corresponds to the same execution $\widehat{f}_i(x)$. 

Note, since the vertex connectivity of the graph is $> f$, every node and any strict subset of good nodes has $\geq f+1$ neighbors. Hence, setting the noise function in Step 1, still leaves us with freedom to choose (Step 4) or compute noise polynomial (Step 5) for atleast one more edge.

The inability of adversary to differentiate between infinitely many such problems $P_i \in \MC{F}$ ($i = 1,2,\ldots$) gives privacy. 
 
We get privacy from the fact that, an adversary can estimate $\widehat{f}_i(x)$ ($\forall i$) by observing the execution of \TT{DGD} algorithm invoked in Line 3 of Algorithm~\ref{Algo:PPDOP1}. However, adversary cannot correctly resolve the perturbation function $p_i(x)$ due to the $f$ vertex connectivity of graph. Hence, all possible problem instances in $\MC{F}$ are equally likely for the adversary.

Note, we are concerned with protecting the privacy of strict subset of ``good" nodes. Under strong adversarial knowledge, adversary can observe both the sum of all functions $\sum_j \widehat{f}_j(x)$ and the functions of compromised nodes. Adversary can easily estimate the sum of functions of \textit{all} ``good" nodes by computing $\sum_{(\ell \in \MC{V}-\MC{A})}f_{\ell}(x) = \sum_j \widehat{f}_j(x) - \sum_{\ell\in\MC{A}}f_\ell(x)$.
++++++++++++++++++++++++++++++}

\comment{++++++++++++++++++
\begin{corollary}
\TT{RSS-NB} and \TT{RSS-LB} with additional assumption of random perturbations being dependent on state $x^j_k$ are also privacy preserving in the sense of Definition~\ref{Def:Privacy}.
\end{corollary}
\begin{proof}
\textit{Privacy for \TT{RSS-NB}:} The gradient descent update for the \TT{RSS-NB} algorithm can be rewritten to incorporate the perturbation term into the gradient (Eq.~\ref{Eq:ProjGradPerspective}). $-e^j_k$, while $\sum_j e^j_k = 0$. The \TT{RSS-NB} algorithm simulates a scenario where erroneous gradients are used in \TT{DGD} algorithm. 
Now consider that the perturbations added to the state in Eq.~\ref{Eq:Perturb} are \textit{state dependent}, i.e. for the same $v^j_k$ we have the same error in every execution. Under this added assumption of \textit{state dependent randomness}, we can construct a function $p_j(x)$ such that $\nabla p_j(v^j_k) = -e^j_k \triangleq - \sum_{i=1}^n B_k[j,i] d^i_k$, $\forall k$. Adding $-e^j_k$ to the gradient is effectively perturbing the objective function with $p_j(x)$. $\sum_{j} e^j_k = 0$, $\forall k$, implies $\sum_{j} \nabla p_j(v^j_k) = 0$, $\forall v^j_k$. Hence, \TT{RSS-NB} and \TT{FS} are equivalent if the random perturbation are dependent on states. And we have privacy under $\kappa(\MC{G})>f$.  

The proof for \TT{RSS-LB} is similar to the proof for \TT{RSS-NB} and is excluded due to space constraints. 
\end{proof}


The privacy analysis can be extended to other machine learning problems, however, it is beyond the scope of this paper. We show via simulations that the algorithms in this paper provide high accuracy models.

+++++++++++++++++++++++++++++}

\section{Experimental Results}

We now provide some experimental results for \TT{RSS-NB} and \TT{RSS-LB} algorithms. We present two sets of experiments. First, we show that \TT{RSS-NB} and \TT{RSS-LB} correctly solve distributed optimization of polynomial objective functions. Next, we apply our algorithms in the context of machine learning for handwritten digit classification (using MNIST dataset) and document classification (using Reuters dataset). 

~
	
\noindent \textbf{Polynomial Optimization: } We solve polynomial optimization on a network of 5 agents that form
a cycle. The objective functions of the 5 agents are chosen as $f_1(x) = x^2$, $f_2(x) = x^4$, $f_3(x) = x^2+x^4$, $f_4(x) = x^2 + 0.5x^4$, and $f_5(x) = 0.5 x^2 + x^4$. The aggregate function $f(x) = 2.5(x^2+x^4)$. We consider $\MC{X} = [-30,30]$. Simulation results in Figure~\ref{fig:polyres} show that the two algorithms converge to the optimum $x^* = 0$ for two different values of $\Delta$. Large $\Delta$ results in larger perturbations and the convergence is slower. For smaller $\Delta$, as expected, the performance of both \TT{RSS-NB} and \TT{RSS-LB} is closer to \TT{DGD}.

\begin{figure}[!t]
    \begin{centering}
       \includegraphics[width=0.49\linewidth, keepaspectratio]{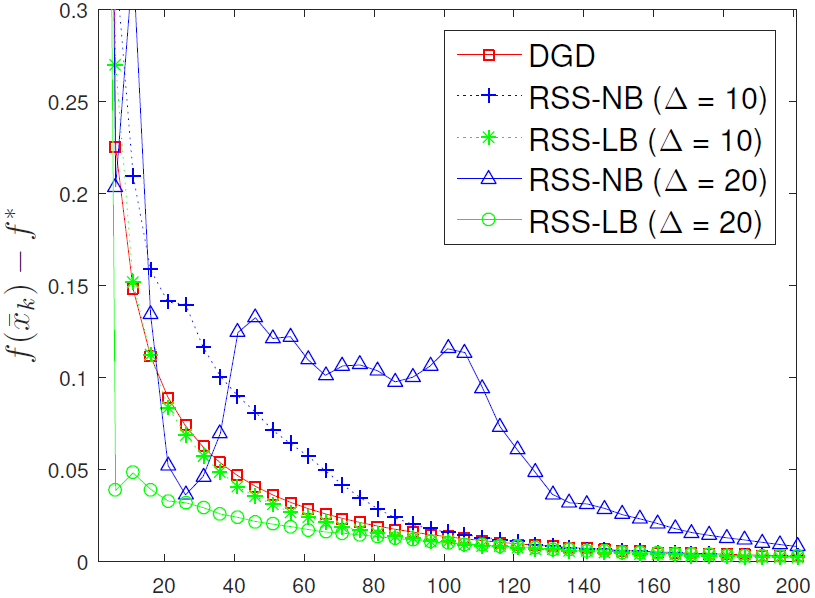}
       \caption{Function suboptimality v/s iterations.}
       \label{fig:polyres}
   \end{centering}
\end{figure}%

\begin{figure}[!t]
    \begin{centering}
       \includegraphics[width=1.0\linewidth, keepaspectratio]{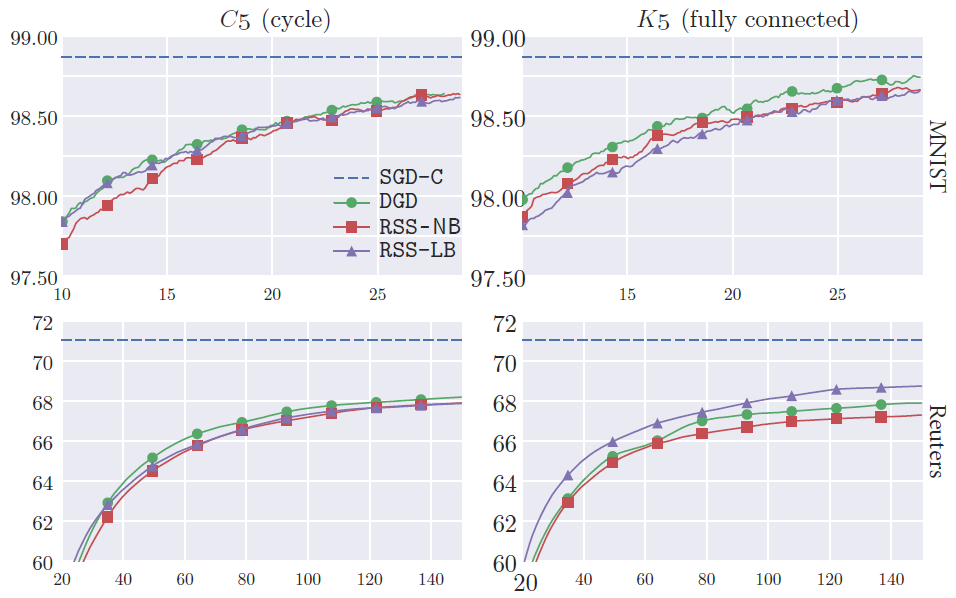}
       \caption{Accuracy for MNIST (top row) and Reuters (bottom row) with $C_5$ (left) and $K_5$ (right) topologies. $Y$-axis is the testing accuracy in \% and $X$-axis is epochs. \TT{SGD-C} is a centralized solution.}
       \label{fig:accuracy}
   \end{centering}
\end{figure}%

~

\noindent \textbf{Machine Learning: }
We consider two classification problems. We use a deep neural network \cite{shokri2015privacy} for digit recognition using the MNIST dataset \cite{lecun-mnisthandwrittendigit-2010} and regularized logistic regression for the Reuters dataset \cite{DBLP:journals/corr/Ludwig16}. We use two graph topologies: a cycle
of 5 agents (namely, $C_5$) and a complete graph of 5 agents (namely, $K_5$). 

Due to the high overhead of computing gradients on full datasets, we evaluate versions of the different algorithm that are
adapted to perform 
stochastic gradient descent (SGD) on minibatches of local, non-overlapping datasets, to solve the distributed machine learning problem. Also, by performing consensus only every 40 gradient steps of gradient descent, we decrease network overhead, while still retaining the accuracy. Figure~\ref{fig:accuracy} shows convergence results for \TT{DGD}, \TT{RSS-NB} and \TT{RSS-LB}, and a centralize algorithms \TT{SGD-C}, which demonstrate that our algorithms can achieve high accuracy despite the introduction of perturbances
in state estimates.

\begin{itemize}
\item \textbf{MNIST: }Our algorithms converge quickly despite the deep learning problem being non-convex and despite using stochastic gradient descent. 
\item
\textbf{Reuters: }We achieve testing accuracy comparable to a centralized solution \TT{SGD-C}. \TT{DGD} works best followed by \TT{RSS-NB} and \TT{RSS-LB} for cycle topology. 
\end{itemize}

\section{Conclusion}

In this paper, we develop and analyze iterative distributed optimization algorithms \TT{RSS-NB}, \TT{RSS-LB} and \TT{FS} that exploit {\em structured randomness} to improve privacy, while maintaining accuracy. We prove convergence and develop trade-off between convergence rate and the bound on perturbation. We provide
claims of privacy for the \TT{FS} algorithm, which is a special case of \TT{RSS-NB}. We apply versions of \TT{RSS-NB} and \TT{RSS-LB} to distributed machine learning, and evaluating their effectiveness for training with MNIST and Reuters datasets.

\bibliography{Central.bib}
\bibliographystyle{ieeetr}
\appendix

\section{Notation Summary}
Summary of symbols and constants used in the analysis is presented in Table~\ref{Tab:Notation}

\begin{table*}[!h]
  \centering
  \begin{tabular}{|c|l|}
  \hline
    Symbol & Description  \\ \hline
    $\MC{V}$ & Set of Agents \\ \hline
    $\MC{E}$ & Set of Communication Links (Edges) \\ \hline
    $\MC{G(V,E)}$ & Graph with Nodes $\MC{V}$ and Edges $\MC{E}$\\ \hline    
    $n$ & Number of Agents $n = |\MC{V}|$ \\ \hline
    $\MC{N}_j$ & Neighborhood set of agent $j$ \\ \hline    
    $f(x)$ & Objective function \\ \hline
    $f_i(x)$ & Objective function associated with Agent $i$ \\ \hline
    $D$ & Dimension of the optimization problem (size of vector $x$) \\ \hline
    $\MC{X}$ & Feasible Set (Model Parameters) \\ \hline
    $\MC{X}^*$ &Optimal Set (Model Parameters) \\ \hline
    $x^j_k$ & Estimate of iterate (state) with agent $j$ at time $k$ \\ \hline
    $\bar{x}_k$ & Average iterate (over the network) at time $k$ \\ \hline
    $\delta^j_k$ & Disagreement between agent $j$ and iterate-average, $\delta^j_k = x^j_k - \bar{x}_k$ \\ \hline
    $L$ & Gradient bound, $\|\nabla f_i(x)\| \leq L$ \\ \hline
    $N$ & Lipschitz constant for $\nabla f_i(x)$ \\ \hline
    $B_k$ & Doubly stochastic matrix \\ \hline
    $\rho$ & Smallest non-zero entry in matrix $B_k$ \\ \hline
    $\alpha_k$ & Learning step-size \\ \hline
    $d^j_k$ & Correlated random perturbation \\ \hline
    $w^j_k$ & Perturbed local model estimate \\ \hline
    $\Delta$ & Random perturbation bound, $\|d^j_k\| \leq \Delta$  \\ \hline
    $v^j_k$ & Fused (perturbed) model estimate, Eq.~\ref{Eq:InfoFusionN:v} \\ \hline
    $\hat{v}^j_k$ & Fused (true) model estimate, Eq.~\ref{Eq:InfFusALG2:v} \\ \hline
    $e^j_k$ & Fused perturbation, Eq.~\ref{Eq:InfFusALG2:v2} \\ \hline
    $\Phi(k,s)$ & Transition Matrix \\ \hline
    $\theta$ & Transition matrix convergence parameter, $\theta = (1 - \frac{\rho}{4n^2})^{-2}$ \\ \hline
    $\beta$ & Transition matrix convergence parameter, $\beta = (1 - \frac{\rho}{4n^2})$ \\ \hline    
    $\eta_k^2$ & Distance between iterate and optimum, $\eta_k^2 = \sum_{j=1}^n \|x^j_k - y\|^2$ \\ \hline 
    $\xi_k^2$ & Distance between fused iterate and optimum, $\xi_k^2 = \sum_{j=1}^n \|\hat{v}^j_k - y\|^2 $ \\ \hline
    $F_k$ & $F_k = \alpha_k N \left( \max_{j \in \MC{V}} \|\delta^j_k\| + \alpha_k \Delta\right)$  \\ \hline 
    $H_k$ & $H_k = 2 \alpha_k n \max_{j \in \MC{V}} \|\delta^j_k\| (L+N/2+\Delta) +\alpha_k^2 n \left[N\Delta +(L +\Delta)^2 \right]$      \\ \hline     
  \end{tabular}
  \caption{Parameters and Constants}
  \label{Tab:Notation}
\end{table*}

We will analyze both \TT{RSS-NB} and \TT{RSS-LB} simultaneously. Before we move on to the proofs, we first define key bounds on the error for both the algorithms. We first establish the boundedness of error term in Eq.~\ref{Eq:InfFusALG2:v2} (for \TT{RSS-NB}) and in Eq.~\ref{Eq:InfoFusionN:vLB} (for \TT{RSS-LB}). We also show that the error adds to zero over the network. These relationships will be critical for proving the convergence of our algorithms.  

\begin{itemize} [leftmargin=*]
\item \TT{RSS-NB}
\begin{itemize}
\item {\bf Boundedness ($\|e^j_k\| \leq \Delta$):} The noise perturbations $s^{i,j}_k$ are bounded by $\Delta/2n$. It follows that, $d^j_k = \sum_{i \in \MC{N}_j} s^{i,j}_k - \sum_{i \in \MC{N}_j} s^{j,i}_k$ will also be bounded, $\|d^j_k\| \leq \Delta$ since node $j$ may have at most $n-1$ neighbors. Following Eq.~\ref{Eq:InfFusALG2:v2}, we can show, $\|e^j_k\| = \|\sum_{i \in \MC{N}_j} B_k[j,i]d^i_k\| \leq \sum_{i \in \MC{N}_j} B_k[j,i] \|d^i_k\|$ since $B_k[j,i]$ is non-negative.	
\begin{align}
\|e^j_k\| \leq \sum_{i =1}^n B_k[j,i] \|d^i_k\| \leq \Delta \sum_{i =1}^n B_k[j,i] = \Delta. \label{Eq:NB-Bounded}
\end{align}
\item {\bf Aggregate Randomness ($\sum_j e^j_k = 0$):} We prove this using Eq.~\ref{Eq:InfFusALG2:v2}, followed by the fact that $B_k$ matrix is a doubly stochastic matrix. 
\begin{align}
\sum_{j=1}^n e^j_k = \sum_{j=1}^n \left( \sum_{i \in \MC{N}_j} B_k[j,i] d^i_k \right) = \sum_{i=1}^n \left( \sum_{j=1}^n B_k[j,i]\right) d^i_k = \sum_{i=1}^n d^i_k = 0. \qquad \qquad \text{(from Eq.~\ref{Eq:RSSNB-NoiseChar})} \label{Eq:NB-Agg}
\end{align}
\end{itemize}

\item \TT{RSS-LB}
\begin{itemize}
\item {\bf Boundedness ($\|e^j_k\| \leq \Delta$):} The noise perturbations $d^{i,j}_k$ are bounded by $\Delta$ (by construction) i.e. $\|d^{i,j}_k\| \leq \Delta$. Following Eq.~\ref{Eq:InfFusALG2:v2LB}, we have, $\|e^j_k\| = \|\sum_{i \in \MC{N}_j} B_k[j,i]d^{i,j}_k\| \leq \sum_{i \in \MC{N}_j} B_k[j,i] \|d^{i,j}_k\|$ since $B_k[j,i]$ is non-negative and $B_k$ is doubly stochastic.	
\begin{align}
\|e^j_k\| \leq \sum_{i =1}^n B_k[j,i] \|d^{i,j}_k\| \leq \Delta \sum_{i =1}^n B_k[j,i] = \Delta. \label{Eq:LB-Bounded}
\end{align}
\item {\bf Aggregate Randomness ($\sum_j e^j_k = 0$):} We prove this using Eq.~\ref{Eq:InfFusALG2:v2LB}, and Eq.~\ref{eq:lb}. 
\begin{align}
\sum_{j=1}^n e^j_k = \sum_{j=1}^n \left( \sum_{i \in \MC{N}_j} B_k[j,i] d^{i,j}_k \right) = 0. \qquad \qquad \text{(from Eq.~\ref{eq:lb})}\label{Eq:LB-Agg}
\end{align}
\end{itemize}
\end{itemize}

Next, we prove an important relationship between the iterate-average and average of ``true-state'' defined in Eq.~\ref{Eq:InfoFusionN:v} (for \TT{RSS-NB}) and in Eq.~\ref{Eq:InfoFusionN:vLB} (for \TT{RSS-LB}). 

\begin{align}
\bar{v}_k \triangleq \frac{1}{n} \sum_{j=1}^n \hat{v}^j_k &= \frac{1}{n} \sum_{j=1}^n \left( \sum_{i=1}^n B_k[j,i] x^i_k \right) \nonumber \\
&=\frac{1}{n} \sum_{i=1}^n \left( \sum_{j=1}^n B_k[j,i] \right) x^i_k      \nonumber \\
&= \frac{1}{n} \sum_{i=1}^n x^i_k  = \bar{x}_k  \label{Eq:IterAverPreserved} && \sum_{j=1}^n B_k[j,i] = 1, B_k \text{ is column stochastic} 
\end{align}

\noindent We have showed that the iterate-average stays preserved under convex averaging. This proof is adapted from \cite{nedic2009distributed}.

We define transition matrix, $\Phi(k,s)$, as the product of doubly stochastic weight matrices $B_k$, $$\Phi(k,s) = B_k B_{k-1} \ldots B_{s+1} B_s, \quad (\forall \ k \geq s > 0).$$ 

We first note two important results from literature. The first result relates to convergence of non-negative sequences (Lemma~\ref{Lem:Ram}) and the second result describes the linear convergence of transition matrix to $\frac{1}{n} \mathbb{1}\mathbb{1}^T$ (Lemma~\ref{Lem:NedichCorollary}).

\setcounter{lem_count}{2}
\begin{lemma} [Lemma 3.1, \cite{ram2010distributed}] \label{Lem:Ram}
Let $\{\zeta_k\}$ be a non-negative scalar sequence. If $\sum_{k=0}^\infty \zeta_k < \infty$ and $0 < \beta < 1$, then $\sum_{k=0}^\infty \left( \sum_{j=0}^k \beta^{k-j} \zeta_j \right) < \infty$.
\end{lemma}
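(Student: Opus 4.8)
The plan is to recognize the inner double sum as a discrete convolution of the summable sequence $\{\zeta_k\}$ with the geometric sequence $\{\beta^k\}$, and to exploit the non-negativity of every term in order to re-sum freely. Writing the target quantity as
\[
S = \sum_{k=0}^\infty \sum_{j=0}^k \beta^{k-j} \zeta_j,
\]
I observe that this is a sum of the non-negative terms $\beta^{k-j}\zeta_j$ ranging over the triangular index set $\{(j,k) : 0 \le j \le k\}$. The first step is to interchange the two summations, grouping by $j$ rather than by $k$, so that the outer index becomes $j$ and the inner sum runs over all $k \ge j$.

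After the interchange, for each fixed $j$ I would evaluate the inner geometric sum by the substitution $m = k - j$, which gives $\sum_{k=j}^\infty \beta^{k-j} = \sum_{m=0}^\infty \beta^m = 1/(1-\beta)$, finite precisely because $0 < \beta < 1$. Factoring this constant out of the outer sum yields $S = \frac{1}{1-\beta}\sum_{j=0}^\infty \zeta_j$, and the hypothesis $\sum_{j=0}^\infty \zeta_j < \infty$ then immediately gives $S < \infty$, completing the argument.

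The only point that genuinely requires care is the legitimacy of the interchange of summation order, which I expect to be the ``hard part'' only in the formal sense rather than the computational one. Here it is painless because all terms are non-negative, so Tonelli's theorem for double series applies unconditionally and no separate absolute-convergence hypothesis is needed before the swap; this is exactly why the lemma holds for an arbitrary summable non-negative sequence. To keep the argument fully elementary and avoid invoking Tonelli by name, I would instead bound the finite partial sums directly: for every $K$,
\[
\sum_{k=0}^K \sum_{j=0}^k \beta^{k-j}\zeta_j \;\le\; \frac{1}{1-\beta}\sum_{j=0}^K \zeta_j \;\le\; \frac{1}{1-\beta}\sum_{j=0}^\infty \zeta_j,
\]
and then let $K \to \infty$, using that the partial sums of a non-negative series are monotone and hence converge to their supremum. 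Either route establishes the stated summability with the explicit constant $1/(1-\beta)$.
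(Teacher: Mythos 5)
Your proof is correct. Note that the paper itself offers no proof of this lemma: it is imported verbatim as Lemma 3.1 of \cite{ram2010distributed}, so there is nothing internal to compare against; your argument (interchange the order of summation over the triangular index set, justified either by Tonelli for non-negative double series or by your elementary monotone partial-sum bound, then sum the geometric series to extract the constant $1/(1-\beta)$) is the standard proof and matches the one in the cited reference. The explicit bound $\sum_{k=0}^\infty \sum_{j=0}^k \beta^{k-j}\zeta_j \leq \frac{1}{1-\beta}\sum_{j=0}^\infty \zeta_j$ is a small bonus over the bare summability claim, and your partial-sum route is exactly the right way to keep the swap elementary.
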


\begin{lemma} [Corollary 1, \cite{nedic2008distributed}] \label{Lem:NedichCorollary}
Let the graph $\MC{G}$ be connected, then,
\begin{enumerate}[leftmargin=*,topsep=0pt,itemsep=0pt]
\item $\lim_{k \rightarrow \infty} \Phi(k,s) = \frac{1}{n} \mathbb{1}\mathbb{1}^T$ for all $s > 0$.
\item $|\Phi(k,s)[i,j] - \frac{1}{n}| \leq \theta \beta^{k-s+1}$ for all $k \geq s > 0$, where $\theta = (1 - \frac{\rho}{4n^2})^{-2}$ and $\beta = (1 - \frac{\rho}{4n^2})$.
\end{enumerate}
\end{lemma}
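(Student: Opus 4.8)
The plan is to reconstruct the standard ergodicity argument for backward products of doubly stochastic matrices whose supports all coincide with the fixed connected graph $\MC{G}$. First I would record the two structural facts that make $\frac{1}{n}\mathbb{1}\mathbb{1}^T$ the correct limit: since each $B_k$ is doubly stochastic we have $B_k\mathbb{1}=\mathbb{1}$ and $\mathbb{1}^T B_k=\mathbb{1}^T$, so $\frac{1}{n}\mathbb{1}\mathbb{1}^T$ is a left- and right-fixed point of every $B_k$ and hence of every product $\Phi(k,s)$. Moreover, for each fixed source column $j$ the vector $y_t = \Phi(t,s)e_j$ evolves by the averaging recursion $y_t = B_t y_{t-1}$ with $y_{s-1}=e_j$; column stochasticity preserves the sum $\mathbb{1}^T y_t = 1$, while row stochasticity and nonnegativity make $\max_i y_t[i]$ nonincreasing and $\min_i y_t[i]$ nondecreasing in $t$. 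Consequently it suffices to show that the spread $V_t \triangleq \max_i y_t[i]-\min_i y_t[i]$ tends to $0$ geometrically, since the preserved unit sum then forces $y_t \to \frac{1}{n}\mathbb{1}$, which is exactly the $j$-th column of $\frac{1}{n}\mathbb{1}\mathbb{1}^T$.

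The heart of the argument is the geometric contraction of $V_t$, and this is where the connectivity of $\MC{G}$ and the lower bound $\rho$ on the nonzero entries of $B_k$ enter. The Metropolis construction guarantees that every $B_k$ has strictly positive diagonal and that each edge of $\MC{G}$ carries weight at least $\rho$; combined with the fact that $\MC{G}$ is connected with diameter at most $n-1$, this ensures that over any window of length $n-1$ the value at any coordinate mixes with the value at every other coordinate. I would make this precise with a coupling interpretation: read $\Phi(k,s)[i,j]$ as the probability that a time-inhomogeneous backward random walk governed by the matrices $B_t^\top$, started at state $i$ at time $k$, sits at $j$ at time $s$. Two independent such walks, once they occupy adjacent vertices, coalesce in one step with probability at least of order $\rho^2$, since each can move to a common vertex with probability at least $\rho$ (the diagonal term for one walk, the edge term for the other); connectivity guarantees they reach adjacent vertices within $n-1$ steps, so within each window of length $n-1$ they couple with probability bounded below by a constant depending only on $\rho$ and $n$. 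A standard coupling estimate then bounds the total-variation distance of each row of $\Phi$ to $\frac{1}{n}\mathbb{1}^\top$ by a product of per-window contraction factors, yielding geometric decay of $V_t$ and the entrywise bound $|\Phi(k,s)[i,j]-\frac{1}{n}|\le \theta\beta^{k-s+1}$.

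The main obstacle is not the qualitative convergence but pinning down the explicit constants $\beta = 1-\tfrac{\rho}{4n^2}$ and $\theta = (1-\tfrac{\rho}{4n^2})^{-2}$. Converting a statement of the form ``contracts by a fixed factor over each window of length $n-1$'' into a clean per-step rate requires carefully accounting for the window length and the coupling probability so that the compounded factor collapses to the single exponent $k-s+1$; the prefactor $\theta$ is precisely the correction absorbing the boundary steps of the first, possibly incomplete, window, which is why it appears as $\beta^{-2}$. In practice I would lift the explicit constant directly from the cited result \cite{nedic2008distributed} rather than re-derive the $4n^2$ denominator, since that constant is a deliberately crude bound obtained by lower-bounding the smallest positive entry of a product of $n-1$ consecutive matrices and relating it to a one-step contraction of the spread. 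Finally, part (1) follows from part (2) by letting $k\to\infty$ with $s$ fixed: because $\beta<1$, the bound $\theta\beta^{k-s+1}\to 0$ drives every entry of $\Phi(k,s)$ to $\frac{1}{n}$.
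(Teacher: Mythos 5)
First, a point of calibration: the paper contains no proof of this lemma at all. It is imported verbatim as Corollary~1 of \cite{nedic2008distributed} (the paper's only ``argument'' is the citation), so your attempt is effectively being compared against the source's proof, not against anything in this manuscript. Your qualitative skeleton is the standard one and is sound: double stochasticity makes $\frac{1}{n}\mathbb{1}\mathbb{1}^T$ a two-sided fixed point of every $B_k$ and hence of $\Phi(k,s)$; writing the $j$-th column as $y_t=\Phi(t,s)e_j$ with $y_t=B_t y_{t-1}$, row stochasticity makes $\max_i y_t[i]$ nonincreasing and $\min_i y_t[i]$ nondecreasing while column stochasticity preserves $\mathbb{1}^T y_t=1$, so geometric decay of the spread forces $y_t\to\frac{1}{n}\mathbb{1}$; and part~(1) does follow from part~(2) by letting $k\to\infty$. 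As a citation-backed justification your write-up is therefore acceptable in exactly the same way the paper's is.

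As a self-contained proof of the lemma \emph{as stated}, however, there are two genuine quantitative gaps, and the statement here is precisely quantitative. (i)~In the coupling step, ``connectivity guarantees they reach adjacent vertices within $n-1$ steps'' is false for two independent walks: connectivity guarantees only that suitable paths \emph{exist}, so the honest estimate is that with probability at least $\rho^{\Theta(n)}$ both walks follow prescribed paths to a common vertex within a window, giving a per-window contraction factor of the form $1-\rho^{\Theta(n)}$. Repackaging a window-of-length-$(n-1)$ contraction into a per-step rate yields roughly $1-\rho^{\Theta(n)}/n$, which for small $\rho$ is far weaker than the stated $\beta=1-\frac{\rho}{4n^2}$; no bookkeeping of incomplete windows can recover a polynomial-in-$n$ exponent from an exponentially small coupling probability. (ii)~Your account of where the constant comes from is also inaccurate: lower-bounding the smallest positive entry of a product of $n-1$ consecutive matrices is the Dobrushin-type route used elsewhere in this literature, and it likewise produces $1-\rho^{n-1}$-flavored factors. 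The constant $1-\frac{\rho}{4n^2}$ in \cite{nedic2008distributed} instead comes from a \emph{per-step} contraction of a quadratic Lyapunov function: using double stochasticity and connectivity of the single-step graph, the sample variance $\sum_i \bigl(y_t[i]-\frac{1}{n}\bigr)^2$ decreases by a factor polynomial in $1/n$ at every iteration, and the entrywise bound is then obtained by comparing $\bigl|y_k[i]-\frac{1}{n}\bigr|$ with the square root of the variance; that conversion, together with the initialization of the variance, is what supplies the prefactor $\theta=\beta^{-2}$, rather than boundary steps of a first incomplete window. So your plan proves the qualitative statement and \emph{some} geometric rate, but the specific $(\theta,\beta)$ in the lemma are unobtainable by the route you sketch, and since you ultimately defer to the citation for them, the constants remain unproved in your proposal just as they do in the paper.
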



\noindent The well known non-expansive property (cf.~\cite{bertsekas2003convex}) of Euclidean projection onto a non-empty, closed, convex set $\mathcal{X}$, is represented by the following inequality, $\forall \; x, y \in \mathbb{R}^D$, 
\begin{align}
    \| \mathcal{P}_\mathcal{X}[x] - \mathcal{P}_\mathcal{X}[y] \| \leq \|x - y\|. \label{Eq:EUCPROJINEQUAL}
\end{align}

We state the deterministic version of a known result on the convergence of non-negative almost supermartinagales. 

\begin{lemma} {\normalfont \cite{robbins1985convergence}}. \label{Lem:RobSiegConv}
Let $\{F_k\}$, $\{E_k\}, \{G_k\}$ and $\{H_k\}$, be non-negative, real sequences. Assume that $\sum_{k=0}^\infty F_k < \infty$, and $\sum_{k=0}^\infty H_k < \infty$ and $$E_{k+1} \leq  (1+F_k) E_k - G_k + H_k.$$ Then, the sequence \{$E_k$\} converges to a non-negative real number and $\sum_{k=0}^\infty G_k < \infty$.
\end{lemma}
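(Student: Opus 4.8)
The plan is to reduce the perturbed recursion to a genuinely monotone sequence and then invoke the elementary fact that a bounded monotone sequence converges. The obstruction to monotonicity is twofold: the multiplicative factor $(1+F_k)$ and the additive term $H_k$. Both must be absorbed before the supermartingale structure becomes transparent, and each has its own natural neutralizer.

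First I would exploit $\sum_k F_k < \infty$ to deduce that the partial products $P_k = \prod_{j=0}^{k-1}(1+F_j)$ (with the empty product $P_0 = 1$) form a non-decreasing sequence converging to a finite limit $P_\infty \in [1,\infty)$; this is the standard fact that $\sum_k F_k < \infty$ implies convergence of $\prod_k(1+F_k)$. Dividing the hypothesized inequality $E_{k+1} \leq (1+F_k)E_k - G_k + H_k$ by $P_{k+1} = (1+F_k)P_k$ and setting $\tilde E_k = E_k/P_k$, $\tilde G_k = G_k/P_{k+1}$, and $\tilde H_k = H_k/P_{k+1}$, I obtain the cleaner relation $\tilde E_{k+1} \leq \tilde E_k - \tilde G_k + \tilde H_k$, in which the multiplicative factor has disappeared. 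Since $1 \leq P_k \leq P_\infty$, the normalized terms satisfy $\tilde H_k \leq H_k$, so $\sum_k \tilde H_k < \infty$ as well.

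Next I would remove the remaining additive noise by shifting with the tail of the $\tilde H$-series. Define $W_k = \tilde E_k + \sum_{j=k}^{\infty} \tilde H_j$, where the tail sum is finite and tends to $0$ because $\sum_j \tilde H_j < \infty$. A one-line computation gives $W_{k+1} \leq W_k - \tilde G_k \leq W_k$, so $\{W_k\}$ is non-increasing; it is also non-negative, since $\tilde E_k \geq 0$ and $\tilde H_j \geq 0$. A non-increasing sequence bounded below converges to some $W_\infty \geq 0$. Because the tail sum vanishes, $\tilde E_k = W_k - \sum_{j \geq k} \tilde H_j \to W_\infty$, whence $E_k = P_k \tilde E_k \to P_\infty W_\infty$, a finite non-negative number, which establishes convergence of $\{E_k\}$. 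Finally, telescoping $W_{k+1} \leq W_k - \tilde G_k$ over $k = 0,\dots,N$ yields $\sum_{k=0}^{N} \tilde G_k \leq W_0 - W_{N+1} \leq W_0$, so $\sum_k \tilde G_k < \infty$, and since $\tilde G_k \geq G_k/P_\infty$, also $\sum_k G_k < \infty$.

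The main obstacle is conceptual rather than computational: recognizing that the factor $(1+F_k)$ should be neutralized by normalizing against the convergent product $P_k$, and that the $H_k$ term should be neutralized by folding in its own tail to form $W_k$. Once these two transformations are in place, the argument collapses to the monotone convergence theorem together with a telescoping sum, and every remaining step is routine.
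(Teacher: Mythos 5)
Your proof is correct, and it is worth noting that the paper itself offers no proof of this lemma at all: it is imported verbatim from Robbins and Siegmund \cite{robbins1985convergence}, where the original statement is a probabilistic theorem on the convergence of non-negative almost supermartingales, proved via martingale convergence arguments; the paper simply uses the deterministic specialization. Your argument therefore takes a genuinely different (and more elementary) route than the cited source: you neutralize the multiplicative factor by normalizing against the convergent product $P_k = \prod_{j<k}(1+F_j)$, which is finite precisely because $\sum_k F_k < \infty$ with $F_k \geq 0$, and you neutralize the additive noise by folding in the tail $\sum_{j\geq k} \tilde H_j$ to produce the non-increasing, non-negative sequence $W_k$. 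Every step checks out: the division by $P_{k+1} = (1+F_k)P_k$ yields $\tilde E_{k+1} \leq \tilde E_k - \tilde G_k + \tilde H_k$ exactly; the bounds $1 \leq P_k \leq P_\infty$ justify both $\sum_k \tilde H_k \leq \sum_k H_k < \infty$ and the final transfer $\sum_k G_k \leq P_\infty \sum_k \tilde G_k < \infty$; and since the tail of a convergent series vanishes, $\tilde E_k \to W_\infty$ and hence $E_k = P_k \tilde E_k \to P_\infty W_\infty \geq 0$. What your approach buys is a fully self-contained, measure-theory-free proof of exactly the deterministic statement the paper needs, resting only on the monotone convergence of bounded sequences and a telescoping sum, whereas citing Robbins--Siegmund buys brevity at the cost of invoking machinery (supermartingale convergence) far stronger than required here.
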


The proofs in this report follow the structure similar to \cite{ram2010distributed}, with the key difference being in proof for Lemma~\ref{Lem:IterateConvRelation} and Theorem~\ref{Th:ConvPPDOP2}.

\subsection{Proof of Lemma~\ref{Lem:AvgDisagreement1} (Disagreement Lemma)} \label{Sec:Appendix-L1Proof}
\begin{proof}
Define for all $j \in \mathcal{V}$ and all $k$,
\begin{align}
    z^j_{k+1} &= x^j_{k+1} - \sum_{i=1}^n B_k[j,i] x^i_k \\
    \implies x^j_{k+1} &=  z^j_{k+1} + \sum_{i=1}^n B_k[j,i] x^i_k
\end{align}
We then unroll the iterations to get $x^j_{k+1}$ as a function of $z^j_{k+1}$, and $x^i_k$ and doubly stochastic weight matrix at current and previous iteration. 
\begin{align*}
x^j_{k+1} &= z^j_{k+1} + \sum_{i=1}^n \left[ B_k[j,i] \left( z^i_k + \sum_{l=1}^n B_{k-1}[i,l] x^l_{k-1}\right) \right]
\end{align*}
We perform the above mentioned unrolling successively and use the definition of transition matrix $\Phi(k,s)$, 
\begin{align}
    x^j_{k+1} &= z^j_{k+1} + \sum_{i=1}^n \Phi(k,1)[j,i] x^i_1 + \sum_{l=2}^k \left[\sum_{i=1}^n \Phi(k,l)[j,i] z^i_{l}\right]. \label{Eq:ItAv1}
\end{align}%
\noindent Note that $\Phi(1,1) = B_1$. We verify the expression for $k = 1$, and we get the relationship $x^j_{2} = z^j_{2} +\sum_{i=1}^n \Phi(1,1)[j,i] x^i_1$. 

We can write the relation for iterate average, $\bar{x}_k$, and use doubly stochastic nature of $B_k$ to get,
\begin{small}
 \begin{align}
    \bar{x}_{k+1} = \frac{1}{n} \sum_{j=1}^n x^j_{k+1} &= \frac{1}{n} \sum_{j=1}^n \left( \sum_{i=1}^n B_k[j,i] x^i_k + z^j_{k+1} \right) = \frac{1}{n} \left( \sum_{i=1}^n \left( \sum_{j=1}^n B_k[j,i]\right) x^i_k + \sum_{j=1}^n z^j_{k+1}\right), \nonumber \\
    &= \bar{x}_k + \frac{1}{n} \sum_{j=1}^n z^j_{k+1} = \bar{x}_1 + \frac{1}{n} \sum_{l=2}^{k+1} \sum_{j=1}^n z^j_{l} \label{Eq:ItAv2}.
\end{align}
\end{small}

\noindent Using relations for $\bar{x}_{k+1}$ (Eq.~\ref{Eq:ItAv2}) and $x^j_{k+1}$ (Eq.~\ref{Eq:ItAv1}) to get an expression for the disagreement. We further use the property of norm $\|\sum a \| \leq \sum \|a\|, \; \forall a$, to get,
\begin{align}
    \|x^j_{k+1} - \bar{x}_{k+1}\| &\leq \sum_{i=1}^n \left\vert \frac{1}{n} - \Phi(k,1)[j,i] \right\vert \|x^i_1\| + \sum_{l=2}^{k} \sum_{j=1}^n \left \vert \frac{1}{n} - \Phi(k,l)[j,i]\right\vert \|z^j_l\| + \|z^j_{k+1}\| + \frac{1}{n} \sum_{j=1}^n \|z^j_{k+1}\|. \label{Eq:ItAv3}
\end{align}

\noindent We use Lemma~\ref{Lem:NedichCorollary} to bound terms of type $\left\vert \frac{1}{n} - \Phi(k,l)[j,i] \right\vert$ and $\max_{i\in\MC{V}} \|x^i_1\|$ to bound $\|x^i_1\|$, to get, 
\begin{align}
    \|x^j_{k+1} - \bar{x}_{k+1}\| &\leq n \theta \beta^{k} \max_{i\in\MC{V}} \|x^i_1\| + \theta \sum_{l=2}^k \beta^{k+1-l} \sum_{i=1}^n \|z^i_l\| + \|z^j_{k+1}\| + \frac{1}{n} \sum_{j=1}^n \|z^j_{k+1}\|. \label{Eq:ItAv4}
\end{align}
    
\noindent We now bound each of the norms $\|z^j_k\|$, using the fact that $v^j_k \in \mathcal{X}$, the non-expansive property of projection operator (see Eq.~\ref{Eq:EUCPROJINEQUAL}), and Assumption~\ref{Asmp:GradientCond},
\begin{align}
    \|z^j_{k+1}\| &= \|\mathcal{P}_{\mathcal{X}} [\hat{v}^j_k - \alpha_k \left( \nabla {f}_j(v^j_k) - e^j_k \right)] - \hat{v}^j_k\| && \text{Projected Descent for \TT{RSS-NB} and \TT{RSS-LB}, Eq.~\ref{Eq:ProjGradPerspective}}\nonumber \\
    &\leq \alpha_k \|\nabla {f}_j(v^j_k) - e^j_k\| \nonumber \\
    &\leq \alpha_k \left(\|\nabla {f}_j(v^j_k)\| + \| e^j_k\|\right) \nonumber && \text{Triangle inequality}\\
    &\leq \alpha_k \left(L + \Delta \right) && \|e^j_k\|\leq \Delta (\text{Eq.~\ref{Eq:NB-Bounded} for \TT{RSS-NB} and Eq.~\ref{Eq:LB-Bounded} for \TT{RSS-LB}}) \label{Eq:DisLemma1}
\end{align}

\noindent Note that we used the boundedness of perturbation $e^j_k$ to obtain the above relation. We show the boundedness earlier in the report: Eq:~\ref{Eq:NB-Bounded} for \TT{RSS-NB} and Eq.~\ref{Eq:LB-Bounded} for \TT{RSS-LB}. Next, recall the definition of $\delta^j_k$ from Eq.~\ref{Eq:AvgIterateDisagree2}. Combining Eq.~\ref{Eq:ItAv4} and Eq.~\ref{Eq:DisLemma1},
\begin{align}
    \max_{j \in \MC{V}} \|\delta^j_{k+1}\| = \max_{j \in \MC{V}} \|x^j_{k+1} - \bar{x}_{k+1}\|  & \leq n \theta \beta^{k} \max_{i\in\MC{V}} \|x^i_1\| + n \theta (L+\Delta) \sum_{l=2}^k \beta^{k+1-l} \alpha_{l-1} + 2 \alpha_k \left( L + \Delta \right) \nonumber
\end{align}
\end{proof}

\subsection{Proof of Lemma~\ref{Lem:IterateConvRelation} (Iterate Lemma)}  \label{Sec:Appendix-L2Proof}

\begin{proof}

Recall the relationships established in the appendix for our algorithms \TT{RSS-NB}  and \TT{RSS-LB}. 

\begin{itemize}
\item \TT{RSS-NB} (Eq.~\ref{Eq:NB-Bounded}, \ref{Eq:NB-Agg})
\begin{itemize}
\item { Boundedness $\|e^j_k\| \leq \Delta$} 
\item { Aggregate Randomness $\sum_j e^j_k = 0$} 
\end{itemize}

\item \TT{RSS-LB} (Eq.~\ref{Eq:LB-Bounded}, \ref{Eq:LB-Agg})
\begin{itemize}
\item { Boundedness $\|e^j_k\| \leq \Delta$} 
\item { Aggregate Randomness $\sum_j e^j_k = 0$} 
\end{itemize}
\end{itemize}

\noindent In summary, for both \TT{RSS-NB} and \TT{RSS-LB}, we have showed that $\|e^j_k\| \leq \Delta$ and $\sum_j e^j_k = 0$. These relationships are used for proving Lemma~\ref{Lem:IterateConvRelation} and thereby proving convergence. They allow us to perform unified analysis of both our algorithms despite the inherent differences ioin \TT{RSS-NB} and \TT{RSS-LB}.

~

To simplify analysis, we adopt the following notation, 
\begin{align}
\eta_k^2 &= \sum_{j=1}^n \|x^j_k - y\|^2, \label{Eq:BDH1}\\
\xi_k^2 &= \sum_{j=1}^n \|\hat{v}^j_k - y\|^2. \label{Eq:BDH2}
\end{align}
Note, $\eta_k$ and $\xi_k$ are both functions of $y$, however for simplicity we do not explicitly show this dependence.


Note, $\mathcal{P}_\mathcal{X}[y] = y$ for all $y \in \mathcal{X}$. Using the non-expansive property of the projection operator (Eq.~\ref{Eq:EUCPROJINEQUAL}), and the projected gradient descent expression in Eq.~\ref{Eq:ProjGradPerspective} we get, 
\begin{align}
    \|x^j_{k+1} - y\|^2 &= \|\mathcal{P}_{\mathcal{X}}\left[\hat{v}^j_{k} - \alpha_k \left(\nabla {f}_j(v^j_{k}) - e^j_k \right) \right] - y\|^2   \nonumber \\ 
    & \leq \|\hat{v}^j_{k} - \alpha_k \left(\nabla {f}_j(v^j_{k}) - e^j_k \right) - y\|^2 \label{Eq:ItLemma1} \\
    & = \|\hat{v}^j_k - y\|^2 + \alpha_k^2 \|\nabla {f}_j(v^j_k) - e^j_k\|^2 - 2 \alpha_k \left( \nabla {f}_j(v^j_k) - e^j_k \right)^T (\hat{v}^j_k - y) \label{Eq:PGRelation1}
\end{align}%

\noindent Now we add the inequalities Eq.~\ref{Eq:PGRelation1} for all agents $j = 1, 2, \ldots, n$ followed by using expressions for $\eta_k$ and $\xi_k$ (Eq.~\ref{Eq:BDH1}, Eq.~\ref{Eq:BDH2}). Next we use the boundedness of gradients (Assumption~\ref{Asmp:GradientCond}) and perturbations (Eq.~\ref{Eq:NB-Bounded} for \TT{RSS-NB} or Eq.~\ref{Eq:LB-Bounded} for \TT{RSS-LB}), to get the following inequality,
\begin{align}
	\eta_{k+1}^2 &\leq \xi_k^2 + \sum_{j=1}^n \alpha_k^2 \|\nabla {f}_j(v^j_k) - e^j_k\|^2 - 2 \alpha_k \sum_{j=1}^n \left( \nabla {f}_j(v^j_k) - e^j_k \right)^T (\hat{v}^j_k - y) \nonumber \\
				 &\leq \xi_k^2 + \sum_{j=1}^n \alpha_k^2 (\|\nabla {f}_j(v^j_k)\| + \|e^j_k\|)^2 - 2 \alpha_k \sum_{j=1}^n \left( \nabla {f}_j(v^j_k) - e^j_k \right)^T (\hat{v}^j_k - y) \qquad \text{Triangle Inequality}\nonumber \\
    \eta_{k+1}^2 &\leq \xi_k^2 + \alpha_k^2 n (L +\Delta)^2 - 2 \alpha_k \sum_{j=1}^n (\nabla {f}_j(v^j_k) - e^j_k)^T (\hat{v}^j_k - y)  \quad \text{Assumption~\ref{Asmp:GradientCond} and Eq.~\ref{Eq:NB-Bounded} or Eq.~\ref{Eq:LB-Bounded}} \label{Eq:IterateRelation1}
\end{align}


We use consensus relationship used for information fusion (Eq.~\ref{Eq:InfFusALG2:v} for \TT{RSS-NB}, or Eq.~\ref{Eq:InfFusALG2:vLB} for \TT{RSS-LB}). 
We know that in D-dimension the consensus step can be rewritten using Kronecker product of D-dimension identity matrix ($I_D$) and the doubly stochastic weight matrix ($B_k$) \cite{fax2004information}. Consider the following notation of vectors. We use bold font to denote a vector that is stacked by its coordinates. As an example, consider three vectors in $\mathbb{R}^3$ given by ${a} = [a_x, \ a_y, \ a_z]^T$, ${b} = [b_x, \ b_y, \ b_z]^T$, ${c} = [c_x, \ c_y, \ c_z]^T$. Let $\mathbf{a}$ be a vector of $a$, $b$ and $c$ stacked by coordinates, then it is defined as $\mathbf{{a}} = [a_x, \ b_x, \ c_x, \ a_y, \ b_y, \ c_y, \ a_z, \ b_z, \ c_z]^T$. Similarly we can write stacked model parameter vector as, $\mathbf{{x}}_{k} = [x^1_{k}[1], x^2_{k}[1], \ldots, x^n_{k}[1], x^1_{k}[2], x^2_{k}[2], \ldots, x^n_{k}[2], \ldots, x^1_{k}[D], \ldots, x^n_{k}[D]]^T.$ Next, we write the consensus term using the new notation and Kronecker products and compare norms of both sides (2-norm),%
\begin{align}    
	\mathbf{\hat{v}}_{k} &= (I_D \otimes B_k) \mathbf{x}_{k}  \label{Eq:E5} \\
    \mathbf{\hat{v}}_{k} - \mathbf{y} &= (I_D \otimes B_k) (\mathbf{x}_{k} - \mathbf{y}) \nonumber \\
    \|\mathbf{\hat{v}}_{k} - \mathbf{y}\|_2^2 &= \|(I_D \otimes B_k) (\mathbf{x}_{k} - \mathbf{y})\|_2^2  \nonumber \\
    &\leq \|(I_D \otimes B_k)\|_2^2 \|(\mathbf{x}_{k} - \mathbf{y})\|_2^2
\end{align}

\noindent We use the property of eigenvalues of Kronecker product of matrices. 
The eigenvalues of $I_D \otimes B_k$ are essentially $D$ copies of eigenvalues of $B_k$. Since $B_k$ is a doubly stochastic matrix, its eigenvalues are upper bounded by 1. Recall that $\|A\|_2 = \sqrt{\lambda_{\max} (A^\dagger A)}$ where $A^\dagger$ represents the conjugate transpose of matrix $A$ and $\lambda_{\max}$ represents the maximum eigenvalue. Observe that $I_D \otimes B_k$ is a doubly stochastic matrix and $(I_D \otimes B_k)^\dagger(I_D \otimes B_k)$ is also doubly stochastic matrix since product of two doubly stochastic matrices is also doubly stochastic. Clearly, $\|(I_D \otimes B_k)\|_2^2 = \lambda_{\max}((I_D \otimes B_k)^\dagger(I_D \otimes B_k)) \leq 1$.
 \footnote{An alternate way to prove this inequality would be to follow the same process used to prove Eq.~\ref{Eq:AsideEq1} except that we start with squared terms and use the doubly-stochasticity of $B_k$. Detailed derivation in Appendix~\ref{Sec:Appendix-ConProof}}
\begin{align}
    \xi_k^2 = \|\mathbf{\hat{v}}_{k} - \mathbf{y}\|_2^2 \leq \|(\mathbf{x}_{k} - \mathbf{y})\|_2^2 = \eta_k^2 \label{Eq:E6}
\end{align}

\noindent Merging the inequalities in Eq.~\ref{Eq:E6} and Eq.~\ref{Eq:IterateRelation1}, we get,
\begin{align}
    \eta_{k+1}^2 &\leq \eta_k^2 + \alpha_k^2 n (L + \Delta)^2 \underbrace{ - 2 \alpha_k \sum_{j=1}^n (\nabla {f}_j(v^j_k) - e^j_k)^T (\hat{v}^j_k - y)}_{\Lambda}. \label{Eq:IterateRelation2}    
\end{align}%

\noindent Typically, at this step one would use convexity of ${f}_j(x)$ to simplify the term $\Lambda$ in Eq.~\ref{Eq:IterateRelation2}. However, since the gradient of ${f}_j(x)$ is perturbed by noise $e^j_k$, and we need to follow a few more steps before we arrive at the iterate lemma.  

\noindent Consider the fused state iterates $\hat{v}^j_k$, the average $\bar{v}_k \triangleq (1/n) \sum_{j=1}^n \hat{v}^j_k$ and the deviation of iterate from the average, 
\begin{align} 
q^j_k = \hat{v}^j_k - \bar{v}_k. \label{Eq:FusedDisagreement}
\end{align}
We now derive a simple inequality here that will be used later. We use the fact that $\bar{x}_k = \bar{v}_k$ proved earlier in the appendix (Eq.~\ref{Eq:IterAverPreserved}).
\begin{align}
\|q^j_k\| &= \|\hat{v}^j_k - \bar{v}_k\| = \|\sum_{i=1}^n B_k[j,i]x^i_k - \bar{v}_k\| &&  \text{from Eq.~\ref{Eq:InfFusALG2:v} (\TT{RSS-NB}) or Eq.~\ref{Eq:InfFusALG2:vLB} (\TT{RSS-LB})}\nonumber \\
&= \|\sum_{i=1}^n B_k[j,i]x^i_k - \bar{x}_k\| &&  \bar{x}_k = \bar{v}_k, \text{Eq.~\ref{Eq:IterAverPreserved}} \nonumber \\
&\leq \sum_{i=1}^n B_k[j,i] \|x^i_k - \bar{x}_k\| \nonumber \\
&\leq \left( \sum_{i=1}^n B_k[j,i] \right) \max_{i\in\MC{V}} \|x^i_k -\bar{x}_k\| \nonumber \\
&\leq \max_{j \in \MC{V}} \|\delta^j_k\| && \text{Eq.~\ref{Eq:AvgIterateDisagree2}}  \label{Eq:qkinequality}
\end{align}
Note that similarly, we can derive another inequality that will be used later. 
\begin{align}
\sum_{j=1}^n \|\hat{v}^j_k - y\| &= \sum_{j=1}^n \|\sum_{i=1}^n B_k[j,i]x^i_k - y\|  &&  \text{from Eq.~\ref{Eq:InfFusALG2:v} (\TT{RSS-NB}) or Eq.~\ref{Eq:InfFusALG2:vLB} (\TT{RSS-LB})}\nonumber \\
&= \sum_{j=1}^n \|\sum_{i=1}^n B_k[j,i] \left( x^i_k - y \right)\| &&  \text{$B_k$ is row stochastic} \nonumber \\
&\leq \sum_{j=1}^n \sum_{i=1}^n B_k[j,i] \|x^i_k - y\|  \nonumber \\
&= \sum_{i=1}^n \left( \sum_{j=1}^n B_k[j,i] \right) \|x^i_k -y\| \nonumber \\
&\leq \sum_{i=1}^n \|x^i_k - y\| &&  \text{$B_k$ is column stochastic} \label{Eq:AsideEq1}
\end{align}

\noindent We use gradient Lipschiztness assumption and write the following relation,
\begin{align}
    &\nabla f_j (v^j_k) = \nabla {f}_j(\bar{v}_k) + l^j_k, \label{Eq:UnrollGradient1} 
\end{align}
where, $l^j_k$ is the (vector) difference between gradient computed at $v^j_k$ (i.e. $\nabla f_j(v^j_k)$) and the gradient computed at $\bar{v}_k$ (i.e. $\nabla f_j(\bar{v}_k)$). Next, we bound the vector $l^j_k$, using Lipschitzness of gradients,
\begin{align}
\max_{j \in \MC{V}} \|l^j_k\| &= \max_{j \in \MC{V}} \|\nabla f_j (v^j_k) - \nabla {f}_j(\bar{v}_k)\|, \nonumber \\
&\leq \max_{j \in \MC{V}} N \|{v}^j_k - \bar{v}_k\|, && \text{Assumption~\ref{Asmp:GradientCond}}\nonumber \\ 
&\leq \max_{j \in \MC{V}} \{N \|\hat{v}^j_k - \bar{v}_k\| +  \alpha_k N \|e^j_k\|\}, &&  \text{Eq.~\ref{Eq:InfoFusionN:v} or Eq.~\ref{Eq:InfoFusionN:vLB} and Triangle Inequality} \nonumber \\ 
&\leq N \left( \max_{j \in \MC{V}} \|q^j_k\| + \alpha_k \Delta \right) \label{Eq:UnrollGradient2Bound} &&  \text{Eq.~\ref{Eq:FusedDisagreement} and }\|e^j_k\| \leq \Delta
\end{align}

We use above expressions to bound the term $\Lambda$, in Eq.~\ref{Eq:IterateRelation2}. We use $\hat{v}^ j_k = \bar{v}_k + q^j_k$ from Eq.~\ref{Eq:FusedDisagreement} and the gradient relation in Eq.~\ref{Eq:UnrollGradient1} to get,
\begin{align}
\Lambda &= -2 \alpha_k \sum_{j=1}^n  (\nabla {f}_j(v^j_k) - e^j_k)^T (\hat{v}^j_k - y)  = 2 \alpha_k \sum_{j=1}^n \left[ (\nabla {f}_j(\bar{v}_k) - e^j_k + l^j_k)^T (y - \bar{v}_k - q^j_k) \right] \nonumber \\
\Lambda &= 2 \alpha_k \left[T_1 + T_2 + T_3\right], \text{  where, } T_1 = \sum_{j=1}^n \left( \nabla {f}_j(\bar{v}_k) - e^j_k \right)^T (y - \bar{v}_k), \nonumber \\
T_2 &= \sum_{j=1}^n \left( \nabla {f}_j(\bar{v}_k) - e^j_k \right)^T (-q^j_k), \ \text{and } T_3 = \sum_{j=1}^n (l^j_k)^T(y - \bar{v}_k - q^j_k) =  \sum_{j=1}^n (l^j_k)^T(y - \hat{v}^j_k). \nonumber 
\end{align}%

\noindent Individually $T_1$, $T_2$ and $T_3$ can be bound as follows, 
{\small \begin{align}
    T_1 &= \sum_{j=1}^n \left(\nabla {f}_j(\bar{v}_k) - e^j_k \right)^T (y - \bar{v}_k) = \nabla f(\bar{v}_k)^T (y-\bar{v}_k) &&  \text{Eq.~\ref{Eq:NB-Agg}, Eq.~\ref{Eq:LB-Agg}}, \sum_{j=1}^n e^j_k = 0 \text{ and } \sum_{j=1}^n \nabla f_j(\bar{v}_k) = \nabla f(\bar{v}_k)\nonumber \\
    &\leq f(y) - f(\bar{v}_k)  \label{Eq:T1Bound}&&  \text{$f(x)$ is convex}\\    
    T_2 &= \sum_{j=1}^n \left( \nabla {f}_j(\bar{v}_k) - e^j_k \right)^T (-q^j_k) \nonumber \\
    &\leq \sum_{j=1}^n \|\nabla {f}_j(\bar{v}_k) - e^j_k\| \|(-q^j_k)\| && \text{Cauchy-Schwarz Inequality}\nonumber \\
    &\leq (L+\Delta) n\max_{j \in \MC{V}} \|q^j_k\|  && \text{Triangle Inequality and Eq.~\ref{Eq:NB-Bounded}, Eq.~\ref{Eq:LB-Bounded}, Assumption~\ref{Asmp:GradientCond}}\nonumber \\
    &\leq (L+\Delta) n\max_{j \in \MC{V}} \|\delta^j_k\| \label{Eq:T2Bound}&& \text{from Eq.~\ref{Eq:qkinequality}}  \\
    T_3 &= \sum_{j=1}^n (l^j_k)^T (y - \hat{v}^j_k) \nonumber \\
    &\leq \max_{j \in \MC{V}} \|l^j_k\| \sum_{j=1}^n \|\hat{v}^j_k - y\| \nonumber \\
    &\leq  N \left( \max_{j \in \MC{V}} \|q^j_k\| + \alpha_k \Delta \right) \sum_{j=1}^n \|\hat{v}^j_k - y\|  &&  \text{from Eq.~\ref{Eq:UnrollGradient2Bound}} \nonumber \\
    &\leq  N \left( \max_{j \in \MC{V}} \|\delta^j_k\| + \alpha_k \Delta \right) \sum_{j=1}^n \|\hat{v}^j_k - y\| && \text{from Eq.~\ref{Eq:qkinequality}} \nonumber  \\
    &\leq N \left( \max_{j \in \MC{V}} \|\delta^j_k\| + \alpha_k \Delta\right)\left[\sum_{j=1}^n \|x^j_k - y\| \right] &&  \text{from Eq.~\ref{Eq:AsideEq1}} \nonumber \\
    &\text{We further use $2 \|a\| \leq 1 + \|a\|^2$ to bound term $T_3$.} \nonumber \\    
    &\leq \frac{N}{2} \left( \max_{j \in \MC{V}} \|\delta^j_k\| + \alpha_k \Delta\right)\left[\sum_{j=1}^n \left( 1 + \|x^j_k - y\|^2 \right) \right]\label{Eq:T3Bound} &&  2 \|a\| \leq 1 + \|a\|^2
\end{align}}%



\noindent We combine the bounds on $T_1, T_2$ and $T_3$ (Eq.~\ref{Eq:T1Bound}, \ref{Eq:T2Bound} and \ref{Eq:T3Bound}) to get,
{\small \begin{align}
\Lambda &\leq 2 \alpha_k (f(y) - f(\bar{v}_k)) + 2 \alpha_k n(L+\Delta) \max_{j \in \MC{V}} \|\delta^j_k\| + \alpha_k N\left( \max_{j \in \MC{V}} \|\delta^j_k\| + \alpha_k \Delta\right)\left[\sum_{j=1}^n \left( 1 + \|x^j_k - y\|^2 \right) \right] \nonumber \\
    \Lambda &\leq -2 \alpha_k \left( f(\bar{v}_k) - f(y) \right) + 2 \alpha_k n(L+\Delta) \max_{j \in \MC{V}} \|\delta^j_k\|  + \alpha_k N \left( \max_{j \in \MC{V}} \|\delta^j_k\| + \alpha_k \Delta\right)\left[n + \eta_k^2 \right] \label{Eq:BDH3}&& \text{Eq.~\ref{Eq:BDH1}}
\end{align}}%

\noindent Recall from Eq.~\ref{Eq:IterateRelation2},
\begin{align}
	 \eta_{k+1}^2 &\leq \eta_k^2 + \alpha_k^2 n (L + \Delta)^2 \underbrace{ - 2 \alpha_k \sum_{j=1}^n (\nabla {f}_j(v^j_k) - e^j_k)^T (\hat{v}^j_k - y)}_{\Lambda}.
\end{align}
We replace $\Lambda$ with its bound from Eq.~\ref{Eq:BDH3}, and use the fact that $\bar{x}_k = \bar{v}_k$ (Eq.~\ref{Eq:IterAverPreserved}) to we replace, $f(\bar{v}_k)$ with $f(\bar{x}_k)$, 
\begin{align}
	\eta_{k+1}^2 &\leq \eta_k^2 + \alpha_k^2 n (L + \Delta)^2 -2 \alpha_k \left( f(\bar{v}_k) - f(y) \right) + 2 \alpha_k n(L+\Delta) \max_{j \in \MC{V}} \|\delta^j_k\|  + \alpha_k N \left( \max_{j \in \MC{V}} \|\delta^j_k\| + \alpha_k \Delta\right)\left[n + \eta_k^2 \right]\nonumber \\
	 &\leq \left(1 + \alpha_k N \left( \max_{j \in \MC{V}} \|\delta^j_k\| + \alpha_k \Delta\right)\right) \eta_k^2  -2 \alpha_k \left( f(\bar{v}_k) - f(y) \right) + 2 \alpha_k n(L+\frac{N}{2}+\Delta) \max_{j \in \MC{V}} \|\delta^j_k\| \nonumber \\
	& \qquad \qquad  + \alpha_k^2 n N \Delta  + \alpha_k^2 n (L + \Delta)^2 \nonumber \\
    &\leq \left(1 + F_k \right)\eta_k^2  - 2 \alpha_k \left(f(\bar{x}_k) - f(y) \right) + H_k, \label{Eq:BDH5} %
\end{align}
where, $F_k = \alpha_k N \left( \max_{j \in \MC{V}} \|\delta^j_k\| + \alpha_k \Delta\right)$ and $H_k = 2 \alpha_k n (L+\frac{N}{2}+\Delta) \max_{j \in \MC{V}} \|\delta^j_k\|  +\alpha_k^2 n \left[N\Delta +(L +\Delta)^2 \right].$
\end{proof}

We first state a claim about asymptotic behavior of iterates $x^j_k$ and correspondingly of $\max_j \|\delta^j_k\|$. The claim is proved in Appendix~\ref{Sec:Appendix-ClaimProof} after the Proof of Theorem~\ref{Th:ConvPPDOP2}.

\begin{claim}[Consensus] \label{Cl:Consensus}
All agents asymptoticaly reach consensus, 
$$\lim_{k \rightarrow \infty} \max_j \|\delta^j_k\| = 0 \text{  and  }\lim_{k \rightarrow \infty} \|x^i_k - x^j_k\| = 0, \forall i, j.$$
\end{claim}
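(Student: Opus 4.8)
The plan is to deduce the claim directly from the Disagreement Lemma (Lemma~\ref{Lem:AvgDisagreement1}) by showing that each of the three terms bounding $\max_{j}\|\delta^j_{k+1}\|$ tends to $0$ as $k\to\infty$. The first ingredient I would establish is that $\alpha_k\to 0$: since $\{\alpha_k\}$ is non-increasing and $\sum_k\alpha_k^2<\infty$, a value bounded away from $0$ would force $\sum_k\alpha_k^2=\infty$, a contradiction. With $\alpha_k\to 0$ in hand, the term $n\theta\beta^{k}\max_{i}\|x^i_1\|$ vanishes because $\beta<1$ and the remaining factors are constants, and the term $2\alpha_k(L+\Delta)$ vanishes immediately.

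The main work, and the step I expect to be the crux, is controlling the convolution term $n\theta(L+\Delta)\sum_{l=2}^{k}\beta^{k+1-l}\alpha_{l-1}$. Note that Lemma~\ref{Lem:Ram} cannot be applied to conclude this, since that result requires $\sum_k\alpha_k<\infty$ whereas here $\sum_k\alpha_k=\infty$; I would instead argue by an elementary splitting. Writing $S_k=\sum_{l=2}^{k}\beta^{k+1-l}\alpha_{l-1}$, fix $\epsilon>0$ and choose $K$ so that $\alpha_m\leq\epsilon$ for all $m\geq K$. Split $S_k$ into a head $\sum_{l=2}^{K}\beta^{k+1-l}\alpha_{l-1}$ and a tail $\sum_{l=K+1}^{k}\beta^{k+1-l}\alpha_{l-1}$. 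The head is a fixed finite number of terms, each bounded by $\alpha_1\beta^{k+1-l}$ and hence tending to $0$ as $k\to\infty$ with $K$ held fixed. The tail is bounded by $\epsilon\sum_{j=1}^{\infty}\beta^{j}=\epsilon\,\beta/(1-\beta)$ after reindexing $j=k+1-l$. Hence $\limsup_{k\to\infty}S_k\leq\epsilon\,\beta/(1-\beta)$, and since $\epsilon$ is arbitrary, $S_k\to 0$.

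Combining the three vanishing terms yields $\lim_{k\to\infty}\max_{j}\|\delta^j_k\|=0$, which is the first assertion of the claim. For the second assertion, I would apply the triangle inequality together with the definition $\delta^j_k=x^j_k-\bar{x}_k$: for any $i,j$,
\[
\|x^i_k-x^j_k\|\leq\|x^i_k-\bar{x}_k\|+\|\bar{x}_k-x^j_k\|=\|\delta^i_k\|+\|\delta^j_k\|\leq 2\max_{l\in\MC{V}}\|\delta^l_k\|,
\]
so $\|x^i_k-x^j_k\|\to 0$ as well. The only genuinely delicate point is the convolution estimate; the rest is bookkeeping with the bound already furnished by Lemma~\ref{Lem:AvgDisagreement1}.
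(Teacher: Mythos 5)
Your proposal is correct and takes essentially the same approach as the paper's proof: both deduce the claim from Lemma~\ref{Lem:AvgDisagreement1} by showing each of the three terms vanishes, with the crux being the same head/tail split of the convolution term $\sum_{l=2}^{k}\beta^{k+1-l}\alpha_{l-1}$ at an index beyond which $\alpha_k$ is small (the paper does this with explicit $\epsilon$--$K$ bookkeeping where you use a $\limsup$, a purely cosmetic difference), followed by the identical triangle-inequality step for the second assertion. Your side remarks --- that $\alpha_k\to 0$ follows from monotonicity and $\sum_k\alpha_k^2<\infty$, and that Lemma~\ref{Lem:Ram} is inapplicable because $\sum_k\alpha_k=\infty$ --- are both accurate.
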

\begin{proof}
See Appendix~\ref{Sec:Appendix-ClaimProof}.
\end{proof}

\subsection{Proof of Theorem~\ref{Th:ConvPPDOP2}} \label{Sec:ProofTh1} 

\begin{proof} 
We prove convergence using Lemma~\ref{Lem:RobSiegConv}. We begin by using the relation between iterates given in Lemma~\ref{Lem:IterateConvRelation} with $y = x^* \in \mathcal{X}^*$, and for $k \geq 1$,
\begin{align}
    \eta_{k+1}^2 &\leq \left(1 + F_k \right)\eta_k^2 - 2 \alpha_k \left(f(\bar{x}_k) - f(y) \right) + H_k \label{Eq:PROOF0} 
\end{align}

We check if the above inequality satisfies the conditions in Lemma~\ref{Lem:RobSiegConv} viz. $\sum_{k=1}^\infty F_k < \infty$ and $\sum_{k=1}^\infty H_k < \infty$. $F_k$ and $H_k$ are defined in Lemma~\ref{Lem:IterateConvRelation} (Eq.~\ref{Eq:BDH5}). Note that $F_k$ and $H_k$ are non-negative, real sequences.  

We first show that $\sum_{k = 1}^\infty \alpha_k \max_{j \in \MC{V}} \|\delta^j_k\| < \infty$ using the expression for state disagreement from average given in Lemma~\ref{Lem:AvgDisagreement1}.
\begin{align}
    \sum_{k = 1}^\infty &\alpha_{k} \max_{j \in \MC{V}} \|\delta^j_{k}\| = \alpha_1 \max_{j \in \MC{V}} \|\delta^j_{1}\| + \sum_{k=1}^\infty \alpha_{k+1} \max_{j \in \MC{V}} \|\delta^j_{k+1}\| \nonumber \\
    &\leq \underbrace{\alpha_1 \max_{j \in \MC{V}} \|\delta^j_{1}\|}_{U_0} + \underbrace{n \theta \max_{i\in\MC{V}} \|x^i_1\| \sum_{k=1}^\infty \alpha_{k+1}\beta^{k}}_{U_1} + \underbrace{n \theta (L+\Delta) \sum_{k=1}^\infty \alpha_{k+1} \sum_{l=2}^k \beta^{k+1-l} \alpha_{l-1}}_{U_2} + \underbrace{2 (L+\Delta) \sum_{k=1}^\infty \alpha_k \alpha_{k+1} }_{U_3}  \\
    &\qquad \qquad \qquad \left(\text{from Lemma~\ref{Lem:AvgDisagreement1}} \right)\nonumber 
\end{align}%

\noindent The first term $U_0$ is finite since $\max_{j \in \MC{V}} \|\delta^j_{1}\|$ and $\alpha_1$ are both finite. The second term $U_1$ can be shown to be convergent by using the ratio test. We observe that,
{\small \begin{align*}
    \limsup_{k \rightarrow \infty} \frac{\alpha_{k+2} \beta^{k+1}}{\alpha_{k+1} \beta^{k}} &= \limsup_{k \rightarrow \infty} \frac{\alpha_{k+2} \beta}{\alpha_{k+1}} < 1 \Rightarrow\sum_{k=1}^\infty \alpha_{k+1} \beta^k < \infty,
\end{align*}}%
since, $\alpha_{k+1} \leq \alpha_k$ and $\beta < 1$. Now we move on to show that $U_2$ is finite. It follows from $\alpha_{k} \leq \alpha_{l}$ when $l \leq k$, and Lemma~\ref{Lem:Ram}  and $\sum_{k} \alpha_k^2 < \infty$,
\begin{align*}
    \sum_{k=1}^\infty \alpha_{k+1} \sum_{l=2}^k \beta^{k+1-l} \alpha_{l-1} \leq  \sum_{k=1}^\infty \sum_{l=2}^k \beta^{k+1-l} \alpha_{l-1}^2 < \infty. 
\end{align*}%
$U_3$ is finite because $U_3 \leq 2(L+\Delta)\sum_{k=1}^\infty \alpha_k^2 < \infty$. Since we have shown, $U_1 < \infty$, $U_2 < \infty$, and $U_3 < \infty$, we conclude $ \sum_{k=1}^\infty \alpha_k \max_{j \in \MC{V}} \|\delta^j_k\| < \infty $.

Clearly, $\sum_{k=1}^\infty F_k< \infty$ and $\sum_{k=1}^\infty H_k< \infty$, since we proved that $\sum_{k=1}^\infty \alpha_k \max_{j \in \MC{V}} \|\delta^j_k\| < \infty$ and we know that $\sum_{k} \alpha_k^2 < \infty$. We can now apply Lemma~\ref{Lem:RobSiegConv} to Eq.~\ref{Eq:PROOF0} and conclude $\sum_{k=1}^\infty 2 \alpha_k \left( f(\bar{x}) - f(x^*)\right) < \infty$. 

We use $\sum_{k=1}^\infty 2 \alpha_k \left( f(\bar{x}) - f(x^*)\right) < \infty$ to show the convergence of the iterate-average to the optimum. Since we know $\sum_{k=1}^\infty \alpha_k = \infty$, it follows directly that $\lim \inf_{k \rightarrow \infty} f(\bar{x}_{k}) = f(x^*) = f^*$ (an alternate proof for this statement is provided later in Appendix~\ref{Sec:Appendix-LIMINFProof}). 


Also note that Lemma~\ref{Lem:RobSiegConv} states that $\eta_k^2$ has a finite limit. Let $\lim_{k \rightarrow \infty} \eta_k^2 = \eta_{x^*}$ ($\forall x^* \in \MC{X}^*$). 
\begin{align*}
\lim_{k \rightarrow \infty} \eta_k^2 &= \lim_{k \rightarrow \infty} \sum_{i=1}^n \|x^i_k - x^*\|^2 = \lim_{k \rightarrow \infty} \sum_{i=1}^n \|\bar{x}_k + \delta^i_k - x^*\|^2 \\
&= \lim_{k \rightarrow \infty} \sum_{i=1}^n \left[ \|\bar{x}_k - x^*\|^2 + \|\delta^i_k\|^2 + 2 (\bar{x}_k - x^*)^T \delta^i_k \right] \\
&= \lim_{k \rightarrow \infty} \left[ n \|\bar{x}_k - x^*\|^2 + \sum_{i=1}^n \|\delta^i_k\|^2 + 2 (\bar{x}_k - x^*)^T \left( \sum_{i=1}^n \delta^i_k \right) \right] \\
&= n \lim_{k \rightarrow \infty} \|\bar{x}_k - x^*\|^2 + \lim_{k \rightarrow \infty} \sum_{i=1}^n \|\delta^i_k\|^2  && \sum_i \delta^i_k = 0 \text{ by definition of $\delta^j_k$} \\
&= n \lim_{k \rightarrow \infty} \|\bar{x}_k - x^*\|^2 \triangleq \eta_{x^*} &&  \lim_{k \rightarrow \infty} \max_{j \in \MC{V}} \|\delta^j_k\| =  0 \text{ from Claim~\ref{Cl:Consensus}}
\end{align*} 

\noindent From the statement above, we know $\lim_{k \rightarrow \infty} \|\bar{x}_k - x^*\| = \sqrt{\frac{\eta_{x^*}}{n}}$. This, along with $\lim \inf_{k \rightarrow \infty} f(\bar{x}_{k}) = f(x^*)$ proves that $\bar{x}_k$ converges to a point in $ \MC{X}^*$.


We know from Claim~\ref{Cl:Consensus} that the agents agree to a parameter vector asymptotically (i.e. $x^j_{k} \rightarrow x^i_{k}, \ \forall i \neq j$ as $k \rightarrow \infty$). Hence, all agents agree to the iterate average. This along with the convergence of iterate-average to an optimal solution gives us that all agents converge to a point in optimal set $\mathcal{X}^*$ (i.e. $x^j_{k} \rightarrow x^* \in \mathcal{X}^*, \ \forall j, \text{ as } k \rightarrow \infty$). This completes the proof of Theorem~\ref{Th:ConvPPDOP2}. 
\end{proof}

\subsection{Alternate Proof of $\liminf_{k \rightarrow \infty} f(\bar{x}_k) = f(x^*) = f^*$} \label{Sec:Appendix-LIMINFProof}
\begin{proof}
We will prove this statement using contradiction. 

\noindent We know that $\sum_k \alpha_k (f(\bar{x}_k) - f(x^*)) < \infty$ and $\sum_k \alpha_k = \infty$. $\{f(\bar{x}_k)\}$ is a sequence of real numbers and we know that liminf always exists for this sequence (its either a real number or symbols $\pm \infty $). Let us assume that $\liminf_{k \rightarrow \infty} f(\bar{x}_k) = f(x^*) + \delta$ for some $\delta > 0$. Note that $\delta$ cannot be less than $0$ since $f(x^*)$ is the minimum. 

\noindent Now we know from the definition of $\liminf$, $\forall \epsilon > 0$, $\exists K_0 \in \mathbb{N}$ such that $\forall k \geq K_0$,
\begin{align*}
\liminf_{l \rightarrow \infty} f(\bar{x}_l) - \epsilon &\leq f(\bar{x}_k), \\
\implies f(x^*) + \delta - \epsilon &\leq f(\bar{x}_k).
\end{align*}
Consider $\epsilon = \delta/2$ and we have, $\exists K_0$ such that, $\forall k \geq K_0$ such that,
\begin{align*}
f(x^*) + \frac{\delta}{2}  \leq f(\bar{x}_k) \implies f(\bar{x}_k) - f(x^*) \geq \frac{\delta}{2}.
\end{align*}

\noindent Let us consider $C \triangleq \sum_{k=1}^\infty \alpha_k (f(\bar{x}_k) - f(x^*)) < \infty$.
\begin{align}
C \triangleq \sum_{k=1}^\infty \alpha_k (f(\bar{x}_k) - f(x^*)) &= \sum_{k=1}^{K_0} \alpha_k (f(\bar{x}_k) - f(x^*)) + \sum_{k=K_0+1}^{\infty} \alpha_k (f(\bar{x}_k) - f(x^*)) \nonumber \\
&\geq \underbrace{\sum_{k=1}^{K_0} \alpha_k (f(\bar{x}_k) - f(x^*))}_{T1} + \underbrace{\sum_{k=K_0+1}^{\infty} \alpha_k \frac{\delta}{2}}_{T2} \label{Eq:ContradictionProofEqNew}
\end{align}
$T1$ is finite since $C$ is finite. $T2$ grows unbounded since $\sum_{k = K_0+1}^\infty \alpha_k = \infty$. Substituting both $T1$ and $T2$ in Eq.~\ref{Eq:ContradictionProofEqNew} we get $C \geq \infty$ in contradiction. Hence $\delta = 0$, implying $\liminf_{k \rightarrow \infty} f(\bar{x}_k) = f(x^*) = f^*$.
\end{proof}

\subsection{Proof of Claim~\ref{Cl:Consensus} (Consensus Claim)} \label{Sec:Appendix-ClaimProof}
\begin{proof} 
We begin with the iterate disagreement relation in Lemma~\ref{Lem:AvgDisagreement1},
{\small \begin{align}
\max_{j \in \MC{V}} \|\delta^j_{k+1}\| &\leq \underbrace{n \theta \beta^{k} \max_{i\in\MC{V}} \|x^i_1\|}_{V_1} + \underbrace{n \theta (L+\Delta) \sum_{l=2}^k \beta^{k+1-l} \alpha_{l-1}}_{V_2} + \underbrace{2 \alpha_k \left( L+ \Delta \right)}_{V_3} \label{Eq:BDH6}
\end{align}%
}
The first term $V_1$ decreases exponentially with $k$. Hence, for any $\epsilon > 0$, $\exists \ K_1 = \lceil{\log_\beta \frac{\epsilon}{3 n \theta \max_{i\in\MC{V}} \|x^i_1\|}}\rceil$ such that, $\forall k > K_1$, we have $V_1 < \epsilon/3$.

For given $\xi = \epsilon (1 - \beta)/6 \beta n \theta (L+\Delta)$, $\exists K_2$ such that, $\alpha_k < \xi$, $\forall \ k \geq K_2$, due to the non-increasing property of $\alpha_k$ and $\sum_k \alpha_k^2 < \infty$. Observe that,
\begin{align*}
\sum_{i = 1}^{k-1}  \left( \alpha_i \beta^{k-i} \right) &=  \underbrace{\left( \alpha_1 \beta^{k-1} + \ \ldots + \alpha_{K_2-1} \beta^{k-K_2+1}\right)}_{A} + \underbrace{ \left(\alpha_{K_2} \beta^{k-K_2} + \ldots + \alpha_{k-1} \beta^1 \right)}_{B}
\end{align*}
\noindent We can bound the terms A and B.
\begin{align}
A &=  \alpha_1 \beta^{k-1} + \alpha_2 \beta^{k-2} + \ldots + \alpha_{K_2-1} \beta^{k-K_2+1} \nonumber \\
&\leq \alpha_1 (\beta^{k-1} +  \ldots + \beta^{k-K_2+1})  &&  \alpha_1 \geq \alpha_i \; \forall \ i \geq 1 \nonumber \\
&\leq \alpha_1 \beta^{k-K_2+1} \left(\frac{1 - \beta^{K_2-1}}{1-\beta} \right) \nonumber \\
&\leq \frac{\alpha_1 \beta^{k-K_2+1}}{1-\beta}  &&  \beta < 1 \label{Eq:ABOUNDS1}\\
B &= \alpha_{K_2} \beta^{k-K_2} + \ldots + \alpha_{k-1} \beta^1 \nonumber \\
& < \xi \beta \left( \frac{1-\beta^{k-K_2}}{1-\beta}\right) \leq \frac{\xi \beta}{1-\beta}  &&  \alpha_i < \xi, \; \forall i \geq K_2  \label{Eq:BBOUNDS1}
\end{align}
The right side of inequality in Eq.~\ref{Eq:ABOUNDS1} is monotonically decreasing in $k$ ($\beta < 1$) with limit $0$ as $k \rightarrow \infty$. Hence $\exists K_{3} > K_2$ such that $A < \epsilon/6n \theta (L+\Delta)$, $\forall \; k \geq K_{3}$. 

We know $\alpha_i < \xi = \epsilon (1 - \beta)/6 \beta n \theta (L+\Delta)$ for all $k \geq {K_2}$. Hence, following Eq.~\ref{Eq:BBOUNDS1}, $B \leq \frac{\xi \beta}{1-\beta} = \frac{\epsilon \beta (1 - \beta)}{6 (1-\beta) \beta n \theta (L+\Delta)} <\epsilon/6n \theta (L+\Delta)$, for all $k \geq K_{2}$. Hence, $\exists K_4 = \max\{K_2,K_3\}$ such that $V_2 = n \theta (L+\Delta) (A+B) < \epsilon/3$ for all $k > K_4$. 


The third term of Eq.~\ref{Eq:BDH6}, $V_3$, decreases at the same rate as $\alpha_k$. Hence, for any $\epsilon>0$, $\exists \ K_6 = \min \{k | \alpha_k < \frac{\epsilon}{6 (L+\Delta)}\}$, such that $\forall k > K_6$, we have $V_3 < \epsilon/3$.

We have convergence based on $\epsilon-\delta$ definition of limits. For any $\epsilon>0$, there exists $K_{\max} = \max\{K_1, K_5, K_6\}$ such that $\max_{j \in \MC{V}} \|\delta^j_k\| \leq V_1 + V_2 + V_3 < \epsilon$ for all $k \geq K_{\max}$. This implies that, $$\lim_{k \rightarrow \infty} \max_{j \in \MC{V}}\|\delta^j_k\| = \lim_{k \rightarrow \infty} \max_{j \in \MC{V}}\|x^j_k - \bar{x}_k\| \leq 0.$$

\noindent Since, $\max_{j \in \MC{V}}\|\delta^j_k\| \geq 0$, the above statement implies $\lim_{k \rightarrow \infty} \max_{j \in \MC{V}}\|\delta^j_k\| = \lim_{k \rightarrow \infty} \max_{j \in \MC{V}}\|x^j_k - \bar{x}_k\| = 0$.

Now note that $\lim_{k \rightarrow \infty} \max_{j \in \MC{V}}\|x^j_k - \bar{x}_k\| = 0 \implies \lim_{k \rightarrow \infty} \|x^j_k - \bar{x}_k\| = 0 \; \forall j$. Hence, we can also show that the following relationship holds,
\begin{align}
\lim_{k \rightarrow \infty} \|x^j_k - {x}^i_k\| &= \lim_{k \rightarrow \infty} \|(x^j_k - \bar{x}_k) + (\bar{x}_k - {x}^i_k)\| \nonumber \\
&\leq \lim_{k \rightarrow \infty} (\|x^j_k - \bar{x}_k\| + \|\bar{x}_k - {x}^i_k\|) && \text{Triangle Inequality} \nonumber \\
&= \lim_{k \rightarrow \infty} \|x^j_k - \bar{x}_k\| + \lim_{k \rightarrow \infty}\|\bar{x}_k - {x}^i_k\| \nonumber \\
&= 0 \nonumber 
\end{align}

Since, $\|x^j_k - x^i_k\| \geq 0$, for all $i,j$, the above statement implies $\lim_{k \rightarrow \infty} \|x^j_k - {x}^i_k\| = 0$.
\end{proof}

\subsection{Proof of Privacy Result (Theorem~\ref{Th:TPriv-2})} \label{Sec:Appendix-T2Proof}

\begin{proof} 

\noindent \underline{\textbf{Proof for P2.}}

\noindent \textbf{(Necessity)}
Let us assume that $\kappa(\MC{G})\leq f$. Hence, if $f$ nodes are deleted (along with their edges) the graph becomes disconnected to form components $I_1$ and $I_2$, (see Figure~\ref{Fig:NecProof1}). Now consider that $f$ nodes, that we just deleted, are compromised by an adversary. 
 Agents generate correlated noise function using Eq.~\ref{Eq:SMCFunctionGen}. All the perturbation functions $s^{j,i}(x)$ shared by nodes in $I_1$ (with agents outside $I_1$) are observed by the coalition. Hence, the true objective function is easily estimated by using,
 \begin{small}
 \begin{align*}
 \sum_{l \in I_1} f_{l}(x) = \sum_{l \in I_1} \widehat{f}_{l} - \left(\sum_{j = 1, i\in I_1}^{j=f} s^{j,i}(x) - \sum_{j = 2, i\in I_1}^{j=f} s^{i,j}(x) \right).
 \end{align*}%
 \end{small}%
 This gives us contradiction. Hence, $\kappa(\MC{G})>f$ is necessary. 
 \begin{figure}[!h]
 \begin{center}
 \centerline{\includegraphics[width=0.4\columnwidth]{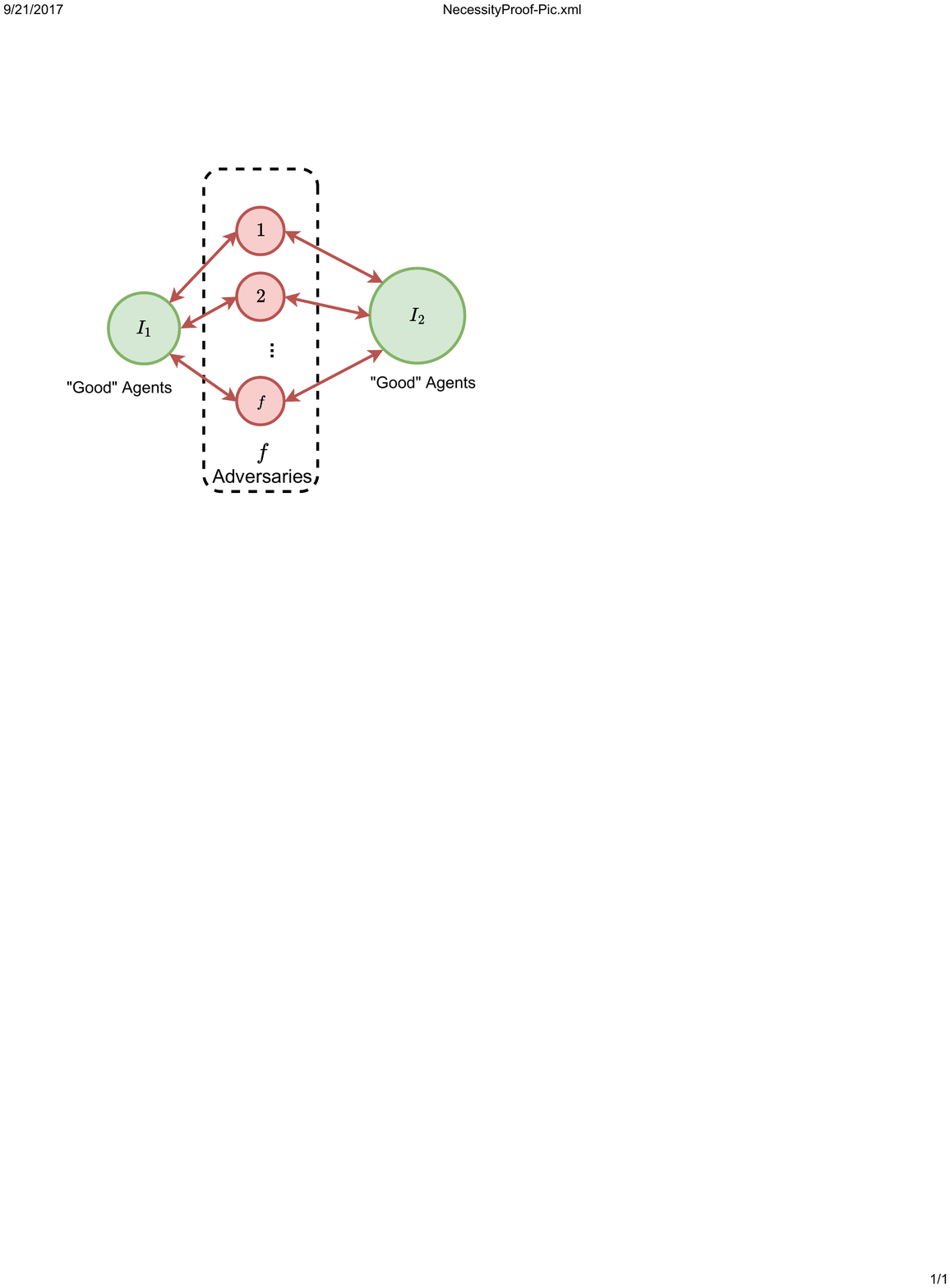}}
 \caption{(Necessity) An example of topology with $\kappa(\MC{G}) = f$.}
 \label{Fig:NecProof1}
 \end{center}
 \end{figure}%

\noindent  \textbf{(Sufficiency)}
We present a constructive method to show that given an execution (and corresponding observations), any estimate of objective functions made by the adversary is equally likely.

We conservatively assume that the adversary can observe the obfuscated functions, $\hat{f}_i(x)$, the private objective functions of corrupted nodes $f_a(x)$ ($a \in \mathcal{A}$) and arbitrary functions transmitted from and received by each of the coalition members, $s^{a,J}$ and $a^{K,a}$ ($J \in \mathcal{N}_{a}$ and $K \text{ such that } a \in \mathcal{N}_{K}$, for all $a \in \mathcal{A}$). Since the corrupted nodes also follow the same protocol (Algorithm~\ref{Algo:PPDOP1}), the adversary is also aware of the fact that the private objective functions have been obfuscated by function sharing approach (Eq.~\ref{Eq:F_New}).

\begin{equation*}
\hat{f}_i(x) = f_i(x) + \sum_{k:i \in \mathcal{N}_k} s^{k,i}(x) - \sum_{j \in \mathcal{N}_i} s^{i,j}(x)
\end{equation*}

Clearly, one can rewrite this transformation approach, using signed incidence matrix of bidirectional graph $\mathcal{G}$ \cite{godsil2013algebraic} \cite{diestel2005graph}.
\begin{align}
\mathbf{\hat{f}} = \textbf{f} + \textbf{B} \textbf{R}.
\end{align}
where, $\mathbf{\hat{f}} = \begin{bmatrix}\hat{f}_1(x), \hat{f}_1(x), \ldots, \hat{f}_S(x)\end{bmatrix}^T$ is a $S \times 1$ vector of obfuscated functions $\hat{f}_i(x)$ for $i = \{1, 2, \ldots, S\}$, and $\textbf{f} = \begin{bmatrix} {f}_1(x), {f}_1(x), \ldots, {f}_S(x)\end{bmatrix}^T$ is a $S \times 1$ vector of private (true) objective functions, $f_i(x)$. $\textbf{B} = \begin{bmatrix}
B_C, -B_C
\end{bmatrix}$, where $B_C$ (of dimension $S \times |\mathcal{E}|/2$) is the incidence matrix of a directed graph obtained by considering only one of the directions of every bidirectional edge in graph $\mathcal{G}$\footnote{This represents an orientation of graph $\mathcal{G}$ \cite{godsil2013algebraic}.}. Each column of $\mathbf{B}$ represents a directed communication link between any two agents. Hence, any bidirectional edge between agents $i$ and $j$ is represented as two directed links, $i$ to $j$, $(i,j) \in \mathcal{E}$ and $j$ to $i$ $(j,i) \in \mathcal{E}$ and corresponds to two columns in $\mathbf{B}$. $\textbf{S}$ represents a $|\mathcal{E}| \times 1$ vector consisting of functions $s^{i,j}(x)$. Each entry in vector $\textbf{S}$, function $s^{i,j}(x)$ corresponds to a column of $\mathbf{B}$ which, in turn corresponds to link $(i,j) \in \mathcal{E}$; and similarly, function $s^{j,i}(x)$ corresponds to a different column of $\mathbf{B}$ which, in turn corresponds to link $(j,i) \in \mathcal{E}$. Note that $\ell^\text{th}$ row of column vector $\textbf{S}$ corresponds to $\ell^\text{th}$ column of incidence matrix $\mathbf{B}$. 

We will show that, two different sets of true objective functions ($\mathbf{f}$ and $\mathbf{f}^o$) and correspondingly two different set of arbitrary functions ($\textbf{S}$ and $\mathbf{G}$), can lead to exactly same execution and observations for the adversary\footnote{$\textbf{f}$ and $\textbf{f}^o$ are dissimilar and arbitrarily different.}. We want to show that both these cases can result in same obfuscated objective functions. That is,
\begin{align}
\mathbf{\hat{f}} = 
\textbf{f} + 
\textbf{B} \textbf{S} = 
\textbf{f}^o + 
\textbf{B} \textbf{G}.
\label{Eq:ProofEq1}
\end{align}

We will show that given any set of private objective functions $\mathbf{f}^o$, suitably selecting arbitrary functions $g^{i,j}(x)$ corresponding to links incident at ``good" agents, it is possible to make $\mathbf{f}^o$ indistinguishable from original private objective functions $\mathbf{f}$, solely based on the execution observed by the corrupted nodes. We do so by determining entries of $\textbf{G}$, which are arbitrary functions that are dissimilar from $s^{i,j}(x)$ when $i$ and $j$ are both ``good". The design $\mathbf{G}$ such that the obfuscated objective functions $\mathbf{\hat{f}}$ are the same for both situations. 

Since corrupted nodes observe arbitrary functions corresponding to edges incident to and from them, we set the arbitrary functions corresponding to edges incident on corrupted nodes as $g^{k,a} = s^{k,a}$ and arbitrary functions corresponding to edges incident away the corrupted nodes as $g^{a,j} = s^{a,j}$ (where $k: a \in \mathcal{N}_{k}$ and $j \in \mathcal{N}_{a}$, for all $a \in \mathcal{A}$). Now, we define $\mathbf{\tilde{G}}$ as the vector containing all elements of $\mathbf{G}$ except those corresponding to the edges incident to and from the corrupted nodes\footnote{The only entries of $\mathbf{G}$, that are undecided at this stage are included in $\mathbf{\tilde{G}}$. These are functions $g^{i,j}$ such that $i$, $j$ are both ``good".}. Similarly, we define $\mathbf{\tilde{B}}$ to be the new incidence matrix obtained after deleting all edges that are incident on the corrupted nodes (i.e. deleting columns corresponding to the links incident on corrupted nodes, from the old incidence matrix $\mathbf{B}$). We subtract $g^{a,j}(x)$ and $g^{k,a}(x)$ ($\forall \ a \in \mathcal{A}$) by subtracting them from $\mathbf{[\mathbf{\hat{f}} - \textbf{f}^o]}$ (in Eq.~\ref{Eq:ProofEq1}) to get effective function difference denoted by $\mathbf{[\mathbf{\hat{f}} - \textbf{f}^o]}_{\rm eff}$ as follows,
\begin{small}
\begin{align}
&\mathbf{[\mathbf{\hat{f}} - \textbf{f}^o]} = \mathbf{B} \mathbf{G} = \mathbf{f} - \mathbf{f}^o +\mathbf{B} \textbf{S},   \ldots (\text{From Eq.~\ref{Eq:ProofEq1}})\\
&\mathbf{[\mathbf{\hat{f}} - \textbf{f}^o]}_{\rm eff} \nonumber \\
&= [\mathbf{\hat{f}} - \mathbf{f}^o] - \sum_{a \in \mathcal{A}} \left[ \sum_{k:a \in \mathcal{N}_k} g^{k,a}(x) - \sum_{j\in \mathcal{N}_a} g^{a,j}(x) \right] \\
&= \mathbf{\tilde{B}} \mathbf{\tilde{G}}, \label{Eq:ProofEq2}
\end{align}%
\end{small}
where, if $d$ entries of $\mathbf{G}$ were fixed\footnote{Total number of edges incident to and from corrupted nodes is $d$. We fixed them to be the same as corresponding entries from $\textbf{S}$, since the coalition can observe them.} then $\mathbf{\tilde{G}}$ is a $(|\mathcal{E}|-d) \times 1$ vector and $\mathbf{\tilde{B}}$ is a matrix with dimension $S \times (|\mathcal{E}| - d)$. The columns deleted from $\mathbf{B}$ correspond to the edges that are incident to and from the corrupted nodes. Hence, $\mathbf{\tilde{B}}$ represents the incidence of a graph with these edges deleted.

We know from the $f$-admissibility of the graph, that $\mathbf{\tilde{B}}$ connects all the non-adversarial agents into a connected component\footnote{The adversarial nodes become disconnected due to the deletion of edges incident on corrupted nodes (previous step).}. Since, the remaining edges form a connected component, the edges can be split into two groups. A group with edges that form a spanning tree over the good nodes (agents) and all other edges in the other group (see Remark~\ref{Ex:STEE} and Figure~\ref{Fig:ExSTEE}). Let $\mathbf{\tilde{B}}_{\rm ST}$ represent the incidence matrix\footnote{Its columns correspond to the edges that form spanning tree.} of the spanning tree and $\mathbf{\tilde{G}}_{\rm ST}$ represents the arbitrary functions corresponding to the edges of the spanning tree. $\mathbf{\tilde{B}}_{\rm EE}$ represents the incidence matrix formed by all other edges and $\mathbf{\tilde{G}}_{\rm EE}$ represents the arbitrary functions related to all other edges.
\begin{align}
\mathbf{[\mathbf{\hat{f}} - \textbf{f}^o]}_{\rm eff} &= \begin{bmatrix}
\mathbf{\tilde{B}}_{\rm ST} & \mathbf{\tilde{B}}_{\rm EE}
\end{bmatrix} \begin{bmatrix}
\mathbf{\tilde{G}}_{\rm ST} \\
\mathbf{\tilde{G}}_{\rm EE}
\end{bmatrix} \\
&=\mathbf{\tilde{B}}_{\rm ST} \mathbf{\tilde{G}}_{\rm ST} + \mathbf{\tilde{B}}_{\rm EE} \mathbf{\tilde{G}}_{\rm EE}.
\end{align}

\noindent We now arbitrary assign functions to elements of $\mathbf{\tilde{G}}_{\rm EE}$ and then compute the arbitrary weights for $\mathbf{\tilde{G}}_{\rm ST}$. We know that the columns of $\mathbf{\tilde{B}}_{\rm ST}$ are linearly independent, since $\mathbf{\tilde{B}}_{\rm ST}$ is the incidence matrix of a spanning tree (cf. Lemma~2.5 in \cite{bapat2010graphs}). Hence, the left pseudoinverse\footnote{$A^\dagger$ represents the pseudoinverse of matrix $A$.} of $\mathbf{\tilde{B}}_{\rm ST}$ exists; and $\mathbf{\tilde{B}}_{\rm ST}^ \dagger \mathbf{\tilde{B}}_{\rm ST} = \mathbb{I}$, giving us the solution for $\mathbf{\tilde{G}}_{\rm ST}$ \footnote{An alternate way to look at this would be to see that $\mathbf{\tilde{B}}_{\rm ST}^T \mathbf{\tilde{B}}_{\rm ST}$ represents the edge Laplacian \cite{zelazo2007agreement} of the spanning tree. The edge Laplacian of an acyclic graph is non-singular and this also proves that left-pseudoinverse of $\mathbf{\tilde{B}}_{\rm ST}$ exists.}. And no adversary can estimate any sum of functions associated with strict subset of ``good" agents.
\begin{align}
\mathbf{\tilde{G}}_{\rm ST} = \mathbf{\tilde{B}}_{\rm ST} ^ \dagger \left[ \mathbf{[\mathbf{\hat{f}} - \textbf{f}^o]}_{\rm eff} - \mathbf{\tilde{B}}_{\rm EE} \mathbf{\tilde{G}}_{\rm EE} \right]. \label{Eq:SolutionEq}
\end{align}

Using the construction shown above, for any $\textbf{f}^o$ we can construct $\textbf{G}$ such that the execution as seen by corrupted nodes is exactly the same as the original problem where the objective is $\textbf{f}$ and the arbitrary functions are $\textbf{S}$. A strong PC coalition cannot distinguish between two executions involving $\textbf{f}^o$ and $\textbf{f}$. Hence, no coalition can estimate $f_i(x)$ ($i \cancel{\in} \mathcal{A}$). 

~

\noindent  \underline{\textbf{Proof for P1.}}

\noindent \textbf{(Necessity)}
The proof of necessity here (for P1) follows the proof of necessity for P2. We prove this statement by contradiction. Assume that a node $i$ has degree $f$ and we have $|\MC{A}| = f$ adversaries. Consider that $I_1 = {i}$ in Figure~\ref{Fig:NecProof1}. 

Agents generate correlated noise function using Eq.~\ref{Eq:SMCFunctionGen}. All the perturbation functions $s^{j,i}(x)$ shared by nodes in $I_1$ (with agents outside $I_1$) are observed by the adversary. Hence, the true objective function is easily estimated by using,
 \begin{align*}
 f_{i}(x) = \widehat{f}_{i} - \left(\sum_{j = 1}^{j=f} s^{j,i}(x) - \sum_{j = 1}^{j=f} s^{i,j}(x) \right).
 \end{align*}%
This gives us contradiction. Hence, degree $>f$ is necessary. 

\noindent \textbf{(Sufficiency)} We can use a construction similar to the sufficiency proof for P2. Instead of considering all objective functions $\mathbf{f}$, we consider $f_i(x)$. 

\end{proof}

\begin{figure*}[!t]
\begin{subfigure}{.48\textwidth}
  \centering
  \includegraphics[width=0.75\linewidth]{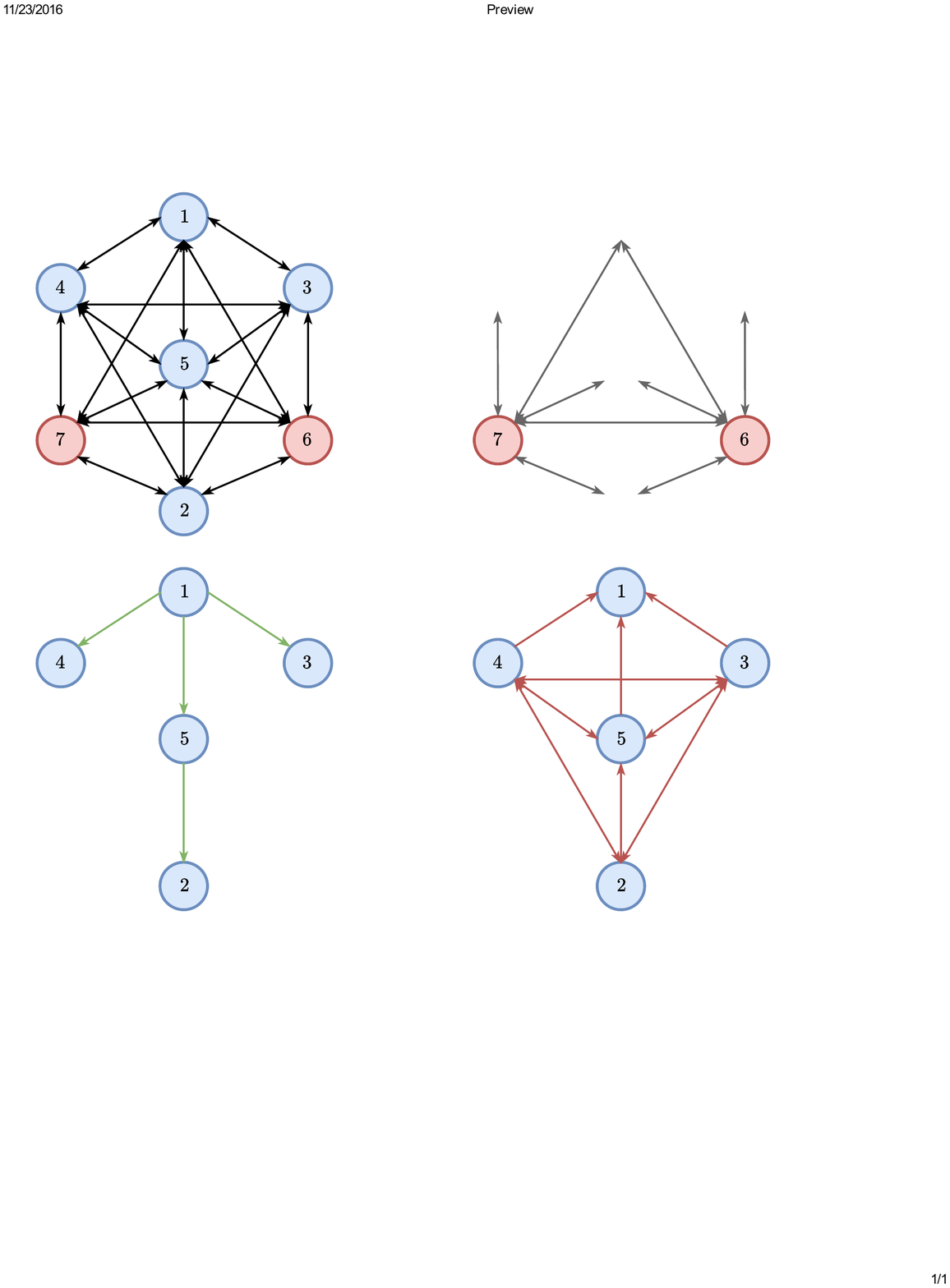}
  \caption{Example }
  \label{Fig:ExSTEE1}
\end{subfigure} \hfill
\begin{subfigure}{.48\textwidth}
  \centering
  \includegraphics[width=0.77\linewidth]{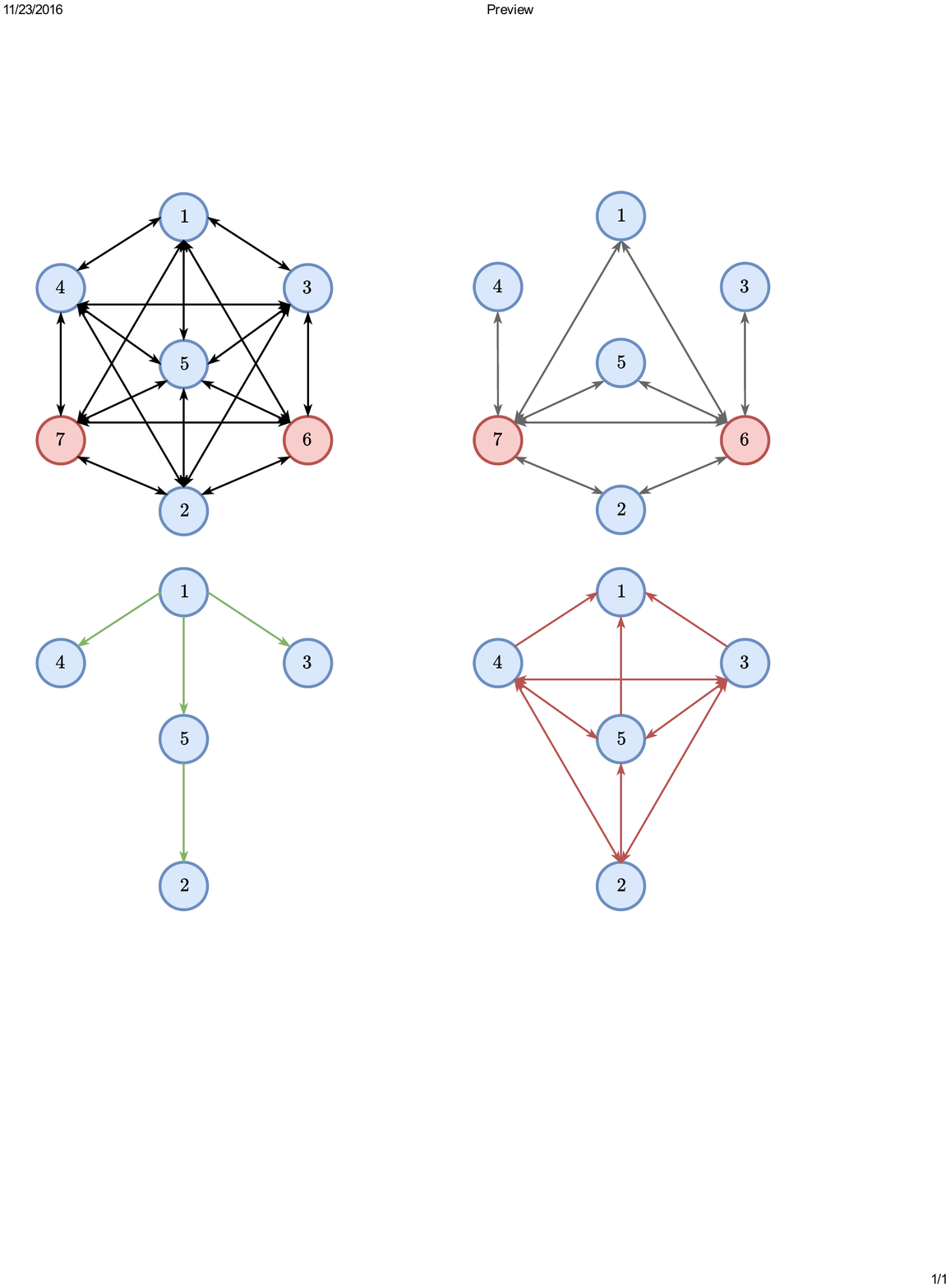}
  \caption{Step-1. Fix $\mathbf{G}_{\mathscr{A},l}$ and $\mathbf{G}_{l,\mathscr{A}}$.}
  \label{Fig:ExSTEE2}
\end{subfigure} \\
\begin{subfigure}{.48\textwidth}
  \centering
  \includegraphics[width=0.75\linewidth]{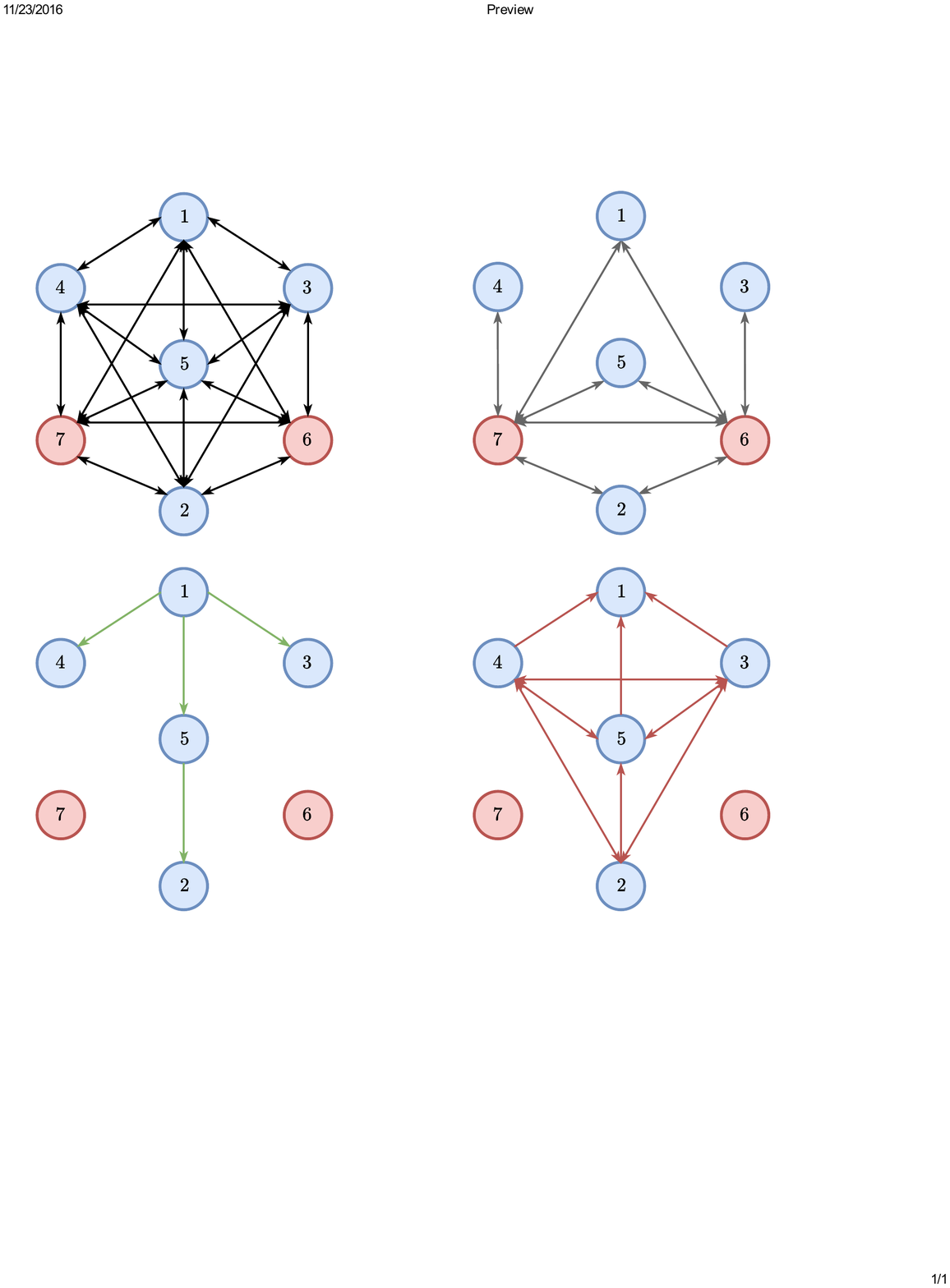}
  \caption{Step-2. Arbitrarily select $\mathbf{G}_{\rm EE}$. }
  \label{Fig:ExSTEE3}
\end{subfigure} \hfill
\begin{subfigure}{.48\textwidth}
  \centering
  \includegraphics[width=0.77\linewidth]{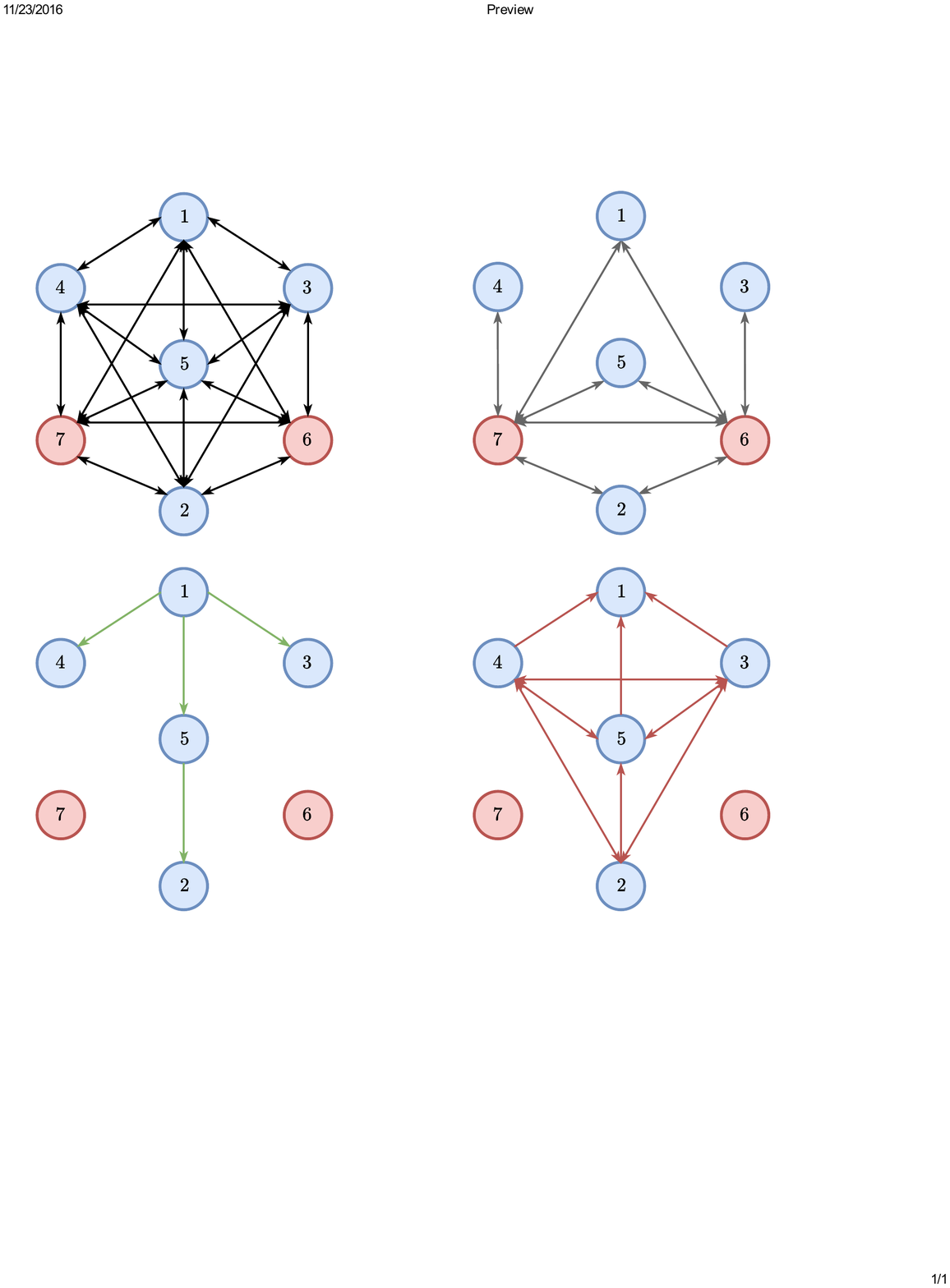}
  \caption{Step-3. Solve for $\mathbf{G}_{\rm ST}$ (Eq.~\ref{Eq:SolutionEq}).}
  \label{Fig:ExSTEE4}
\end{subfigure}
\caption{An example of construction used for proving Theorem~\ref{Th:TPriv-2}. Network has $S=7$ nodes and a adversary with $f=2$ corrupted nodes. The graph topology is $2$-admissible.}
\label{Fig:ExSTEE}
\end{figure*}

\begin{remark} [Method for Constructing $\mathbf{G}$]
\label{Ex:STEE}
We present an example for the construction used in the above proof. Let us consider a system of $S=7$ agents communicating under a topology with $\kappa(\MC{G})>2$ (see Figure~\ref{Fig:ExSTEE1}). An adversary with two corrupted nodes ($\mathcal{A} = \{6, 7\}$, $f=2$) is a part of the system. We can divide the task of constructing $\mathbf{G}$ into three steps - 
\begin{enumerate}
    \item Fix $g^{a,l}$ and $g^{l,a}$ (links incident on corrupted nodes) to be the corresponding entries in $\textbf{S}$,
    \item Arbitrarily select the functions corresponding to non spanning tree edges ($\mathbf{G}_{\rm EE}$), and
    \item Solve for the functions corresponding to the spanning tree ($\mathbf{G}_{\rm ST}$) using Eq.~\ref{Eq:SolutionEq}.
\end{enumerate}

We first, follow the Step 1 and fix $g^{k,a} = s^{k,a}$ and $g^{a,j} = s^{a,j}$ (where $k: a \in \mathcal{N}_{k}$ and $j \in \mathcal{N}_{a}$, for all $a \in \mathcal{A}$). Step 1 follows form the fact that the adversary observes $s^{k,a}$ and $s^{a,j}$, and hence they need to be same in both executions. This is followed by substituting the known entries in $\mathbf{G}$ and subtract them from the left hand side as shown in Eq.~\ref{Eq:ProofEq2}. This corresponds to the deletion of all incoming and outgoing edges from the corrupted nodes. 
The incidence matrix of this new graph is denoted by $\mathbf{\tilde{B}}$. The edges in the new graph can be decomposed into two groups - a set containing edges that form a spanning tree and a set that contains all other edges. This is seen in Figure~\ref{Fig:ExSTEE3} where the red edges are all the remaining links (incidence matrix, $\mathbf{\tilde{B}}_{\rm EE}$); and Figure~\ref{Fig:ExSTEE4} where the green edges form a spanning tree (incidence matrix, $\mathbf{\tilde{B}}_{\rm ST}$) with Agent 1 as the root and all other ``good" agents as its leaves (agents 2, 3, 4, 5). 
\end{remark}

\subsection{Proof of Theorem~\ref{Th:FiniteTimeRes}} \label{Sec:Appendix-FiniteTRes}

\begin{proof}
The proof of Theorem~\ref{Th:FiniteTimeRes} follows from Lemma~\ref{Lem:IterateConvRelation} and a few elementary results on sequences and series.  

We begin our analysis by considering the time weighted average of state $\widehat{x}^j_T = \frac{\sum_{k=1}^T \alpha_k x^j_k}{\sum_{k=1}^T \alpha_k}$, and noting the fact that $\bar{\widehat{x}}_k = \frac{1}{n}\sum_{j=1}^n \widehat{x}^j_k = \frac{\sum_{k=1}^T \alpha_k \bar{x}_k}{\sum_{k=1}^T \alpha_k}$. Also consider the fact that $f(x)$ is convex. We get,
\begin{align}
f(\bar{\widehat{x}}_T) - f^* = f\left(\frac{\sum_{k=1}^T \alpha_k \bar{x}_k}{\sum_{k=1}^T \alpha_k}\right) - f^* \leq \frac{\sum_{k=1}^T \alpha_k f(\bar{x}_k)}{\sum_{k=1}^T \alpha_k} - f^* = \frac{\sum_{k=1}^T \alpha_k \left(f(\bar{x}_k) - f^*\right)}{\sum_{k=1}^T \alpha_k} \label{Eq:FTR1}
\end{align}

\noindent Next, consider $y \in \MC{X}^*$ in Lemma~\ref{Lem:IterateConvRelation}, and we use it to bound the expression in Eq.~\ref{Eq:FTR1}.
\begin{align}
f(\bar{\widehat{x}}_T) - f^* \leq \frac{\sum_{k=1}^T  \left((1+F_k)\eta^2_k - \eta^2_{k+1} + H_k\right)}{2\sum_{k=1}^T \alpha_k} \label{Eq:FTR2}
\end{align}
Canceling the telescoping terms in Eq.~\ref{Eq:FTR2}, we get,
\begin{align}
f(\bar{\widehat{x}}_T) - f^* &\leq \frac{\eta_1^2 - \eta_{T+1}^2 + \sum_{k=1}^T \left(F_k \eta^2_k + H_k\right)}{2\sum_{k=1}^T \alpha_k} \nonumber \\
&\leq \frac{\eta_1^2 + \sum_{k=1}^T \left(F_k \eta^2_k + H_k\right)}{2\sum_{k=1}^T \alpha_k} \label{Eq:FTR3}
\end{align}

If $\alpha_k = 1/\sqrt{k}$, we have from comparison test, $\sum_{k=1}^T \alpha_k \geq \sqrt{T}$. This gives us from Eq.~\ref{Eq:FTR3},
\begin{align}
f(\bar{\widehat{x}}_T) - f^* \leq \frac{\eta_1^2 + \sum_{k=1}^T \left(F_k \eta^2_k + H_k\right)}{2\sqrt{T}} \label{Eq:FTR3}
\end{align}

Let us define the maximum value of $\eta_k^2$ as $D_0$, i.e. $D_0 \triangleq n \max_{x ,y \in \MC{X}} \|x - y\|^2$. Note that due to compactness of $\MC{X} \subseteq \mathbb{R}^D$, the bound $D_0$ is finite. 
\begin{align}
f(\bar{\widehat{x}}_T) - f^* \leq \frac{D_0 + D_0\sum_{k=1}^T F_k + \sum_{k=1}^T H_k}{2\sqrt{T}} \label{Eq:FTR4}
\end{align}

Next we bound $\sum_{k=1}^T F_k$ and $\sum_{k=1}^T H_k$.

\begin{align}
\sum_{k=1}^T F_k &= N \sum_{k=1}^T \alpha_k \max_j \|\delta^j_k\| + N\Delta \sum_{k=1}^T \alpha_k^2 \nonumber\\
&= N \sum_{k=1}^T \alpha_k \max_j \|\delta^j_k\| + N\Delta \sum_{k=1}^T \frac{1}{k} \nonumber\\
&\leq N \sum_{k=1}^T \alpha_k \max_j \|\delta^j_k\| + N \Delta (\log(T) + 1) \qquad \qquad \qquad \left(\sum_{k=1}^T \frac{1}{k} < \log(T)+1\right) \label{Eq:FTR5}\\
\sum_{k=1}^T H_k &=2n(L+N/2+\Delta) \sum_{k=1}^T \alpha_k \max_j \|\delta^j_k\| + n[(L+\Delta)^2 + N \Delta] \sum_{k=1}^T \alpha_k^2 \nonumber \\
&=2n(L+N/2+\Delta) \sum_{k=1}^T \alpha_k \max_j \|\delta^j_k\| + n[(L+\Delta)^2 + N \Delta] \sum_{k=1}^T \frac{1}{k} \nonumber \\
&\leq 2n(L+N/2+\Delta) \sum_{k=1}^T \alpha_k \max_j \|\delta^j_k\| + n[(L+\Delta)^2 + N \Delta] (\log(T)+1) \label{Eq:FTR6}
\end{align}

We use Lemma~\ref{Lem:AvgDisagreement1} to bound $\sum_{k=1}^T \alpha_k \max_j \|\delta^j_k\|$.
\begin{align}
\sum_{k=1}^T &\alpha_k \max_j \|\delta^j_k\| = \alpha_1 \max_j \|\delta^j_k\| + \sum_{k=1}^{T-1} \alpha_{k+1} \max_j \|\delta^j_{k+1}\| \nonumber \\
&\leq \alpha_1 \max_j \|\delta^j_1\| + n\theta \max_i \|x^i_1\| \sum_{k=1}^{T-1} \alpha_{k+1} \beta^{k} + n \theta (L+\Delta) \sum_{k=1}^{T-1} \left( \alpha_{k+1} \sum_{l=2}^{k} \beta^{k+1-l} \alpha_{l-1} \right) \nonumber \\
&\qquad \qquad \qquad  + 2(L+\Delta) \sum_{k=1}^{T-1} \alpha_{k+1} \alpha_k \nonumber\\
&\leq \alpha_1 \max_j \|\delta^j_1\| + n\theta \max_i \|x^i_1\| \sum_{k=1}^{T-1} \alpha_{k+1} \beta^{k} + n \theta (L+\Delta) \sum_{k=1}^{T-1} \left( \alpha_{k+1} \sum_{l=2}^{k} \beta^{k+1-l} \alpha_{l-1} \right) \nonumber \\
&\qquad \qquad \qquad  + 2(L+\Delta) \sum_{k=1}^{T-1}  \alpha^2_k \nonumber\\
&\leq \alpha_1 \max_j \|\delta^j_1\| + n \theta C_0 \max_i \|x^i_1\| + n \theta (L+\Delta) C_1 (\log(T) + 1)+ 2(L+\Delta) (\log(T-1)+1) \nonumber \\
&\leq \alpha_1 \max_j \|\delta^j_1\| + n \theta C_0 \max_i \|x^i_1\| +  2(L+\Delta) + n \theta (L+\Delta) C_1 + \left( n \theta (L+\Delta) C_1 + 2(L+\Delta) \right) \log(T) \label{Eq:FTR7}
\end{align}
where $C_0 = \sum_{k=1}^{T-1} \alpha_{k+1} \beta^k$ and $C_1 = \beta (1-\beta^{T-1})/(1-\beta)$. The existence of $C_0$ and the bound on $C_1$ is presented in the next few lines.

~

\noindent $C_0$ exist and can be proved using ratio test for series convergence. Since, $\alpha_{k+1} \leq \alpha_k$ and $\beta < 1$, we get,
\begin{align*}
    \limsup_{k \rightarrow \infty} \frac{\alpha_{k+2} \beta^{k+1}}{\alpha_{k+1} \beta^{k}} &= \limsup_{k \rightarrow \infty} \frac{\alpha_{k+2} \beta}{\alpha_{k+1}} < 1 \Rightarrow\sum_{k=1}^\infty \alpha_{k+1} \beta^k < \infty.
\end{align*}
Note that $C_0 \leq \sum_{k=1}^{\infty} \alpha_{k+1} \beta^k < \infty$, hence, $C_0$ is a finite constant.

Next, we estimate a bound on, $\sum_{k=1}^{T-1} \left( \alpha_{k+1} \sum_{l=2}^{k} \beta^{k+1-l} \alpha_{l-1} \right)$. 
\begin{align}
\sum_{k=1}^{T-1} \left( \alpha_{k+1} \sum_{l=2}^{k} \beta^{k+1-l} \alpha_{l-1} \right) &\leq \sum_{k=1}^{T-1} \left( \sum_{l=2}^{k} \beta^{k+1-l} \alpha^2_{l-1} \right)  && \alpha_{k+1} \leq \alpha_{l-1}, \forall l \leq k \nonumber \\ 
&= \sum_{k=1}^{T-1} \left( \sum_{l=2}^{k} \frac{\beta^{k+1-l}}{l-1} \right) && \alpha^2_{l-1} = \frac{1}{l-1} \nonumber \\
&= \sum_{k=1}^{T-1} \left(\frac{\sum_{j=1}^{T-1-k} \beta^j}{k}\right) && \text{Rearranging Terms} \nonumber \\
&= \sum_{k=1}^{T-1} \left(\frac{\beta\frac{1-\beta^{T-k}}{1-\beta}}{k}\right) && \text{Sum of geometric series $\sum_{j=1}^{T-1-k} \beta^j = \beta\frac{1-\beta^{T-k}}{1-\beta}$ } \nonumber \\
&\leq \beta \frac{1-\beta^{T-1}}{1-\beta} \sum_{k=1}^{T-1} \frac{1}{k} \nonumber \\
&\leq C_1 \log(T-1) + C_1 && C_1 \triangleq \beta\frac{1-\beta^{T-1}}{1-\beta} \text{ and } \sum_{k=1}^{T-1} \frac{1}{k} \leq \log(T-1)+1 \nonumber \\
&\leq C_1 \log(T) + C_1 && \log(T-1) < \log(T)
\end{align}

\noindent We use the bound on $\sum_{k=1}^T \alpha_k \max_{j \in \MC{V}}\|\delta^j_k\|$ to get the bound on $\sum_{k=1}^T F_k$ (in Eq.~\ref{Eq:FTR5}) and $\sum_{k=1}^T H_k$ (in Eq.~\ref{Eq:FTR6}).
\begin{align}
\sum_{k=1}^T F_k &\leq N \sum_{k=1}^T \alpha_k \max_j \|\delta^j_k\| + N \Delta (\log(T) + 1) \nonumber \\
&\leq C_2 + \left( C_3 + C_4 \right)\log(T) \label{Eq:FTR8n} \\
\end{align}
where,the constants $C_2$, $C_3$ and $C_4$ are defined as,
\begin{align*}
C_2 &= N\Delta + N \left(\alpha_1 \max_j \|\delta^j_1\| + n\theta C_0 \max_i \|x^i_1\| + n \theta (L+\Delta) C_1 + 2(L+\Delta) \right), \\ 
C_3 &= N\Delta, \\
C_4 &= 2N(L+\Delta) + N  n \theta (L+\Delta) C_1. 
\end{align*}
Next we construct a bound on $\sum_{k=1}^T H_k$.
\begin{align}
\sum_{k=1}^T H_k &\leq 2n(L+N/2+\Delta) \sum_{k=1}^T \alpha_k \max_j \|\delta^j_k\| + n[(L+\Delta)^2 + N \Delta] (\log(T)+1) \nonumber \\
&\leq C_5 + (C_6+C_7) \log(T) \label{Eq:FTR9n}
\end{align}
where, the constants $C_5$, $C_6$ and $C_7$ are defined as,
\begin{align*}
C_5 &= 2n(L+N/2+\Delta) \left( \alpha_1 \max_j \|\delta^j_1\| + n\theta C_0 \max_i \|x^i_1\| + n \theta (L+\Delta) C_1 + 2(L+\Delta) \right) + n[(L+\Delta)^2 + N \Delta], \\
C_6 &= n[(L+\Delta)^2 + N \Delta], \\
C_7 &= 2n(L+N/2+\Delta) \left( n \theta (L+\Delta) C_1 + 2(L+\Delta)\right).
\end{align*}

We use the bound from in Eq.~\ref{Eq:FTR8n} and Eq.~\ref{Eq:FTR9n} and combine with finite time relation in Eq.~\ref{Eq:FTR4}.
\begin{align}
f(\bar{\widehat{x}}_T) - f^* \leq \frac{C_8 + (C_9 + C_{10})\log(T)}{2\sqrt{T}}
\end{align}
where, 
\begin{align*}
C_8 &= D_0 C_2+C_5+D_0, \\
C_9 &= D_0C_3+C_6, \\
C_{10} &= D_0 C_4+C_7. 
\end{align*}

\noindent Next we use the Lipschitzness of $f(x)$ to arrive at the statement of Theorem~\ref{Th:FiniteTimeRes}.
\begin{align}
f(\widehat{x}^j_T) - f^* &= f(\widehat{x}^j_T) - f(\bar{\widehat{x}}_T) + f(\bar{\widehat{x}}_T) - f^* \\
&\leq L \|\widehat{x}^j_T - \bar{\widehat{x}}_T\| + \frac{C_8 + (C_9 + C_{10})\log(T)}{2\sqrt{T}} \\
&\leq L \left[ \frac{\sum_{k=1}^T \alpha_k\|x^j_k - \bar{x}_k\|}{\sum_{k=1}^T \alpha_k}\right] + \frac{C_8 + (C_9 + C_{10})\log(T)}{2\sqrt{T}} \\
&\leq L \left[ \frac{\sum_{k=1}^T \alpha_k\max_j \|\delta^j_k\|}{\sum_{k=1}^T \alpha_k}\right] + \frac{C_8 + (C_9 + C_{10})\log(T)}{2\sqrt{T}} \\
&\leq \left[ \frac{C_{11} + C_{12}\log(T)}{\sqrt{T}}\right] + \frac{C_8 + (C_9 + C_{10})\log(T)}{2\sqrt{T}} 
\end{align}
where,
\begin{align*}
C_{11} &= L\alpha_1 \max_j \|\delta^j_1\| + n\theta L C_0 \max_i \|x^i_1\| + 2L(L+\Delta),\\
C_{12} &= n L \theta (L+\Delta) C_1 + 2L(L+\Delta).
\end{align*}

\noindent Observe that $C_9 = \mathcal{O}(\Delta^2)$. We can rewrite the result as,
\begin{align}
f(\widehat{x}^j_T) - f^* = \MC{O}\left( (1+\Delta^2) \frac{\log(T)}{\sqrt{T}}\right) 
\end{align}

\end{proof}

\subsection{Alternate Proof of Eq~\ref{Eq:E6}} 
\label{Sec:Appendix-ConProof}
\begin{proof}
We begin with the consensus update equation in Eq.~\ref{Eq:InfFusALG2:v} (\TT{RSS-NB}) and Eq.~\ref{Eq:InfFusALG2:vLB} (\TT{RSS-LB}), followed by subtracting vector $y \in \mathcal{X}$ on both sides.
\begin{align}
\hat{v}^i_{k} &= \sum_{j=1}^n B_k[i,j] x^j_k   \\
\hat{v}^i_{k} - y &= \sum_{j=1}^n B_k[i,j] (x^j_k - y)  \qquad \ \quad \text{ because } \sum_{j=1}^n B_k[i,j] = 1  
\end{align}
\noindent Let us define $\hat{z}^i_{k} = \hat{v}^i_{k} - y$ and $z^i_{k} = x^i_{k} - y$, $\forall \ i = \{1, 2, \ldots, n\}$. Then we can rewrite the above equation as,
\begin{align}
\hat{z}^i_{k} &= \sum_{j=1}^S B_k[i,j] z^j_k
\end{align}
\noindent We now norm both sides of the equality and use the property that norm of sum is less than or equal to sum of norms.
\begin{align}
\|\hat{z}^i_{k}\| = \|\sum_{j=1}^n B_k[i,j] z^j_k\| \leq \sum_{j=1}^n B_k[i,j] \|z^j_k\| \qquad \qquad  \text{ since }  B_k[i,j] \geq 0
\end{align}
\noindent Squaring both sides in the above inequality followed by algebraic expansion, we get, 
\begin{align}
\|\hat{z}^i_{k}\|^2 &\leq \left(\sum_{j=1}^n B_k[i,j] \|z^j_k\|\right)^2 = \sum_{j=1}^n B_k[i,j]^2 \|z^j_k\|^2 + 2 \sum_{m<j} B_k[i,j] \ B_k[i,m] \ \|z^j_k\| \|z^m_k\| 
\end{align}
\noindent Now we use the property, $a^2 + b^2 \geq 2 a b$ for any $a, b$ in the latter term of the expansion above,
\begin{align}
\|\hat{z}^i_{k}\|^2 &\leq \sum_{j=1}^n B_k[i,j]^2 \|z^j_k\|^2 + \sum_{m < j} B_k[i,j] \ B_k[i,m] \ \left(\|z^j_k\|^2 + \|z^m_k\|^2\right). 
\end{align}
Rearranging and using row stochasticity of $B_k$ matrix we get,
\begin{align}
\|\hat{z}^i_{k}\|^2 &\leq \sum_{j=1}^n \left[ \|z^j_k\|^2 \left(B_k[i,j]^2 + \sum_{m\neq j} B_k[i,j] \ B_k[i,m] \right) \right]\\
&\leq \sum_{j=1}^n \left[ \|z^j_k\|^2 \left(B_k[i,j] \left( B_k[i,j] + \sum_{m\neq j} B_k[i,m] \right) \right) \right] \\
&\leq \sum_{j=1}^n B_k[i,j] \|z^j_k\|^2  \qquad \qquad \text{Row stochasticity of $B_k$ gives } B_k[i,j] + \sum_{m\neq j} B_k[i,m] = 1,\; 
\end{align}
\noindent Summing the inequality over all servers, $i = 1, 2, \ldots, n$, 
\begin{align}
\sum_{i=1}^n \|\hat{z}^i_{k}\|^2 &\leq \sum_{i=1}^n \sum_{j=1}^n B_k[i,j] \|z^j_k\|^2 = \sum_{j=1}^n \left(\|z^j_k\|^2 \left[ \sum_{i=1}^n B_k[i,j] \right] \right) \\
&\leq \sum_{j=1}^n \|z^j_k\|^2  \qquad \qquad \text{Column stochasticity of $B_k$ gives } \sum_{i=1}^n B_k[i,j] = 1,\; 
\end{align}
\noindent This gives us Eq.~\ref{Eq:E6},
\begin{align}
\xi_k^2 = \sum_{j=1}^n \|\hat{v}^j_{k} - y\|^2 \leq \sum_{j=1}^n \|x^j_k - y\|^2 = \eta_k^2  
\end{align}
\end{proof}

\end{document}